\newtheorem{theorem}{Theorem}
\newtheorem{proposition}[theorem]{Proposition}
\newtheorem{corollary}[theorem]{Corollary}
\newtheorem{remark}[theorem]{Remark}
\newtheorem{lemma}[theorem]{Lemma}
\newcommand{\be}{\begin{equation}}
\newcommand{\ee}{\end{equation}}
\newcommand{\bea}{\begin{eqnarray}}
\newcommand{\eea}{\end{eqnarray}}
\newcommand{\ba}{\begin{array}}
	\newcommand{\ea}{\end{array}}
\newcommand{\bean}{\begin{eqnarray*}}
	\newcommand{\eean}{\end{eqnarray*}}
\newcommand{\al}{\alpha}
\newcommand{\La}{\Lambda}
\newcommand{\pa}{\partial}
\newcommand{\De}{\Delta}
\newcommand{\p}{\partial}
\newcommand{\E}{\mathcal{E}}
\begin{document}

\title{Lax formulation of 3--component KP hierarchy by Shiota construction}
\author{Tongtong Cui$^1$, Jinbiao Wang$^1$, Wenqi Cao$^1$, Jipeng Cheng $^{1,2*}$}
\dedicatory { $^{1}$School of Mathematics, China University of
Mining and Technology, Xuzhou, Jiangsu 221116,  China\\
$^{2}$ Jiangsu Center for Applied Mathematics (CUMT), \ Xuzhou, Jiangsu 221116, China}
\thanks{*Corresponding author. Email: chengjp@cumt.edu.cn, chengjipeng1983@163.com.}
\begin{abstract}
It is quite basic in integrable systems to deriving Lax equations from bilinear equations. For multi--component KP theory, corresponding Lax structures are mainly constructed by matrix pseudo--differential operators for fixed discrete variables, or by matrix difference operators for even--component cases. Here we use Shiota method to construct Lax structure of 3--component KP hierarchy and its reduction by introducing two shift operators $\Lambda_1$ and $\Lambda_2$, where relations among different discrete variables can be easily found. We believe the results here are quite typical for general multi--component KP theory, which may be helpful for general cases.\\
\textbf{Keywords}: 3--component KP; Lax equation; bilinear equation; tau function. \\
\textbf{MSC 2020}: 35Q53, 37K10, 35Q51\\
\textbf{PACS}: 02.30.Ik

\end{abstract}

\maketitle

\section{Introduction}
KP theory \cite{Date1983,Dickey2003,Jimbo1983,Harnad2021,Ohta1988,vanMoerbeke1994,Kac2003,Miwa2000} has been playing an important role in mathematical physics and integrable systems. Just as we know, many research objects (e.g. Gromov--Witten invariants, Hurwitz number, Matrix models) \cite{Jimbo1983,Harnad2021,Ohta1988,vanMoerbeke1994} in mathematical physics can be viewed as KP tau functions. On the other hand, the usual integrable system equations, such as KdV, NLS and Davey--Stewartson equations, can be found in KP theory, which is just the universality of KP theory \cite{Dickey2003,Ohta1988,Kac2003}. Since KP hierarchy is too big to be used freely, different kinds of reductions are introduced to derive explicit differential equations. Among them, $n$--reduction of KP hierarchy, also called $n$--Gelfand--Dickey hierarchy\cite{Dickey2003}, is a quite typical one. Other famous reduction of KP hierarchy is the constrained KP hierarchy\cite{ChengY1992,Willox1999,ChengY1994,Oevel1993}, which is further generalization of $n$--Gelfand--Dickey hierarchy. Here in this paper, we are interested in multi--component KP hierarchy and its reduction \cite{Kac2020,Kac1998,Kac2003,Kac2023,Willox1999,Takasaki1984,DJKM1981}. For this, let us review their fermionic constructions. One can refer to \cite{DJKM1981,DJKM1982,Kac1998,Kac2003,Kac2023} for more details.

Recall that fermionic $n$--reduction of KP hierarchy \cite{Kac2023,Kac2003,DJKM1982} is defined by
\begin{align}
\Omega_{(nl)}^{\pm}(\tau\otimes\tau)=\sum_{p\in\mathbb{Z}+1/2}\psi_{-p}^{\pm}
\tau\otimes\psi^{\mp}_{p+nl}\tau=0,\quad \tau\in\mathcal{F},\quad l\in \mathbb{Z}_{\geq 0},
\end{align}
where $\psi_p^{\pm}$ is the charged free fermion satisfying
\begin{align*}
\psi_p^{\lambda}\psi_q^{\mu}+\psi_j^{\mu}\psi_i^{\lambda}=\delta_{\lambda,-\mu}\delta_{i,-j},\quad p,q\in \mathbb{Z}+1/2\ {\rm and}\ \lambda,\mu= +\ {\rm or}\ -,
\end{align*}
and fermionic Fock space $\mathcal{F}=\mathcal{A}|0\rangle$ with
$\mathcal{A}$ being the Clifford algebra generated by $\mathbf{1}$ and charged free fermions $\psi_p^{\pm}\left(p\in \mathbb{Z}+1/2\right)$, and
vacuum vector $|0\rangle$ defined by $\psi_p^{\pm}|0\rangle=0$ $(p>0)$.
Let $\underline{n}=\{n_1,n_2,\cdots,n_s\}$ satisfying $0\leq n_1\leq n_2\leq\cdots\leq n_s$ and $\sum_{i=1}^sn_i=n$ be a partition of $n$, and relabel charged free fermions in the following way
\begin{align*}
\psi^{\pm(i)}_{\pm\left(k+\frac{1}{2}\right)+n_il}=\psi^{\pm}_{\mp\left(n_1+n_2+\cdots+n_i-k-\frac{1}{2}\right)+n(l\pm1)},\quad 0\leq k\leq n_i-1,\quad l\in \mathbb{Z}.
\end{align*}
If define $\Omega_{(n_il)}^{\pm}=\sum_{p\in\mathbb{Z}+1/2}\psi_{-p}^{\pm(i)}
\otimes\psi^{\mp(i)}_{p+n_il}$, then fermionic $n$--reduction of KP hierarchy will become
\begin{align*}
\sum_{i=1}^s\Omega_{n_il}(\tau\otimes\tau)=0,\quad l\in \mathbb{Z}_{\geq 0}.
\end{align*}

By $n$--component boson--fermion correspondence, we can finally obtain \cite{DJKM1982,Kac2023,Kac2003}
\begin{align}
{\rm Res}_z\sum_{i=1}^{s}(-1)^{|\overrightarrow{m}+\overrightarrow{m}'|_{i-1}}z^{n_il\pm m_i\mp m'_i-2+2\delta_{is}}e^{\pm \xi(t^{(i)}-t'^{(i)},z)}\tau_{\overrightarrow{m}-\overrightarrow{e}_i+\overrightarrow{e}_s}(t\mp [z^{-1}]_i)
\tau_{\overrightarrow{m}'+\overrightarrow{e}_i-\overrightarrow{e}_s}(t'\pm [z^{-1}]_i)=0,\label{s-component-nKP}
\end{align}
where $t\mp [z^{-1}]_i=(t^{(1)},t^{(2)},\cdots,t^{(i-1)},t^{(i)}\mp [z^{-1}],t^{(i+1)},\cdots,t^{(s)})$, $t^{(k)}=(t^{(k)}_1,t^{(k)}_2,\cdots)$, $[z^{-1}]=(z^{-1},z^{-2}/2,z^{-3}/3,\cdots)$, $\overrightarrow{m},\overrightarrow{m}'\in\mathbb{Z}^s$ and $|\overrightarrow{m}|=|\overrightarrow{m}'|=0$, and $\overrightarrow{e}_i\in\mathbb{Z}^s$ are the standard basis vectors of $\mathbb{Z}^s$. Here in \eqref{s-component-nKP}, only $l=0$ and $l=1$ are independent, where \eqref{s-component-nKP} with $l=0$ is called $s$--component KP hierarchy ($s$--KP for short\cite{Kac2023,DJKM1981,Kac2003}), while \eqref{s-component-nKP} with $l=1$ is used to describe the constraints on $s$--KP hierarchy\cite{Kac2020,Kac2023,Kac2003}. The whole system of \eqref{s-component-nKP} (or only $l=0$ and $l=1$) is the full description for the $\underline{n}$--reduction of $s$--component KP hierarchy, which is called $\underline{n}$--KdV hierarchy\cite{Kac2020,Kac2023,Kac2003}. In particular,
\begin{itemize}
  \item $s=1$, \eqref{s-component-nKP} is $n$--Gelfand--Dickey hierarchy\cite{Dickey2003}
  \item $s=2$, \eqref{s-component-nKP} is $[n_1,n_2]$\ --\ bigraded Toda hierarchy\cite{Carlet2006}
\end{itemize}

In the investigation of $s$--KP or $\underline{n}$--KdV hierarchy, Lax description is quite important. Just as we can know, Lax equation \cite{Babelon2003} is one of important manifestations in integrable systems. However, there is no unified way to derive Lax equations from bilinear equations, which is still an open problem\cite{Kac1989,Kac1996}.
To our best knowledge, Lax structure of $s$--KP or $\underline{n}$--KdV is usually expressed in terms of matrix pseudo--differential operators of size $s\times s$ for fixed $\overrightarrow{m}\in\mathbb{Z}^s$\cite{Kac2020,Kac1998,Kac2003,Kac2023,DJKM1981,Takasaki1984}. Since the whole $s$--KP or $\underline{n}$--KdV depends on $\overrightarrow{m}\in\mathbb{Z}^s$, additional conditions are needed to relate the Lax structures of different $\overrightarrow{m}\in\mathbb{Z}^s$. And for the case $s$ even, $s$--KP or $\underline{n}$--KdV can also be formulated by matrix--difference operators of size $s\times s$\cite{Takasaki1984}, where the information $\overrightarrow{m}\in\mathbb{Z}^s$ is contained. Besides above Lax formulations of matrix operators, there is another formulation using scalar operators involving several differential operators $\pa_i$ or shift operators $\Lambda_i$, which is called Shiota construction\cite{ChengJP2021,Shiota1989}.
In our opinion, it is usually quite technical for matrix Lax formulation, especially in changing scalar bilinear equation into matrix forms. But in Shiota construction, one usually directly deals with the scalar bilinear equations by introducing several differential operators $\pa_i$ or shift operators $\Lambda_i$\cite{ChengJP2021,Shiota1989}. Therefore, we believe Shiota construction is comparatively direct compared with matrix formulation in dealing with Lax formulation of multi--component KP theory.

Here we will take 3--KP and its reduction $[n_1,n_2,n_3]$--KdV as examples to illustrate Shiota construction of Lax structures.
Notice that $[n_1,n_2,n_3]$--KdV hierarchy is given as follows,
  \begin{align}\label{3KPtaubilinear}
   &\oint_{C_R}\frac{dz}{2\pi i}z^{n_1l+m_1-m'_1}e^{\xi(t^{(1)}-t'^{(1)},z)}
   \widetilde{\tau}_{m_1,m_2}(t-[z^{-1}]_1)\widetilde{\tau}_{m'_1,m'_2}(t'+[z^{-1}]_1)\nonumber\\
   =&\oint_{C_r}\frac{dz}{2\pi i}z^{-n_2l+m_2-m'_2}e^{\xi(t^{(2)}-t'^{(2)},z^{-1})}
   \widetilde{\tau}_{m_1+1,m_2+1}(t-[z]_2)\widetilde{\tau}_{m'_1-1,m'_2-1}(t'+[z]_2)\nonumber\\
   &+\oint_{C_r}\frac{dz}{2\pi i}(-1)^{m_2+m'_2}z^{-n_3l+m_1-m_2-m'_1+m'_2}e^{\xi(t^{(3)}-t'^{(3)},z^{-1})}
   \widetilde{\tau}_{m_1+1,m_2}(t-[z]_3)\widetilde{\tau}_{m'_1-1,m'_2}(t'+[z]_3),
  \end{align}
where $C_R$ means the anticlockwise circle $|z|=R$ for sufficient large $R$, while $C_r$ is the anticlockwise circle $|z|=r$ with sufficient small $r$, and the tau function $\widetilde{\tau}_{m_1,m_2}$ is defined by
$$\widetilde{\tau}_{m_1,m_2}=(-1)^{\frac{m_1(m_1-1)}{2}+m_2}\tau_{m_1,m_2,-m_1-m_2}.$$
Here $\tau_{m_1,m_2,-m_1-m_2}$ is just the tau function in \eqref{s-component-nKP} for $s=3$.
The motivations of this paper are given as follows.
\begin{itemize}
  \item Try to find a unified way to investigate Lax structure of multi--component KP theory. Just as we stated before, there are Lax formulations of matrix difference operators in even--component case, therefore 3--component is quite typical, for which there are no difference operators to relate different discrete variables.

  \item Try to understand integrable systems involves two discrete variables. Notice that two shift operators are used in fractional Volterra hierarchy\cite{LiuSQ2018,LiuSQ2021}. We believe 3--component KP theory here may cover most these kinds of integrable systems.
\end{itemize}

This paper is organized in the way below. In Section 2, 3--KP hierarchy is expressed by wave operators involving two shift operators $\Lambda_1$ and $\Lambda_2$. Then in Section 3, relations of $\Lambda_1$ and $\Lambda_2$ are investigated. Next we give the Lax formulations of 3--KP hierarchy in Section 4. After that in Section 5, the Lax operator of $[n_1,n_2,n_3]$--KdV hierarchy is investigated. Finally, some conclusions and discussions are given in Section 6.

\section{3--KP Hierarchy by Wave Operators}

In this section, we will express 3--KP hierarchy \eqref{3KPtaubilinear} in terms of wave operators. Firstly some symbols are given. Then wave operators of 3--KP hierarchy are introduced. After that, starting from 3--KP bilinear equation \eqref{3KPtaubilinear} for $l=0$, relations between wave operators and evolution equations of wave operators are derived.

\subsection{Formal operators of $\Lambda_1$ and $\Lambda_2$}
Firstly let us introduce the following formal operator
 $$A=\sum_{j_1,j_2\in\mathbb{Z}}a_{j_1,j_2}(\bm{m})\Lambda_1^{j_1}\Lambda_2^{j_2},$$
where $\Lambda_1$ and $\Lambda_2$ are two shift operators defined by $\Lambda_1(f(\bm{m}))=f(\bm{m}+\bm{e}_1)$ and $\Lambda_2(f(\bm{m}))=f(\bm{m}+\bm{e}_2)$ with $\bm{m}=(m_1,m_2),\ \bm{e}_1=(1,0)$ and $\bm{e}_2=(0,1)$.
For another formal operator $B=\sum_{l_1,l_2\in\mathbb{Z}}b_{l_1,l_2}(\bm{m})\Lambda_1^{l_1}\Lambda_2^{l_2}$, we denote $AB$ or $A\cdot B$ to be the operator multiplication of $A$ and $B$, while $A(B)$ means that operator $A$ acts on coefficients of $B$. Then denote the following symbols for above operator $A$,
$$A^*=\sum_{j_1,j_2\in\mathbb{Z}}\Lambda_1^{-j_1}\Lambda_2^{-j_2}a_{j_1,j_2}(\bm{m}),\quad A_{i,P}=\sum_{j_i\in P,j_l\in\mathbb{Z}}a_{j_1,j_2}(\bm{m})\Lambda_1^{j_1}\Lambda_2^{j_2},\quad A_{i,[k]}=\sum_{j_i=k,j_l\in\mathbb{Z}}a_{j_1,j_2}(\bm{m})\Lambda_1^{j_1}\Lambda_2^{j_2},$$
where $l\in \{1,2\}\setminus \{i\}$ and $P\in\{\geq k, \leq k,>k,<k\}$ with $k\in\mathbb{Z}$. We also need $$\Delta_i=\Lambda_i-1,\quad\Delta_i^*=\Lambda_i^{-1}-1,\quad \Delta_{12}=\Lambda_1\Lambda_2-1,\quad \Delta_{12}^{*}=\Lambda_1^{-1}\Lambda_2^{-1}-1,$$
and denote for $Q\in\{\Lambda_1,\Lambda_2,\Lambda_1\Lambda_2,
\Lambda_1^{-1},\Lambda_2^{-1},\Lambda_1^{-1}\Lambda_2^{-1}\}$ and $R\in\{\Delta_1,\Delta_2,\Delta_{12},\Delta_1^*,\Delta_2^*,\Delta_{12}^*\}$
$$(Q-1)^{-1}=\sum_{j=1}^\infty Q^{-j},\quad (R+1)^{-k}=\sum_{j=0}^\infty \binom{-k}{j}R^{-k-j}.$$
Then we can rewrite the operator $A=\sum_{j_1,j_2\in\mathbb{Z}}a_{j_1,j_2}(\bm{m})\Lambda_1^{j_1}\Lambda_2^{j_2}$ in terms of $(\Delta_1,\Delta_2)$ or $(\Delta_1^*,\Delta_2^*)$. Now similar to $A_{i,P}$ with $P\in\{\geq k, \leq k,>k,<k\}$, we can also define $A_{\Delta_i,P}$ (or $A_{\Delta_i^*,P}$) to be the part of $A$ satisfying property $P$ with respect to operator $\Delta_i$ (or $\Delta_i^*$).

\begin{lemma}\cite{Adler1999}\label{ABlemma}
Let $A(m,\Lambda)=\sum_{j}a_j(m)\Lambda^j,\ B(m,\Lambda)=\sum_{j}b_j(m)\Lambda^j$ be two operators with shift operator $\Lambda$ defined by $\Lambda(f(m))=f(m+1)$, then
\begin{align*}
A(m,\Lambda)\cdot B(m,\Lambda)^*=\sum_{j\in\mathbb{Z}}{\rm Res}_zz^{-1}\left(A(m,\Lambda)(z^{\pm m})\cdot
B(m+j,\Lambda) (z^{\mp m\mp j})\right)\Lambda^j,
\end{align*}
\end{lemma}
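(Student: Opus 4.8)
The plan is to verify the identity by computing both sides explicitly as formal operators in $\Lambda$ and matching the coefficient of each power $\Lambda^k$. The only ingredients are the shift rule $\Lambda^i c(m)=c(m+i)\Lambda^i$ and the explicit form of the adjoint, so no result beyond elementary bookkeeping is needed.

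First I would rewrite the adjoint. From the definition $B^*=\sum_j\Lambda^{-j}b_j(m)$ and the shift rule, each term becomes $\Lambda^{-j}b_j(m)=b_j(m-j)\Lambda^{-j}$, so that $B^*=\sum_j b_j(m-j)\Lambda^{-j}$. Multiplying on the left by $A$ and again moving every shift past the coefficients gives
\[
A(m,\Lambda)\cdot B(m,\Lambda)^*=\sum_{i,j}a_i(m)b_j(m+i-j)\Lambda^{i-j}.
\]
Collecting the terms with $\Lambda^{i-j}=\Lambda^k$ then yields the left-hand side as $\sum_k\big(\sum_j a_{k+j}(m)b_j(m+k)\big)\Lambda^k$, and this coefficient is what the right-hand side must reproduce.

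Next I would evaluate the right-hand side. Taking the upper signs, the two operators act on the formal powers as $A(m,\Lambda)(z^{m})=\sum_i a_i(m)z^{m+i}$ and $B(m+j,\Lambda)(z^{-m-j})=\sum_l b_l(m+j)z^{-m-j-l}$, whose product is $\sum_{i,l}a_i(m)b_l(m+j)z^{i-j-l}$. The operation ${\rm Res}_z z^{-1}(\cdot)$ selects the constant ($z^0$) term of the product, i.e. it imposes $l=i-j$, so the residue equals $\sum_i a_i(m)b_{i-j}(m+j)$. After the relabeling $i=k+j$ this is $\sum_k a_{k+j}(m)b_k(m+j)$, and hence the right-hand side becomes $\sum_j\big(\sum_k a_{k+j}(m)b_k(m+j)\big)\Lambda^j$. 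Renaming the outer summation index $j\mapsto k$ shows this matches the left-hand side coefficientwise, which completes the argument.

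The computation carries no genuine obstacle; the only points requiring care are the index shifts and the correct reading of the argument-shifted operator $B(m+j,\Lambda)$ together with the residue. To close the proof I would note that the lower-sign choice is handled by an identical computation: there the product carries $z^{-i+j+l}$, so extracting the residue again forces $l=i-j$ and produces the same coefficient, confirming that both sign conventions in the statement give the identical operator.
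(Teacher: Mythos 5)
Your computation is correct: rewriting $B^*=\sum_j b_j(m-j)\Lambda^{-j}$, expanding $A\cdot B^*$, and matching the coefficient of each $\Lambda^k$ against the residue on the right-hand side (with both sign choices checked) is a complete verification. Note that the paper itself gives no proof of this lemma — it is quoted from Adler and van Moerbeke \cite{Adler1999} — so there is nothing internal to compare against; your direct coefficient-matching argument is exactly the standard way this identity is established.
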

By similar methods in Lemma \ref{ABlemma}, we can get another lemma.
\begin{lemma}\label{ABlemma2}
Let $A(\bm{m},\Lambda_i)=\sum_{k}a_k(\bm{m})\Lambda_i^k,\ B(\bm{m},\Lambda_i)=\sum_{k}b_k(\bm{m})\Lambda_i^k$ ($i=1,2$) be two operators, then
\begin{align*}
&A(\bm{m},\Lambda_1)B^*(\bm{m}+\textbf{j},\Lambda_1)=\sum_{l\in\mathbb{Z}}
{\rm Res}_zz^{-1}A(\bm{m},\Lambda_1)(z^{m_1-m_2})B(\bm{m}+\textbf{j}+l\bm{e}_1,\Lambda_1)(z^{-m_1+m_2-l})\Lambda_1^l,\\
&A(\bm{m},\Lambda_2)B^*(\bm{m}+\textbf{j},\Lambda_2)=\sum_{l\in\mathbb{Z}}
{\rm Res}_zz^{-1}A(\bm{m},\Lambda_2)(z^{m_1-m_2})B(\bm{m}+\textbf{j}+l\bm{e}_2,\Lambda_2)(z^{-m_1+m_2+l})\Lambda_2^l.
\end{align*}
where $\bm{j}=(j_1,j_2)$.
\end{lemma}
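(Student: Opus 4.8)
The plan is to mirror the proof of Lemma~\ref{ABlemma}, the only genuinely new bookkeeping being that the single exponent $z^{m}$ is here replaced by the combined exponent $z^{m_1-m_2}$, on which $\Lambda_1$ and $\Lambda_2$ act with \emph{opposite} signs. The one computational fact I will use repeatedly is that ${\rm Res}_z z^{-1}f(z)$ equals the coefficient of $z^0$ in $f(z)$, which is precisely the device that turns a composition of shift operators into a residue pairing of their symbols. So the whole argument reduces to: (i) computing the $\Lambda_1^l$ (resp.\ $\Lambda_2^l$) coefficient of the left-hand operator product directly, and (ii) checking that the residue on the right reproduces the same coefficient.

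For the first identity I would first make the adjoint explicit: writing $A(\bm{m},\Lambda_1)=\sum_p a_p(\bm{m})\Lambda_1^p$ and $B(\bm{m}+\bm{j},\Lambda_1)=\sum_q b_q(\bm{m}+\bm{j})\Lambda_1^q$, the definition of $*$ gives $B^*(\bm{m}+\bm{j},\Lambda_1)=\sum_q\Lambda_1^{-q}b_q(\bm{m}+\bm{j})=\sum_q b_q(\bm{m}+\bm{j}-q\bm{e}_1)\Lambda_1^{-q}$. Composing with $A$ and collecting the $\Lambda_1^{l}$ term (set $p-q=l$) yields $\sum_q a_{q+l}(\bm{m})\,b_q(\bm{m}+\bm{j}+l\bm{e}_1)$. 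On the right-hand side I would compute the two symbols by acting on monomials: since $\Lambda_1$ raises the exponent, $A(\bm{m},\Lambda_1)(z^{m_1-m_2})=z^{m_1-m_2}\hat A(\bm{m},z)$ with $\hat A(\bm{m},z)=\sum_p a_p(\bm{m})z^{p}$, and likewise $B(\bm{m}+\bm{j}+l\bm{e}_1,\Lambda_1)(z^{-m_1+m_2-l})=z^{-m_1+m_2-l}\hat B(\bm{m}+\bm{j}+l\bm{e}_1,z^{-1})$. Multiplying, the $z^{\pm(m_1-m_2)}$ factors cancel and leave $z^{-l}\hat A(\bm{m},z)\hat B(\bm{m}+\bm{j}+l\bm{e}_1,z^{-1})$; extracting the $z^0$ coefficient picks out the $z^{l}$ coefficient of $\hat A\hat B$, namely $\sum_q a_{q+l}(\bm{m})b_q(\bm{m}+\bm{j}+l\bm{e}_1)$, matching step (i).

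The second identity is identical up to one sign. Because $\Lambda_2$ shifts $m_2$, which enters $z^{m_1-m_2}$ with a minus sign, acting with $\Lambda_2$ now \emph{lowers} the exponent: $A(\bm{m},\Lambda_2)(z^{m_1-m_2})=z^{m_1-m_2}\hat A(\bm{m},z^{-1})$, and $B(\bm{m}+\bm{j}+l\bm{e}_2,\Lambda_2)(z^{-m_1+m_2+l})=z^{-m_1+m_2+l}\hat B(\bm{m}+\bm{j}+l\bm{e}_2,z)$. This is exactly why the $B$-argument carries $z^{-m_1+m_2+l}$ with a $+l$ here rather than the $-l$ of the first identity. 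The product again loses the $z^{\pm(m_1-m_2)}$ factors and the residue recovers $\sum_q a_{q+l}(\bm{m})b_q(\bm{m}+\bm{j}+l\bm{e}_2)$, which equals the $\Lambda_2^l$ coefficient of $A\,B^*$ computed as before. Conceptually these two cases are just the two sign choices $\pm$ already present in Lemma~\ref{ABlemma}: freezing $m_2$ and applying that lemma with $m=m_1$ gives the first identity, while freezing $m_1$ and applying it with $m=m_2$ (the lower sign) gives the second, so one may alternatively deduce Lemma~\ref{ABlemma2} from Lemma~\ref{ABlemma} rather than redoing the computation.

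There is no analytic obstacle here; the content is entirely the shift bookkeeping. The step I would be most careful about is the adjoint combined with the $\bm{j}$-translation, i.e.\ making sure the shift $\bm{j}+l\bm{e}_i$ lands on the argument of $b_q$ and not of $a_p$, and confirming that the reading of $B^*(\bm{m}+\bm{j},\Lambda_i)$ (translate the coefficients by $\bm{j}$, then apply $*$) is the one that makes both sides agree. The accompanying sign of the $z$-exponent under $\Lambda_1$ versus $\Lambda_2$ is the only place an error could creep in, and checking it on the extreme monomials (pure $\Lambda_i^{\pm1}$) is a quick safeguard.
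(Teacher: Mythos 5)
Your proof is correct, and it follows exactly the route the paper intends: the paper gives no written proof beyond the remark that the lemma follows "by similar methods in Lemma \ref{ABlemma}," and your computation (matching the $\Lambda_i^l$ coefficient of $A\,B^*$ against the $z^0$ residue of the symbols, with the sign flip for $\Lambda_2$ acting on $z^{m_1-m_2}$) is precisely that adaptation. The bookkeeping of the shift $\bm{j}+l\bm{e}_i$ landing on the argument of $b_q$ and the opposite exponent signs for $\Lambda_1$ versus $\Lambda_2$ are handled correctly.
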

\begin{lemma}\cite{Guan2024}\label{lemma:Alade}
Given $A=\sum_{j}a_j(m)\Lambda^j$, $\Delta=\Lambda-1$ and $\Delta^*=\Lambda^{-1}-1$, we have
\begin{align*}
A_{\Lambda,\geq 0}=A_{\Delta,\geq 0},\quad A_{\Lambda,\leq 0}=A_{\Delta^*,\geq 0},
\end{align*}
where $Q\in \{\Lambda,\Delta,\Delta^*\}$, $P\in \{\geq k,>k,\leq k,<k\}$, and $A_{Q,P}$ means the part of $A$ satisfying property $P$ with respect to $Q$. Further
\begin{align*}
A_{\Delta,\geq 1}=A_{\Lambda,\geq 0}-A_{\Lambda,\geq 0}(1),\quad A_{\Delta^*,\geq 1}=A_{\Lambda,< 0}-A_{\Lambda,< 0}(1).
\end{align*}
\end{lemma}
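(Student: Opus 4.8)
The plan is to reduce all four identities to the change of basis between the monomial families $\{\Lambda^{j}\}_{j\in\mathbb{Z}}$ and $\{\Delta^{k}\}_{k\in\mathbb{Z}}$ (respectively $\{(\Delta^{*})^{k}\}_{k\in\mathbb{Z}}$), and to exploit the triangular, sign-preserving shape of this change of basis that is forced by the expansion conventions $(Q-1)^{-1}=\sum_{j\geq 1}Q^{-j}$ and $(R+1)^{-k}=\sum_{j\geq 0}\binom{-k}{j}R^{-k-j}$ recorded above. Because the coefficients produced by these expansions are pure binomial numbers, they commute with the functional coefficients $a_{j}(m)$, so it suffices to track, for each fixed $j$, which powers of $\Delta$ (or $\Delta^{*}$) appear when $\Lambda^{j}$ is re-expanded.

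First I would write down the two re-expansion formulas. For $j\geq 0$ one has $\Lambda^{j}=(1+\Delta)^{j}=\sum_{k=0}^{j}\binom{j}{k}\Delta^{k}$, a finite sum supported on $0\leq k\leq j$. For $j<0$, writing $\Lambda=\Delta(1+\Delta^{-1})$ and applying the prescribed convention gives $\Lambda^{j}=\sum_{i\geq 0}\binom{j}{i}\Delta^{\,j-i}$, a series supported on $k\leq j<0$. Hence the $\Lambda\mapsto\Delta$ change of basis is block triangular and sign-preserving: non-negative $\Lambda$-powers generate only non-negative $\Delta$-powers, and negative $\Lambda$-powers generate only negative $\Delta$-powers. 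The identical computation with $\Lambda^{-1}=1+\Delta^{*}$ shows that non-positive $\Lambda$-powers generate only non-negative $\Delta^{*}$-powers while positive $\Lambda$-powers generate only negative $\Delta^{*}$-powers.

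Granting this, the first pair of identities is immediate: splitting $A=A_{\Lambda,\geq 0}+A_{\Lambda,<0}$ and re-expanding each summand in $\Delta$, the piece $A_{\Lambda,\geq 0}$ contributes only to $\Delta^{k}$ with $k\geq 0$ and $A_{\Lambda,<0}$ only to $\Delta^{k}$ with $k<0$, so projecting onto non-negative $\Delta$-powers returns exactly $A_{\Lambda,\geq 0}$, i.e. $A_{\Delta,\geq 0}=A_{\Lambda,\geq 0}$; the companion $A_{\Delta^{*},\geq 0}=A_{\Lambda,\leq 0}$ follows the same way. For the remaining two identities I would extract the degree-zero term, using $A_{\Delta,\geq 1}=A_{\Delta,\geq 0}-A_{\Delta,[0]}$ (subtracting the $\Delta^{0}$-part) together with $\Lambda^{j}(1)=1$: the latter gives $A_{\Lambda,\geq 0}(1)=\sum_{j\geq 0}a_{j}$, which is precisely the $\Delta^{0}$-coefficient $\sum_{j\geq 0}a_{j}\binom{j}{0}$, whence $A_{\Delta,\geq 1}=A_{\Lambda,\geq 0}-A_{\Lambda,\geq 0}(1)$. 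The formula $A_{\Delta^{*},\geq 1}=A_{\Lambda,<0}-A_{\Lambda,<0}(1)$ comes out identically once one observes that the $j=0$ term cancels between $A_{\Lambda,\leq 0}$ and its constant part, leaving only the strictly negative modes.

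I expect the one genuinely delicate point to be the negative-power re-expansion: $\Lambda^{j}$ for $j<0$ must be expanded strictly by the stated rule $(R+1)^{-k}=\sum_{j\geq 0}\binom{-k}{j}R^{-k-j}$ in descending powers of $\Delta$, rather than by the naive ascending binomial series, since it is exactly this choice of convention that keeps negative $\Lambda$-powers out of the non-negative $\Delta$-sector and thereby makes the two projections coincide. Verifying that this convention is internally consistent, so that the triangular structure holds simultaneously in $\Delta$ and in $\Delta^{*}$, is the step I would check most carefully.
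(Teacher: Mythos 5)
Your proposal is correct. Note that the paper itself offers no proof of this lemma --- it is quoted verbatim from the reference [Guan et al., arXiv:2404.09815] --- so there is no in-paper argument to compare against; your triangularity argument is the natural one and it does fill the gap. Concretely, with the stated conventions one has $\Lambda^{j}=\sum_{k=0}^{j}\binom{j}{k}\Delta^{k}$ for $j\geq 0$ and $\Lambda^{j}=\sum_{i\geq 0}\binom{j}{i}\Delta^{j-i}$ for $j<0$, so non-negative $\Lambda$-powers land in non-negative $\Delta$-powers and negative ones in strictly negative $\Delta$-powers (and symmetrically for $\Delta^{*}$ via $\Lambda^{-1}=1+\Delta^{*}$), which gives the first pair of identities; the second pair follows since the $\Delta^{0}$ (resp.\ $(\Delta^{*})^{0}$) coefficient of each contributing $\Lambda^{j}$ is $\binom{j}{0}=1$, i.e.\ exactly $\Lambda^{j}(1)$, and the $j=0$ term cancels in the $\Delta^{*}$ case as you observe. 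The one point you flag as delicate is indeed the only thing to verify, and it checks out: the descending expansion is forced (the ascending series for $(1+\Delta)^{j}$, $j<0$, would assign infinitely many contributions to a single power of $\Lambda$), and it is internally consistent because $\Delta^{-m}=\bigl((\Lambda-1)^{-1}\bigr)^{m}$ is supported on $\Lambda^{-l}$ with $l\geq m$, so the coefficient of any fixed $\Lambda^{-l}$ receives only finitely many contributions.
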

\subsection{Wave operators}

Firstly introduce wave functions $\Psi_i$ and adjoint wave functions $\widetilde{\Psi}_i$ by the way below,
\begin{align*}
&\Psi_{1}(\bm{m},t,z)=z^{m_1}e^{\xi(t^{(1)},z)}
\frac{\widetilde{\tau}_{\bm{m}}(t-[z^{-1}]_1)}{\widetilde{\tau}_{\bm{m}}(t)},\quad\quad\quad\quad\quad\ \ \
\widetilde{\Psi}_{1}(\bm{m},t,z)=z^{-m_1+1}e^{-\xi(t^{(1)},z)}
\frac{\widetilde{\tau}_{\bm{m}}(t+[z^{-1}]_1)}{\widetilde{\tau}_{\bm{m}}(t)},\\
&\Psi_{2}(\bm{m},t,z)=z^{m_2}e^{\xi(t^{(2)},z^{-1})}
\frac{\widetilde{\tau}_{\bm{m}+\bm{e}}(t-[z]_2)}
{\widetilde{\tau}_{\bm{m}}(t)},\quad\quad\quad\quad\quad\
\widetilde{\Psi}_{2}(\bm{m},t,z)=z^{-m_2+1}e^{-\xi(t^{(2)},z^{-1})}
\frac{\widetilde{\tau}_{\bm{m}-\bm{e}}(t+[z]_2)}
{\widetilde{\tau}_{\bm{m}}(t)},\\
&\Psi_{3}(\bm{m},t,z)=(-1)^{m_2}z^{m_1-m_2}e^{\xi(t^{(3)},z^{-1})}
\frac{\widetilde{\tau}_{\bm{m}+\bm{e}_1}(t-[z]_3)}{\widetilde{\tau}_{\bm{m}}(t)},\quad
\widetilde{\Psi}_{3}(\bm{m},t,z)=(-1)^{m_2}z^{-m_1+m_2+1}e^{-\xi(t^{(3)},z^{-1})}
\frac{\widetilde{\tau}_{\bm{m}-\bm{e}_1}(t+[z]_3)}{\widetilde{\tau}_{\bm{m}}(t)},
\end{align*}
then the bilinear equation \eqref{3KPtaubilinear} can be rewritten into
\begin{align}\label{3KPwavebilinear}
&\oint_{C_R}\frac{dz}{2\pi iz}z^{n_1l}\Psi_{1}(\bm{m},t,z)\widetilde{\Psi}_{1}(\bm{m}',t',z)\nonumber\\
=&\oint_{C_r}\frac{dz}{2\pi iz}\left(z^{-n_2l}\Psi_{2}(\bm{m},t,z)\widetilde{\Psi}_{2}(\bm{m}',t',z)
+z^{-n_3l}\Psi_{3}(\bm{m},t,z)\widetilde{\Psi}_{3}(\bm{m}',t',z)\right),\quad l\geq 0.
\end{align}
After preparation above, let us introduce operators $W_i,\ \widetilde{W}_i$ and $S_i,\ \widetilde{S}_i\ (i=1,2,3)$ as follows,
\begin{align*}
&W_{1}(\bm{m},t,\Lambda_1)= S_{1}(\bm{m},t,\Lambda_1)e^{\xi(t^{(1)},\Lambda_1)},\quad\ \
\widetilde{W}_{1}(\bm{m},t,\Lambda_1)= \widetilde{S}_{1}(\bm{m},t,\Lambda_1)e^{-\xi(t^{(1)},\Lambda_1^{-1})}\Lambda_1^{-1},\\
&W_{2}(\bm{m},t,\Lambda_2)= S_{2}(\bm{m},t,\Lambda_2)e^{\xi(t^{(2)},\Lambda_2^{-1})},\quad\
\widetilde{W}_{2}(\bm{m},t,\Lambda_2)=\widetilde{S}_{2}(\bm{m},t,\Lambda_2)
e^{-\xi(t^{(2)},\Lambda_2)}\Lambda_2^{-1},\\
&W_{3}(\bm{m},t,\Lambda_1)= S_{3}(\bm{m},t,\Lambda_1)e^{\xi(t^{(3)},\Lambda_1^{-1})},\quad\ \
\widetilde{W}_{3}(\bm{m},t,\Lambda_1)= \widetilde{S}_{3}(\bm{m},t,\Lambda_1)e^{-\xi(t^{(3)},\Lambda_1)}\Lambda_1^{-1},
\end{align*}
and
\begin{align*}
&S_{1}(\bm{m},t,\Lambda_1)=1+\sum_{k=1}^{+\infty}a^{(1)}_k(t)\Lambda_1^{-k},\quad\quad\quad\quad\quad\quad\ \
\widetilde{S}_{1}(\bm{m},t,\Lambda_1)=1+\sum_{k=1}^{+\infty}\widetilde{a}^{(1)}_k(t)\Lambda_1^{k},\\
&S_{2}(\bm{m},t,\Lambda_2)=\frac{\widetilde{\tau}_{\bm{m}+\bm{e}}(t)}
{\widetilde{\tau}_{\bm{m}}(t)}+\sum_{k=1}^{+\infty}a^{(2)}_k(t)\Lambda_2^{k},\quad\quad\quad\quad\
\widetilde{S}_{2}(\bm{m},t,\Lambda_2)=\frac{\widetilde{\tau}_{\bm{m}-\bm{e}}(t)}
{\widetilde{\tau}_{\bm{m}}(t)}+\sum_{k=1}^{+\infty}\widetilde{a}^{(2)}_k(t)\Lambda_2^{-k},\\
&S_{3}(\bm{m},t,\Lambda_1)=(-1)^{m_2}\frac{\widetilde{\tau}_{\bm{m}+\bm{e}_1}(t)}
{\widetilde{\tau}_{\bm{m}}(t)}+\sum_{k=1}^{+\infty}a^{(3)}_k(t)\Lambda_1^{k},\quad
\widetilde{S}_{3}(\bm{m},t,\Lambda_1)=(-1)^{m_2}\frac{\widetilde{\tau}_{\bm{m}-\bm{e}_1}(t)}
{\widetilde{\tau}_{\bm{m}}(t)}+\sum_{k=1}^{+\infty}\widetilde{a}^{(3)}_k(t)\Lambda_1^{-k}.
\end{align*}
satisfying
\begin{align*}
&\Psi_{1}(\bm{m},t,z)=W_{1}(\bm{m},t,\Lambda_1)(z^{m_1}),\quad\quad\ \
\widetilde{\Psi}_{1}(\bm{m},t,z)=\widetilde{W}_{1}(\bm{m},t,\Lambda_1)(z^{-m_1});\nonumber\\
&\Psi_{2}(\bm{m},t,z)=W_{2}(\bm{m},t,\Lambda_2)(z^{m_2}),\quad\quad\ \
\widetilde{\Psi}_{2}(\bm{m},t,z)=\widetilde{W}_{2}(\bm{m},t,\Lambda_2)(z^{-m_2});\\
&\Psi_{3}(\bm{m},t,z)=W_{3}(\bm{m},t,\Lambda_1)(z^{m_1-m_2})=W_{3}(\bm{m},t,\Lambda_2^{-1})(z^{m_1-m_2}),\\
&\widetilde{\Psi}_{3}(\bm{m},t,z)=\widetilde{W}_{3}(\bm{m},t,\Lambda_1)(z^{-m_1+m_2})=\widetilde{W}_{3}(\bm{m},t,\Lambda_2^{-1})(z^{-m_1+m_2})\nonumber.
\end{align*}
Here for $Q\in\{S_3,W_3,\widetilde{S}_3,\widetilde{W}_3\}$, $Q(\bm{m},t,\Lambda_2^{-1})$ is obtained by replacing $\Lambda_1$ in $Q(\bm{m},t,\Lambda_1)$ with $\Lambda_2^{-1}$ without changing places of $\Lambda_1$ in above relations of $Q(\bm{m},t,\Lambda_1)$. In what follows, we also use $R(\bm{m},\Lambda_i^{\pm 1})$ or $R(\Lambda_i^{\pm 1})$ for brevity to instead of $R(\bm{m},t,\Lambda_i^{\pm 1})$.

\subsection{Relations between $S_i$ and $\widetilde{S}_i$}

If set $\bm{m}'=\bm{m}+\bm{j}$ with $\bm{j}=(j_1,j_2)$ in (\ref{3KPwavebilinear}), we can get
\begin{align}
&\sum_{j_1,j_2\in\mathbb{Z}}\oint_{C_R}\frac{dz}{2\pi iz}z^l\Psi_1(\bm{m},t,z)\widetilde{\Psi}_1(\bm{m}+\bm{j},t',z)\Lambda_1^{j_1}\Lambda_2^{j_2}\nonumber\\
=&\sum_{j_1,j_2\in\mathbb{Z}}\oint_{C_r}\frac{dz}{2\pi iz}z^l\left(\Psi_2(\bm{m},t,z)\widetilde{\Psi}_2(\bm{m}+\bm{j},t',z)\Lambda_1^{j_1}\Lambda_2^{j_2}
+\Psi_3(\bm{m},t,z)\widetilde{\Psi}_3(\bm{m}+\bm{j},t',z)\Lambda_1^{j_1}\Lambda_2^{j_2}\right).\label{3kpbilinear-wave-sum}
\end{align}
Further by Lemma \ref{ABlemma} and Lemma \ref{ABlemma2}, we can obtain the following proposition.
\begin{proposition}\label{bilinear-wave-operator}
Wave operators $S_i$ and $\widetilde{S}_i$ satisfy ($l\geq 0$)
\begin{align*}
&\sum_{j_2\in\mathbb{Z}}S_1(\bm{m},t,\Lambda_1)\Lambda_1^{n_1l+1}e^{\xi(t^{(1)}-t'^{(1)},\Lambda_1)}
\widetilde{S}^*_1(\bm{m}+j_2\bm{e}_2,t',\Lambda_1)\Lambda_2^{j_2}\\
-&\sum_{j_1\in\mathbb{Z}}S_2(\bm{m},t,\Lambda_2)\Lambda_2^{-n_2l+1}e^{\xi(t^{(2)}-t'^{(2)},\Lambda_2^{-1})}
\widetilde{S}^*_2(\bm{m}+j_1\bm{e}_1,t',\Lambda_2)\Lambda_1^{j_1}\\
=&\sum_{j_1\in\mathbb{Z}}S_3(\bm{m},t,\Lambda_1)\Lambda_1^{-n_3l+1}e^{\xi(t^{(3)}-t'^{(3)},\Lambda_1^{-1})}
\widetilde{S}^*_3(\bm{m}+j_1\bm{e},t',\Lambda_1)\Lambda_1^{j_1}\Lambda_2^{j_1}\\
=&\sum_{j_1\in\mathbb{Z}}S_3(\bm{m},t,\Lambda_2^{-1})\Lambda_2^{n_3l-1}e^{\xi(t^{(3)}-t'^{(3)},\Lambda_2)}
\widetilde{S}^*_3(\bm{m}+j_1\bm{e},t',\Lambda_2^{-1})\Lambda_1^{j_1}\Lambda_2^{j_1}.
\end{align*}
\end{proposition}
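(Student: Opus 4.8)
The plan is to begin from the summed bilinear equation \eqref{3kpbilinear-wave-sum}, in which $\bm{m}'=\bm{m}+\bm{j}$ has been inserted, replace every wave function and adjoint wave function by the operator acting on the appropriate monomial in $z$, and then read each of the three contour integrals backwards through Lemma \ref{ABlemma} and Lemma \ref{ABlemma2} so as to reconstitute an operator product. The device that makes this work is that, on the one--dimensional space of monomials appearing here, a shift operator acts as multiplication by a power of $z$: on $z^{m_1}$ one has $\Lambda_1\to z$ while on $z^{-m_1}$ one has $\Lambda_1\to z^{-1}$, similarly for $\Lambda_2$ on $z^{\pm m_2}$, and---most importantly---on $z^{m_1-m_2}$ both $\Lambda_1$ and $\Lambda_2^{-1}$ act as multiplication by $z$. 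Consequently the scalar prefactors $z^{n_1l},z^{-n_2l},z^{-n_3l}$ are absorbed into the wave operators as $\Lambda_1^{n_1l},\Lambda_2^{-n_2l}$ and $\Lambda_1^{-n_3l}$ (or $\Lambda_2^{n_3l}$), and it is the two representations of $\Psi_3$ that produce the last two lines.

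For the $\Psi_1\widetilde\Psi_1$ integral I would fix $j_2$ and regard the inner $j_1$--sum as the right--hand side of Lemma \ref{ABlemma} with $m=m_1$, $\Lambda=\Lambda_1$, $A=W_1(\bm{m},\Lambda_1)\Lambda_1^{n_1l}$ and $B=\widetilde W_1(\bm{m}+j_2\bm{e}_2,\Lambda_1)$; the lemma collapses the $j_1$--sum together with the residue into the single product $W_1(\bm{m},\Lambda_1)\Lambda_1^{n_1l}\widetilde W_1^*(\bm{m}+j_2\bm{e}_2,\Lambda_1)$. Substituting $W_1=S_1e^{\xi(t^{(1)},\Lambda_1)}$ and $\widetilde W_1^*=\Lambda_1e^{-\xi(t'^{(1)},\Lambda_1)}\widetilde S_1^*$, commuting the constant--coefficient exponentials, and using $\Lambda_1^{n_1l}\Lambda_1=\Lambda_1^{n_1l+1}$ turns this into the first line, the extra $+1$ in the $\Lambda_1$--exponent being exactly the $\Lambda_1^{-1}$ of $\widetilde W_1$ re--emerging through the adjoint; restoring $\sum_{j_2}(\cdots)\Lambda_2^{j_2}$ completes it. The $\Psi_2\widetilde\Psi_2$ integral is identical with $\Lambda_1,\Lambda_2$ interchanged: the $j_2$--sum is collapsed, $\sum_{j_1}(\cdots)\Lambda_1^{j_1}$ is left explicit, and one obtains the second line, whose minus sign in the statement comes from moving it to the left of \eqref{3kpbilinear-wave-sum}.

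The delicate step, which I expect to be the main obstacle, is the $\Psi_3\widetilde\Psi_3$ integral, handled through Lemma \ref{ABlemma2}. Writing $W_3=\sum_k w_k\Lambda_1^k$ and $\widetilde W_3=\sum_{k'}\widetilde w_{k'}\Lambda_1^{k'}$, and keeping in mind that $\Lambda_1\to z^{-1}$ on $z^{-m_1+m_2}$, a direct residue computation shows that the $(j_1,j_2)$--coefficient of the integral is $\sum_{k-k'=n_3l+j_1-j_2}w_k(\bm{m})\widetilde w_{k'}(\bm{m}+\bm{j})$, so the residue retains only the single combination $k-k'$. The subtlety is that both discrete indices $j_1$ and $j_2$ enter the $z$--exponent through $-(m_1+j_1)+(m_2+j_2)$, yet the final answer is a single--index sum: after normal--ordering $\Lambda_1^{k-n_3l-k'}\widetilde w_{k'}$ in the candidate operator $\sum_{j_1}W_3(\bm{m},\Lambda_1)\Lambda_1^{-n_3l}\widetilde W_3^*(\bm{m}+j_1\bm{e},\Lambda_1)(\Lambda_1\Lambda_2)^{j_1}$ and reading off the $\Lambda_1$-- and $\Lambda_2$--degrees, one finds the free index is forced to equal the $\Lambda_2$--degree while the remaining powers of $\Lambda_1$ come from the operator series, and the shift argument telescopes precisely to $\bm{m}+j_1\bm{e}_1+j_2\bm{e}_2$ with $\bm{e}=\bm{e}_1+\bm{e}_2$. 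This is exactly where the diagonal factor $(\Lambda_1\Lambda_2)^{j_1}$ and the argument $\bm{m}+j_1\bm{e}$ of $\widetilde S_3^*$ are born, and verifying that the genuine double sum reorganizes into this single--index form is the crux; simplifying the candidate operator as in the first term then yields the third line. Repeating the computation with the second representation $\Psi_3=W_3(\bm{m},\Lambda_2^{-1})(z^{m_1-m_2})$, for which $\Lambda_2^{-1}\to z$, $z^{-n_3l}=\Lambda_2^{n_3l}$, and $\widetilde W_3^*(\Lambda_2^{-1})=\Lambda_2^{-1}e^{-\xi(t'^{(3)},\Lambda_2)}\widetilde S_3^*$, yields the fourth line; since both evaluate the same integral, the third and fourth expressions coincide automatically. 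Throughout I would only need to confirm that the geometric expansions of the operators converge in the correct region---around $z=\infty$ for the $C_R$ integral and around $z=0$ for the two $C_r$ integrals---so that the contour integrals may legitimately be read as the formal residues of Lemma \ref{ABlemma} and Lemma \ref{ABlemma2}.
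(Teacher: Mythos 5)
Your proposal is correct and follows essentially the same route as the paper: setting $\bm{m}'=\bm{m}+\bm{j}$, summing against $\Lambda_1^{j_1}\Lambda_2^{j_2}$, and collapsing the residues into operator products via Lemma \ref{ABlemma} and Lemma \ref{ABlemma2} (your normal-ordering verification of the diagonal $(\Lambda_1\Lambda_2)^{j_1}$ reorganization is exactly the content of the first identity in Lemma \ref{ABlemma2}, which the paper invokes without spelling out). The only difference is that you make explicit the bookkeeping the paper leaves implicit.
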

\begin{proposition}\label{relations between S_i}
The relations between $S_i$ and $\widetilde{S}_i$ $(i=1,2,3)$ are given by
\begin{align*}
&S_1(\bm{m},\Lambda_1)\Lambda_1\widetilde{S}_1^*(\bm{m},\Lambda_1)=\Lambda_1,\quad
S_2(\bm{m},\Lambda_2)\Lambda_2\widetilde{S}_2^*(\bm{m}+\bm{e}_1,\Lambda_2)=(\Delta_2^*)^{-1},\\
&S_3(\bm{m},\Lambda_1)\Lambda_1\widetilde{S}_3^*(\bm{m},\Lambda_1)=\Lambda_1,\quad
S_3(\bm{m},\Lambda_2^{-1})\Lambda_2^{-1}
\widetilde{S}_3^*(\bm{m}+\bm{e},\Lambda_2^{-1})=\Delta_2^{-1}.
\end{align*}
\end{proposition}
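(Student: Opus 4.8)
The plan is to specialise the wave-operator bilinear identity of Proposition~\ref{bilinear-wave-operator} to $t'=t$ and $l=0$, and then to read off the four relations by comparing coefficients of the monomials $\Lambda_1^a\Lambda_2^b$. At $t'=t$ every exponential factor $e^{\xi(t^{(i)}-t'^{(i)},\cdot)}$ becomes $1$, and at $l=0$ the shift powers collapse to $\Lambda_1^{n_1l+1}\to\Lambda_1$, $\Lambda_2^{-n_2l+1}\to\Lambda_2$, $\Lambda_1^{-n_3l+1}\to\Lambda_1$ and $\Lambda_2^{n_3l-1}\to\Lambda_2^{-1}$, so that Proposition~\ref{bilinear-wave-operator} reduces to a chain $\mathcal{E}_1-\mathcal{E}_2=\mathcal{E}_3=\mathcal{E}_4$ of operators in $\Lambda_1,\Lambda_2$, with $\mathcal{E}_1=\sum_{j_2}S_1(\bm m,\Lambda_1)\Lambda_1\widetilde{S}_1^*(\bm m+j_2\bm e_2,\Lambda_1)\Lambda_2^{j_2}$ and $\mathcal{E}_2,\mathcal{E}_3,\mathcal{E}_4$ the corresponding specialisations of the remaining three sums. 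First I would record the support of each from the explicit series: $S_1\Lambda_1\widetilde{S}_1^*$ has $\Lambda_1$-degree $\le 1$; $S_2\Lambda_2\widetilde{S}_2^*$ has $\Lambda_2$-degree $\ge 1$; and after the factor $\Lambda_1^{j_1}\Lambda_2^{j_1}$ both $S_3\Lambda_1\widetilde{S}_3^*$ and $S_3(\Lambda_2^{-1})\Lambda_2^{-1}\widetilde{S}_3^*(\Lambda_2^{-1})$ produce only monomials whose $\Lambda_1$-power minus $\Lambda_2$-power is $\ge 1$.

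To obtain the first and third relations I would extract the coefficient of $\Lambda_2^0$. In $\mathcal{E}_1$ only $j_2=0$ survives, giving $S_1(\bm m,\Lambda_1)\Lambda_1\widetilde{S}_1^*(\bm m,\Lambda_1)$ of $\Lambda_1$-degree $\le 1$; in $\mathcal{E}_2$ nothing survives, since its $\Lambda_2$-degree is $\ge 1$; and in $\mathcal{E}_3$ only $j_1=0$ survives, giving $S_3(\bm m,\Lambda_1)\Lambda_1\widetilde{S}_3^*(\bm m,\Lambda_1)$ of $\Lambda_1$-degree $\ge 1$. An operator of $\Lambda_1$-degree $\le 1$ that equals one of $\Lambda_1$-degree $\ge 1$ must be a single monomial $c\Lambda_1$; computing the top coefficient from the normalisations (the $\Lambda_1^0$-terms of $S_1,\widetilde{S}_1$ are $1$, and the product of the $\Lambda_1^0$-terms of $S_3,\widetilde{S}_3$ equals $1$ after using $\Lambda_1 f(\bm m)=f(\bm m+\bm e_1)\Lambda_1$) gives $c=1$, hence $S_1\Lambda_1\widetilde{S}_1^*=\Lambda_1$ and $S_3\Lambda_1\widetilde{S}_3^*=\Lambda_1$.

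For the second and fourth relations I would extract the coefficient of $\Lambda_1^1$. There $\mathcal{E}_1$ contributes $\sum_{j_2}\Lambda_2^{j_2}$ (each top coefficient being $1$), $\mathcal{E}_2$ contributes exactly its $j_1=1$ term $S_2(\bm m,\Lambda_2)\Lambda_2\widetilde{S}_2^*(\bm m+\bm e_1,\Lambda_2)$ of $\Lambda_2$-degree $\ge 1$, while the $\Lambda_1^1$-part of $\mathcal{E}_3=\mathcal{E}_4$ has $\Lambda_2$-degree $\le 0$; in the $\mathcal{E}_4$ presentation this part is the $j_1=1$ term $S_3(\bm m,\Lambda_2^{-1})\Lambda_2^{-1}\widetilde{S}_3^*(\bm m+\bm e,\Lambda_2^{-1})\,\Lambda_2$. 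Splitting the resulting identity $\sum_{j_2}\Lambda_2^{j_2}-S_2\Lambda_2\widetilde{S}_2^*(\bm m+\bm e_1)=\mathcal{E}_4|_{\Lambda_1^1}$ into its $\Lambda_2$-positive and $\Lambda_2$-nonpositive parts then yields both relations at once: the part of $\Lambda_2$-degree $\ge 1$ gives $S_2\Lambda_2\widetilde{S}_2^*(\bm m+\bm e_1)=\sum_{j\ge 1}\Lambda_2^{j}=(\Delta_2^*)^{-1}$, and the part of $\Lambda_2$-degree $\le 0$ gives $S_3(\bm m,\Lambda_2^{-1})\Lambda_2^{-1}\widetilde{S}_3^*(\bm m+\bm e,\Lambda_2^{-1})=\sum_{j\ge 1}\Lambda_2^{-j}=\Delta_2^{-1}$.

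The main obstacle I anticipate is the bi-infinite sum $\sum_{j_2\in\mathbb{Z}}\Lambda_2^{j_2}$ arising as the $\Lambda_1^1$-coefficient of $\mathcal{E}_1$: this is not itself an admissible one-sided operator, so the coefficient comparison must be justified by the observation that the identity forces this symmetric sum to split, its positive part being absorbed by the genuinely one-sided $\mathcal{E}_2$ and its nonpositive part by the genuinely one-sided $\mathcal{E}_3=\mathcal{E}_4$. Making this rigorous requires confirming from the explicit expansions that $S_2\Lambda_2\widetilde{S}_2^*$ and $S_3(\Lambda_2^{-1})\Lambda_2^{-1}\widetilde{S}_3^*(\Lambda_2^{-1})$ are one-sided with the stated degree bounds, so that projection onto the two half-lines is well defined; the remaining bookkeeping—tracking the index shifts $\bm m\mapsto\bm m+\bm e_1$ and $\bm m+\bm e$, and the passage between the $\Lambda_1$- and $\Lambda_2^{-1}$-presentations of the third wave operator—is routine but must be carried out carefully to land on the precise shifts in the statement.
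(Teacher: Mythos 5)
Your proposal is correct and follows essentially the same route as the paper: specialise Proposition \ref{bilinear-wave-operator} to $t'=t$, $l=0$, extract the $\Lambda_2^0$-coefficient and use the opposite $\Lambda_1$-degree bounds of $S_1\Lambda_1\widetilde{S}_1^*$ and $S_3\Lambda_1\widetilde{S}_3^*$ to force both to equal $\Lambda_1$, then extract the $\Lambda_1^1$-coefficient and split $\sum_{j\in\mathbb{Z}}\Lambda_2^j$ into its positive and non-positive parts matched against the one-sided operators $S_2\Lambda_2\widetilde{S}_2^*(\bm{m}+\bm{e}_1)$ and $S_3(\Lambda_2^{-1})\Lambda_2^{-1}\widetilde{S}_3^*(\bm{m}+\bm{e})\Lambda_2$. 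The index shifts and the identifications $(\Delta_2^*)^{-1}=\sum_{j\geq1}\Lambda_2^{j}$, $\Delta_2^{-1}=\sum_{j\geq1}\Lambda_2^{-j}$ are handled exactly as in the paper.
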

\begin{proof}
Let $t'=t,\ l=0$ in the relation of Proposition \ref{bilinear-wave-operator}, then
\begin{align}
&\sum_{j\in\mathbb{Z}}S_1(\bm{m},\Lambda_1)\Lambda_1\widetilde{S}_1^*(\bm{m}+j\bm{e}_2,\Lambda_1)\Lambda_2^j
-\sum_{j\in\mathbb{Z}}S_2(\bm{m},\Lambda_2)\Lambda_2\widetilde{S}_2^*(\bm{m}+j\bm{e}_1,\Lambda_2)\Lambda_1^j\nonumber\\
=&\sum_{j\in\mathbb{Z}}S_3(\bm{m},\Lambda_1)\Lambda_1\widetilde{S}_3^*(\bm{m}+j\bm{e},\Lambda_1)\Lambda_1^j\Lambda_2^j
=\sum_{j\in\mathbb{Z}}S_3(\bm{m},\Lambda_2^{-1})\Lambda_2^{-1}
\widetilde{S}_3^*(\bm{m}+j\bm{e},\Lambda_2^{-1})\Lambda_1^j\Lambda_2^j.\label{ssrelation}
\end{align}
Firstly let us compare the coefficients of $\Lambda_2^0$ in \eqref{ssrelation}. Notice that $\left(S_2(\bm{m},\Lambda_2)\Lambda_2
\widetilde{S}_2^*(\bm{m}+j\bm{e}_1,\Lambda_2)\right)_{2,[0]}=0$, therefore we can obtain $S_1(\bm{m},\Lambda_1)\Lambda_1\widetilde{S}_1^*(\bm{m},\Lambda_1)
=S_3(\bm{m},\Lambda_1)\Lambda_1\widetilde{S}_3^*(\bm{m},\Lambda_1)$. Further $\Lambda_1$ is the highest order term of $S_1(\bm{m},\Lambda_1)\Lambda_1\widetilde{S}_1^*(\bm{m},\Lambda_1)$ with respect to $\Lambda_1$. Thus we can at last obtain that
\begin{align*}
S_1(\bm{m},\Lambda_1)\Lambda_1\widetilde{S}_1^*(\bm{m},\Lambda_1)
=S_3(\bm{m},\Lambda_1)\Lambda_1\widetilde{S}_3^*(\bm{m},\Lambda_1)=\Lambda_1.
\end{align*}
Next if compare the coefficients of $\Lambda_1$ in \eqref{ssrelation}, we can obtain
\begin{align*}
\sum_{j\in\mathbb{Z}}\Lambda_2^j-S_2(\bm{m},\Lambda_2)\Lambda_2\widetilde{S}_2^*(\bm{m}+\bm{e}_1,\Lambda_2)
=S_3(\bm{m},\Lambda_2^{-1})\Lambda_2^{-1}
\widetilde{S}_3^*(\bm{m}+\bm{e},\Lambda_2^{-1})\Lambda_2.
\end{align*}
Notice that $S_2(\bm{m},\Lambda_2)\Lambda_2\widetilde{S}_2^*(\bm{m}+\bm{e}_1,\Lambda_2)$ has positive $\Lambda_2$--order, while $S_3(\bm{m},\Lambda_2^{-1})\Lambda_2^{-1}
\widetilde{S}_3^*(\bm{m}+\bm{e},\Lambda_2^{-1})\Lambda_2$ has non--positive $\Lambda_2$--order. Based upon these two facts, one can easily obtain relations between $S_i$ and $\widetilde{S}_i$ for $i=2,3$.
\end{proof}
\subsection{Evolution equations of wave operators}
\begin{proposition}\label{prop:evolutionS}
Evolution equation of wave operators with respect to $t_k^{(i)}(i=1,2,3)$ are given as follows, which are called $t_k^{(i)}-$ flows.
\begin{itemize}
  \item $t_k^{(1)}$--flows:
  \begin{align*}
  &\partial_{t_k^{(1)}}S_1(\bm{m},\Lambda_1)
=-\left(S_1(\bm{m},\Lambda_1)\Lambda_1^{k}S_1^{-1}(\bm{m},\Lambda_1)\right)_{1,<0}S_1(\bm{m},\Lambda_1),\\
&\partial_{t_k^{(1)}}S_3(\bm{m},\Lambda_1)
=\left(S_1(\bm{m},\Lambda_1)\Lambda_1^{k}S_1^{-1}(\bm{m},\Lambda_1)\right)_{1,\geq0}S_3(\bm{m},\Lambda_1),\\
&\partial_{t_k^{(1)}}S_2(\bm{m},\Lambda_2)
=\left(S_1(\bm{m},\Lambda_1)\Lambda_1^{k}(\Delta_2^*)^{-1}S_1^{-1}(\bm{m},\Lambda_1)\right)_{1,[0]}\Delta_2^*S_2(\bm{m},\Lambda_2),\\
&\partial_{t_k^{(1)}}S_3(\bm{m},\Lambda_2^{-1})
=\left(S_1(\bm{m},\Lambda_1)\Lambda_1^{k}\Lambda_2\Delta_2^{-1}
S_1^{-1}(\bm{m},\Lambda_1)\right)_{1,[0]}\Lambda_2^{-1}\Delta_2S_3(\bm{m},\Lambda_2^{-1}).
  \end{align*}
  \item $t_k^{(2)}$--flows:
  \begin{align*}
  &\partial_{t_k^{(2)}}S_1(\bm{m},\Lambda_1)
=\left(S_2(\bm{m},\Lambda_2)\Lambda_2^{-k}\Delta_1^{-1}
S_2^{-1}(\bm{m},\Lambda_2)(\Delta_2^*)^{-1}\right)_{2,[0]}S_1(\bm{m},\Lambda_1),\\
&\partial_{t_k^{(2)}}S_3(\bm{m},\Lambda_1)=-\left(S_2(\bm{m},\Lambda_2)\Lambda_2^{-k}\Lambda_1^{-1}\Delta_1^{*-1}
S_2^{-1}(\bm{m},\Lambda_2)(\Delta_2^*)^{-1}\right)_{2,[0]}S_3(\bm{m},\Lambda_1),\\
&\partial_{t_k^{(2)}}S_2(\bm{m},\Lambda_2)=-\left(S_2(\bm{m},\Lambda_2)\Lambda_2^{-k}
S_2^{-1}(\bm{m},\Lambda_2)\right)_{\Delta_2^*,\leq0}S_2(\bm{m},\Lambda_2),\\
&\partial_{t_k^{(2)}}S_3(\bm{m},\Lambda_2^{-1})=\left(S_2(\bm{m},\Lambda_2)\Lambda_2^{-k}
S_2^{-1}(\bm{m},\Lambda_2)\right)_{\Delta_2^*,\geq 1}S_3(\bm{m},\Lambda_2^{-1}).
  \end{align*}
  \item $t_k^{(3)}$--flows
  \begin{itemize}
    \item In terms of $\Lambda_1$--operator and $S_3(\bm{m},\Lambda_1)$
    \begin{align*}
    &\partial_{t_k^{(3)}}S_1(\bm{m},\Lambda_1)=\left(S_3(\bm{m},\Lambda_1)\Lambda_1^{-k}
S_3^{-1}(\bm{m},\Lambda_1)\right)_{1,<0}S_1(\bm{m},\Lambda_1),\\
&\partial_{t_k^{(3)}}S_2(\bm{m},\Lambda_2)=-\left(S_3(\bm{m},\Lambda_1)\Lambda_1^{-k}(\Delta_{12}^*)^{-1}
S_3^{-1}(\bm{m},\Lambda_1)\right)_{1,[0]}\Delta_2^*S_2(\bm{m},\Lambda_2),\\
&\partial_{t_k^{(3)}}S_3(\bm{m},\Lambda_1)=-\left(S_3(\bm{m},\Lambda_1)\Lambda_1^{-k}
S_3^{-1}(\bm{m},\Lambda_1)\right)_{1,\geq0}S_3(\bm{m},\Lambda_1).
    \end{align*}
    \item In terms of $\Lambda_2$--operator and $S_3(\bm{m},\Lambda_2^{-1})$
    \begin{align*}
&\partial_{t_k^{(3)}}S_1(\bm{m},\Lambda_1)=\left(S_3(\bm{m},\Lambda_2^{-1})\Lambda_2^{k+1}\Lambda_1\Delta_{12}^{-1}
S_3^{-1}(\bm{m},\Lambda_2^{-1})\Delta_2^{-1}\Lambda_2\right)_{2,[0]}S_1(\bm{m},\Lambda_1),\\
&\partial_{t_k^{(3)}}S_2(\bm{m},\Lambda_2)=\left(S_3(\bm{m},\Lambda_2^{-1})\Lambda_2^k
S_3^{-1}(\bm{m},\Lambda_2^{-1})\right)_{\Delta_2,\geq1}S_2(\bm{m},\Lambda_2),\\
&\partial_{t_k^{(3)}}S_3(\bm{m},\Lambda_2^{-1})=-\left(S_3(\bm{m},\Lambda_2^{-1})\Lambda_2^k
S_3^{-1}(\bm{m},\Lambda_2^{-1})\right)_{\Delta_2,\leq0}S_3(\bm{m},\Lambda_2^{-1}).
\end{align*}
  \end{itemize}
\end{itemize}

\end{proposition}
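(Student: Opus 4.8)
The plan is to derive every flow equation by differentiating the $l=0$ bilinear operator identity of Proposition \ref{bilinear-wave-operator} with respect to the relevant time and then setting $t'=t$, exactly as the static relations of Proposition \ref{relations between S_i} were extracted from \eqref{ssrelation}. First I would record the elementary fact that, since $\xi(t^{(i)}-t'^{(i)},\cdot)=\sum_j(t_j^{(i)}-t_j'^{(i)})(\cdot)^j$, differentiating the $i$-th exponential in Proposition \ref{bilinear-wave-operator} by the unprimed $t_k^{(i)}$ simply inserts a shift power ($\Lambda_1^k$ for $i=1$, $\Lambda_2^{-k}$ for $i=2$, and either $\Lambda_1^{-k}$ or $\Lambda_2^k$ for $i=3$, according to which of the two $S_3$-descriptions is used), while the two exponentials belonging to the other two blocks are independent of $t^{(i)}$ and are therefore untouched; there $\partial_{t_k^{(i)}}$ differentiates only the dressing operators $S_j,\widetilde{S}_j$.

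Next, after applying $\partial_{t_k^{(i)}}$ and putting $t'=t$, every exponential collapses to $1$, and I would use Proposition \ref{relations between S_i} to eliminate the adjoint factors $\widetilde{S}_j^*$ in favour of $S_j^{-1}$ together with the attendant $\Delta$-factors, for instance $\Lambda_1\widetilde{S}_1^*(\bm{m},\Lambda_1)=S_1^{-1}(\bm{m},\Lambda_1)\Lambda_1$ and $\Lambda_2\widetilde{S}_2^*(\bm{m}+\bm{e}_1,\Lambda_2)=S_2^{-1}(\bm{m},\Lambda_2)(\Delta_2^*)^{-1}$, and likewise for $S_3$ in both of its descriptions $S_3(\bm{m},\Lambda_1)$ and $S_3(\bm{m},\Lambda_2^{-1})$. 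This converts the differentiated identity into an operator equation of the schematic shape $\sum_j(\cdots)\Lambda_2^j=\sum_j(\cdots)\Lambda_1^j\Lambda_2^j$, in which each bracket is assembled from the derivative terms $(\partial_{t_k^{(i)}}S_j)S_j^{-1}$ and from the bare pieces $S_1\Lambda_1^kS_1^{-1}$, $S_2\Lambda_2^{-k}S_2^{-1}$, $S_3\Lambda_1^{-k}S_3^{-1}$, dressed where necessary by the mixed factors $(\Delta_2^*)^{-1}$, $\Delta_{12}^{-1}$, $(\Delta_{12}^*)^{-1}$, $\Lambda_2\Delta_2^{-1}$ that Proposition \ref{relations between S_i} forces upon them.

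I would then isolate the individual flows by comparing coefficients of suitable powers of the spectator shift operator, precisely as in the proof of Proposition \ref{relations between S_i}: the $\Lambda_2^0$-coefficient yields the flows of $S_1$ and of $S_3(\bm{m},\Lambda_1)$, while matching the $\Lambda_1^1$-coefficient and then passing to the $\Lambda_1^0$-part (the $(\cdots)_{1,[0]}$ projection) of the resulting mixed operator isolates a pure $\Lambda_2$-operator and so produces the cross-flows of $S_2$ and of $S_3(\bm{m},\Lambda_2^{-1})$. The final separation into the stated one-sided projections $(\cdots)_{1,\geq 0}$, $(\cdots)_{1,<0}$, $(\cdots)_{\Delta_2^*,\leq 0}$, $(\cdots)_{\Delta_2,\geq 1}$ is a pure order argument: because each $S_j$ is normalised with a fixed leading term and shifts in only one direction, the product $(\partial_{t_k^{(i)}}S_j)S_j^{-1}$ inherits a strictly one-signed range of orders in the relevant shift operator, whereas the bare-Lax piece splits uniquely into its non-negative and negative parts, so matching the two sides forces each derivative term to coincide with exactly the complementary projection. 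Here Lemma \ref{lemma:Alade} is what lets me pass freely between $\Lambda_i$-order projections and the $\Delta_i,\Delta_i^*$-order projections appearing in the $t_k^{(2)}$- and $t_k^{(3)}$-equations, and Lemmas \ref{ABlemma}--\ref{ABlemma2} supply the residue identities underlying Proposition \ref{bilinear-wave-operator}.

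The main obstacle will be the bookkeeping of the $\bm{m}$-shifts carried by the adjoint operators $\widetilde{S}_j^*(\bm{m}+j\bm{e}_i,\cdot)$ under the summation, combined with the need to use \emph{both} descriptions of $S_3$ at once. To produce the genuinely mixed factors $(\Delta_2^*)^{-1}$, $\Delta_{12}^{-1}$, $(\Delta_{12}^*)^{-1}$ that appear in the $t_k^{(1)}$- and $t_k^{(3)}$-flows of $S_2$ and $S_3$, one must translate $S_3(\bm{m},\Lambda_1)$ into $S_3(\bm{m},\Lambda_2^{-1})$ and back at exactly the right moment, since the $\Lambda_1$- and $\Lambda_2$-order splittings are genuinely inequivalent. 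Verifying that these translations are consistent --- that the two expressions on the right-hand side of Proposition \ref{bilinear-wave-operator} really yield one and the same flow, merely written in the two shift operators --- is the delicate point; once it is settled, the remaining steps reduce to the routine projection and residue computations.
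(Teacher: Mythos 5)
Your proposal is correct and follows essentially the same route as the paper: differentiate the $l=0$ operator identity of Proposition \ref{bilinear-wave-operator} in $t_k^{(i)}$, set $t'=t$, eliminate the adjoint operators via Proposition \ref{relations between S_i}, then extract the flows by comparing the $\Lambda_2^0$- and $\Lambda_1^1$-coefficients and splitting by (non-)negative orders, with Lemma \ref{lemma:Alade} converting to the $\Delta_2,\Delta_2^*$-projections. This matches the paper's argument step for step, including the use of both descriptions of $S_3$ for the two forms of the cross-flows.
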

\begin{proof}
Firstly if apply $\partial_{t_k^{(1)}}$ to the relation in Proposition \ref{bilinear-wave-operator} and let $t'=t,\ l=0$, then we have
\begin{align}
&\sum_{j\in\mathbb{Z}}\left(\partial_{t_k^{(1)}}S_1(\bm{m},\Lambda_1)
+S_1(\bm{m},\Lambda_1)\Lambda_1^{k}\right)\cdot S_1^{-1}(\bm{m}+j\bm{e}_2,\Lambda_1)\Lambda_2^{j}\Lambda_1\nonumber\\
-&\sum_{j\in\mathbb{Z}}\partial_{t_k^{(1)}}S_2(\bm{m},\Lambda_2)\cdot
S_2^{-1}(\bm{m}+(j-1)\bm{e}_1,m_2,\Lambda_2)(\Delta_2^*)^{-1}\Lambda_1^j\nonumber\\
=&\sum_{j\in\mathbb{Z}}\partial_{t_k^{(1)}}S_3(\bm{m},\Lambda_1)\cdot
S_3^{-1}(\bm{m}+j\bm{e},\Lambda_1)\Lambda_1^{j+1}\Lambda_2^{j}\nonumber\\
=&\sum_{j\in\mathbb{Z}}\partial_{t_k^{(1)}}S_3(\bm{m},\Lambda_2^{-1})\cdot
S_3^{-1}(\bm{m}+(j-1)\bm{e},\Lambda_2^{-1})\Delta_2^{-1}\Lambda_1^{j}\Lambda_2^{j}.\label{patk1-3kpbilinear}
\end{align}
Notice that $\left(\partial_{t_k^{(1)}}S_2(\bm{m},\Lambda_2)
S_2^{-1}(\bm{m}+(j-1)\bm{e}_1,\Lambda_2)(\Delta_2^*)^{-1}\right)_{2,[0]}=0$, so by comparing coefficients of $\Lambda_2^0$ in \eqref{patk1-3kpbilinear}, we can get
\begin{align}
\left(\partial_{t_k^{(1)}}S_1(\bm{m},\Lambda_1)+S_1(\bm{m},\Lambda_1)\Lambda_1^{k}\right)
\cdot S_1^{-1}(\bm{m},\Lambda_1)
=\partial_{t_k^{(1)}}S_3(\bm{m},\Lambda_1)S_3^{-1}(\bm{m},\Lambda_1).
\label{patk1-3kpbilinear-lambda_20}
\end{align}
which implies the results for $\partial_{t_k^{(1)}}S_1(\bm{m},\Lambda_1)$ and $\partial_{t_k^{(1)}}S_3(\bm{m},\Lambda_1)$ by taking the terms in \eqref{patk1-3kpbilinear-lambda_20} with negative and non--negative $\Lambda_1$--orders respectively. Next if consider coefficients of $\Lambda_1$ in \eqref{patk1-3kpbilinear}, one can get
\begin{align}
&\sum_{j\in\mathbb{Z}}\left(S_1(\bm{m},\Lambda_1)\Lambda_1^{k}S_1^{-1}(\bm{m}+j\bm{e}_2,\Lambda_1)\right)_{1,[0]}\Lambda_2^j
-\partial_{t_k^{(1)}}S_2(\bm{m},\Lambda_2)\cdot S_2^{-1}(\bm{m},\Lambda_2)(\Delta_2^*)^{-1}\nonumber\\
&=\partial_{t_k^{(1)}}S_3(\bm{m},\Lambda_2^{-1})\cdot S_3^{-1}(\bm{m},\Lambda_2^{-1})\Delta_2^{-1}\Lambda_2,\label{patk1-3kpbilinear-lambda_11}
\end{align}
where we have used $\left(\partial_{t_k^{(1)}}S_1(\bm{m},\Lambda_1)\cdot S_1^{-1}(\bm{m}+j\bm{e}_2,\Lambda_1)\right)_{1,[0]}=0$. Then $\partial_{t_k^{(1)}}S_2(\bm{m},\Lambda_2)$ comes from the terms in \eqref{patk1-3kpbilinear-lambda_11} with positive $\Lambda_2$--orders, while the terms with non--positive $\Lambda_2$--orders give rise to $\partial_{t_k^{(1)}}S_3(\bm{m},\Lambda_2^{-1})$.

We can use similar method above to obtain $\partial_{t_k^{(2)}}S_1(\bm{m},\Lambda_1)$, $\partial_{t_k^{(2)}}S_3(\bm{m},\Lambda_1)$ and
\begin{align*}
&\partial_{t_k^{(2)}}S_2(\bm{m},\Lambda_2)=-\left(S_2(\bm{m},\Lambda_2)\Lambda_2^{-k}
S_2^{-1}(\bm{m},\Lambda_2)(\Delta_2^*)^{-1}\right)_{2,\geq1}\Delta_2^*S_2(\bm{m},\Lambda_2),\\
&\partial_{t_k^{(2)}}S_3(\bm{m},\Lambda_2^{-1})=-\left(S_2(\bm{m},\Lambda_2)\Lambda_2^{-k}
S_2^{-1}(\bm{m},\Lambda_2)(\Delta_2^*)^{-1}\right)_{2,\leq0}\Lambda_2^{-1}\Delta_2S_3(\bm{m},\Lambda_2^{-1}).
\end{align*}
Then final results of $\partial_{t_k^{(2)}}S_2(\bm{m},\Lambda_2)$ and $\partial_{t_k^{(2)}}S_3(\bm{m},\Lambda_2^{-1})$ can be obtained by Lemma \ref{lemma:Alade}. As for $t_k^{(3)}$--flows, they can be derived by similar way to Cases for $t_k^{(1)}$ and $t_k^{(2)}$--flows.
\end{proof}

\section{Relations of $\Lambda_1$ and $\Lambda_2$}
In this section, we will investigate relations of $\Lambda_1$ and $\Lambda_2$ in 3$-$KP hierarchy. Firstly, relations of $\Lambda_1^{\pm k}$ and $\Lambda_2^{\pm k}$ ($k>0$) on wave operators are derived from 3--KP bilinear equation. Then we restrict these relations to the case of $k=1$ and obtain one important operator $H$, which acts trivially on wave functions. After that, we introduce some operator spaces involving $\Lambda_1$ and $\Lambda_2$ and consider the corresponding decompositions. Finally we define four kinds of projections to relate $\Lambda_1$ with $\Lambda_2$ and some formulas are given.

\begin{proposition}\label{prop:lambda12relation}
Given $k>0$, the relations between $\Lambda_1$ and $\Lambda_2$ are given as follows.
\begin{itemize}
  \item $\Lambda_1$ acting on $S_2(\bm{m},\Lambda_2)$ and $S_3(\bm{m},\Lambda_2^{-1})$
  \begin{align*}
&\Lambda_1^k(S_2(\bm{m},\Lambda_2))=\left(\Lambda_1^kS_1(\bm{m},\Lambda_1)(\Delta_2^*)^{-1}
S_1^{-1}(\bm{m},\Lambda_1)\right)_{1,[0]}\cdot\Delta_2^*S_2(\bm{m},\Lambda_2),\\
&\Lambda_1^k(S_3(\bm{m},\Lambda_2^{-1}))=\left(\Lambda_1^kS_1(\bm{m},\Lambda_1)\Delta_2^{-1}
S_1^{-1}(\bm{m},\Lambda_1)\right)_{1,[0]}\cdot\Lambda_2^{-1}\Delta_2S_3(\bm{m},\Lambda_2^{-1})\Lambda_2^k,\\
&\Lambda_1^{-k}(S_2(\bm{m},\Lambda_2))=-\left(\Lambda_1^{-k}S_3(\bm{m},\Lambda_1)(\Delta_{12}^*)^{-1}
S_3^{-1}(\bm{m},\Lambda_1)\right)_{1,[0]}\cdot\Delta_2^*S_2(\bm{m},\Lambda_2),\\
&\La_1^{-k}(S_3(\bm{m},\La_2^{-1}))=(\La_1^{-k}S_3(\bm{m},\La_1)(\Delta_{12}^*)^{-1}S_3^{-1}(\bm{m},\La_1))_{1,[0]}\cdot
\Lambda_2^{-1}\Delta_2S_3(\bm{m},\La_2^{-1})\La_2^{-k}.
\end{align*}
  \item $\Lambda_2$ acting on $S_1(\bm{m},\Lambda_1)$ and $S_3(\bm{m},\Lambda_1)$
\begin{align*}
&\Lambda_2^k(S_1(\bm{m},\Lambda_1))=\left(\left(\Lambda_2^kS_3(\bm{m},\Lambda_2^{-1})\Delta_{12}^{-1}
S_3^{-1}(\bm{m},\Lambda_2^{-1})\Lambda_2\Delta_2^{-1}\right)_{2,[0]}+1\right)\cdot S_1(\bm{m},\Lambda_1),\\
&\Lambda_2^{k}(S_3(\bm{m},\Lambda_1))=\left(\left(\Lambda_2^kS_3(\bm{m},\Lambda_2^{-1})\Delta_{12}^{-1}
S_3^{-1}(\bm{m},\Lambda_2^{-1})\Lambda_2\Delta_2^{-1}\right)_{2,[0]}+1\right)\cdot S_3(\bm{m},\Lambda_1)\Lambda_1^{k},\\
&\Lambda_2^{-k}(S_1(\bm{m},\Lambda_1))=\left(\left(\Lambda_2^{-k}S_2(\bm{m},\Lambda_2)\Delta_1^{-1}
S_2^{-1}(\bm{m},\Lambda_2)(\Delta_2^*)^{-1}\right)_{2,[0]}+1\right)\cdot S_1(\bm{m},\Lambda_1),\\
&\Lambda_2^{-k}(S_3(\bm{m},\Lambda_1))=-\left(\Lambda_2^{-k}S_2(\bm{m},\Lambda_2)(\Delta_1^*)^{-1}
S_2^{-1}(\bm{m},\Lambda_2)(\Delta_2^*)^{-1}\right)_{2,[0]}\cdot S_3(\bm{m},\Lambda_1)\Lambda_1^{-k}.
\end{align*}
\end{itemize}
\end{proposition}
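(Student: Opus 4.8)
The plan is to treat Proposition~\ref{prop:lambda12relation} as the discrete--shift counterpart of the evolution equations of Proposition~\ref{prop:evolutionS} and to run the same argument with the $t_k^{(i)}$--derivative replaced by the lattice shift $\Lambda_i^{\pm k}$. The mechanism is that in \eqref{3KPwavebilinear} the un--primed base point $\bm{m}$ and the primed base point $\bm{m}'$ are independent, so the identity stays valid if the un--primed wave functions are evaluated at $\bm{m}\pm k\bm{e}_i$ while the adjoint wave functions are kept at $\bm{m}'=\bm{m}+\bm{j}$. Since $\Psi_1$ and $\Psi_3$ carry the factors $z^{m_1}$ and $z^{m_1-m_2}$, replacing $\bm{m}$ by $\bm{m}+k\bm{e}_1$ produces an extra $z^k$, which under Lemmas~\ref{ABlemma}--\ref{ABlemma2} becomes a left multiplication by $\Lambda_1^k$; the $\Psi_2$ factor carries only $z^{m_2}$ and therefore contributes instead the shifted coefficient $S_2(\bm{m}+k\bm{e}_1)=\Lambda_1^k(S_2)$, which is exactly the quantity to be computed.

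Concretely I would start from the operator identity of Proposition~\ref{bilinear-wave-operator} at $t'=t$, $l=0$ (equation \eqref{ssrelation}), but with the un--primed operators evaluated at the shifted base point, and eliminate $\widetilde{S}_i^*$ through Proposition~\ref{relations between S_i}. This yields a master equation of the same shape as \eqref{patk1-3kpbilinear}, with $\Lambda_1^k$ standing to the left of the $\Psi_1$-- and $\Psi_3$--blocks and $S_2$ replaced by $\Lambda_1^k(S_2)$ in the $\Psi_2$--block. Comparing the coefficient of $\Lambda_2^0$ is trivial (it only reproduces $\Lambda_1^{k+1}=\Lambda_1^{k+1}$, the same--sector shift), so the content lives in the coefficient of $\Lambda_1$. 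There the $\Psi_1$--block gives $\sum_{j}(\Lambda_1^k S_1 S_1^{-1}(\bm{m}+j\bm{e}_2))_{1,[0]}\Lambda_2^j$, the $\Psi_2$--block gives $\Lambda_1^k(S_2)S_2^{-1}(\bm{m})(\Delta_2^*)^{-1}$, and the $\Psi_3$--block (in its $\Lambda_2^{-1}$--form) gives the $S_3(\bm{m},\Lambda_2^{-1})$ contribution.

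I would then split this $\Lambda_1$--coefficient identity by $\Lambda_2$--order exactly as in the proof of Proposition~\ref{prop:evolutionS}. Using $S_1^{-1}(\bm{m}+j\bm{e}_2)\Lambda_2^j=\Lambda_2^j S_1^{-1}(\bm{m})$ to resum the $j$--sum and replacing $\sum_{j\geq1}\Lambda_2^j$ by $(\Delta_2^*)^{-1}$, the part of positive $\Lambda_2$--order produces $\Lambda_1^k(S_2)=(\Lambda_1^k S_1(\Delta_2^*)^{-1}S_1^{-1})_{1,[0]}\Delta_2^*S_2$, while the part of non--positive $\Lambda_2$--order produces the companion formula for $\Lambda_1^k(S_3(\bm{m},\Lambda_2^{-1}))$, the trailing $\Lambda_2^k$ arising from the $m_2$--shift hidden in $S_3^{-1}(\bm{m}-k\bm{e})$. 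The remaining relations follow by the same scheme with the appropriate sign and axis: $\Lambda_1^{-k}$ is read off from the $\Psi_3/S_3(\Lambda_1)$--block (whence $(\Delta_{12}^*)^{-1}$ and $S_3(\bm{m},\Lambda_1)$), and the two $\Lambda_2^{\pm k}$--families from the $m_2$--shift, extracted from the $S_3(\Lambda_2^{-1})$-- and $S_2$--blocks respectively. Lemma~\ref{lemma:Alade} is used throughout to pass between $\Lambda_i$-- and $\Delta_i$-- (or $\Delta_i^*$--) order splittings; the extra $+1$ in the $\Lambda_2^{\pm k}$--relations on $S_1$ and $S_3(\Lambda_1)$ reflects that a shift, unlike a derivative, preserves the leading (identity) term, so the zero--order projection must be supplemented by $1$ to reproduce $\Lambda_2^{\pm k}(S_1)\cdot S_1^{-1}$.

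The step I expect to be the main obstacle is the bookkeeping that distinguishes these identities from the flows of Proposition~\ref{prop:evolutionS}: because $\Lambda_1^k$ is inserted on the left of \emph{two} blocks ($\Psi_1$ and $\Psi_3$) rather than being produced by differentiating a single exponential, the $\Lambda^{\pm k}$--factors do not sit where they do in the flow equations, and pinning down their final position (the trailing $\Lambda_2^{\pm k}$, the placement of $\Delta_2$ versus $\Lambda_2\Delta_2^{-1}$, and the like) requires carefully tracking the $m_2$--shift through $S_i^{-1}(\bm{m}\pm k\bm{e})$ in the resummation and through the reorderings $\Lambda_i^{\pm k}S_i=S_i(\bm{m}\pm k\bm{e}_i)\Lambda_i^{\pm k}$. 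One must also justify the asymmetric choice of block: for $\Lambda_1^{+k}$ the $S_1$--block (orders $\leq 0$) has a nonzero $\Lambda_1^0$--projection while the $S_3(\Lambda_1)$--block (orders $\geq 0$) does not, and the roles reverse for $\Lambda_1^{-k}$, so that in each case exactly one block yields an order--bounded operator and the positive/non--positive/zero--order decomposition separating the relations is unique.
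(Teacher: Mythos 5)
Your plan is essentially the paper's own proof: the paper likewise shifts the un--primed base point ($\bm{m}\to\bm{m}+l\bm{e}_1$, resp.\ $+l\bm{e}_2$) in \eqref{3KPwavebilinear}, converts to an operator identity via Lemmas \ref{ABlemma}--\ref{ABlemma2} with $\widetilde{S}_i^*$ eliminated through Proposition \ref{relations between S_i}, compares the coefficient of $\Lambda_1$, and separates the result by $\Lambda_2$--order (using that the $S_2$--block has orders $\geq 1$ and the $S_3(\Lambda_2^{-1})$--block orders $\leq -k$, with the $S_1$-- and $S_3(\Lambda_1)$--blocks exchanging roles between $l=k$ and $l=-k$). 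So the proposal is correct and follows the same route, including the bookkeeping points you flag.
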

\begin{proof}
Firstly by setting $\bm{m}\rightarrow \bm{m}+l\bm{e}_1$ in \eqref{3KPwavebilinear}, we can get the similar relation \eqref{3kpbilinear-wave-sum} with $\bm{m}\rightarrow \bm{m}+l\bm{e}_1$ in $\Psi_i$. Further by using Lemma \ref{ABlemma} and Lemma \ref{ABlemma2}, we have
\begin{align*}
&\sum_{j\in\mathbb{Z}}S_1(\bm{m}+l\bm{e}_1,\Lambda_1)\Lambda_1^{l}{S}_1^{-1}(\bm{m}+j\bm{e}_2,\Lambda_1)\Lambda_2^j\Lambda_1
-\sum_{j\in\mathbb{Z}}S_2(\bm{m}+l\bm{e}_1,\Lambda_2){S}_2^{-1}(\bm{m}+(j-1)\bm{e}_1,\Lambda_2)\Lambda_1^{j}(\Delta_2^*)^{-1}\\
=&\sum_{j\in\mathbb{Z}}S_3(\bm{m}+l\bm{e}_1,\Lambda_1)\Lambda_1^{l}{S}_3^{-1}(\bm{m}+j\bm{e},\Lambda_1)\Lambda_1^{j+1}\Lambda_2^j
=\sum_{j\in\mathbb{Z}}S_3(\bm{m}+l\bm{e}_1,\Lambda_2^{-1})\Lambda_2^{-l}{S}_3^{-1}(\bm{m}+(j-1)\bm{e},\Lambda_2^{-1})\Delta_2^{-1}\Lambda_1^j\Lambda_2^j,
\end{align*}
Next by comparing coefficients of $\La_1$, we can find
\begin{align}
&\sum_{j\in\mathbb{Z}}\Big(S_{1}(\bm{m}+k\mathbf{e}_{1},\La_1)
\La_1^{l}S_1^{-1}(\bm{m}+je_{2},\La_1)\Big)_{1,[0]}\La_2^{j}
-S_{2}(\bm{m}+le_{1},\La_2)S_2^{-1}(\bm{m},\La_2)(\Delta_2^*)^{-1}\nonumber\\
=&\sum_{j\in\mathbb{Z}}(S_{3}(\bm{m}+le_{1},\La_1)\La_1^{l}S_3^{-1}(\bm{m}+je,\La_1)\La_1^{j})_{1,[0]}\La_2^{j}
=S_3(\bm{m}+l\bm{e}_1,\Lambda_2^{-1})\Lambda_2^{-l}
S_3^{-1}(\bm{m},\Lambda_2^{-1})\Lambda_2\Delta_2^{-1}.\label{la1ls2s3}
\end{align}

When $l=k>0$, \eqref{la1ls2s3} will become
\begin{align}
&\sum_{j\in\mathbb{Z}}\Big(S_{1}(\bm{m}+k\bm{e}_{1},\La_1)
\La_1^{k}S_1^{-1}(\bm{m}+j\bm{e}_{2},\La_1)\Big)_{1,[0]}\La_2^{j}\nonumber\\
=&S_{2}(\bm{m}+k\bm{e}_{1},\La_2)S_2^{-1}(\bm{m},\La_2)(\Delta_2^*)^{-1}+S_3(\bm{m}+k\bm{e}_1,\Lambda_2^{-1})\Lambda_2^{-k}
S_3^{-1}(\bm{m},\Lambda_2^{-1})\Lambda_2\Delta_2^{-1}.\label{la1ks2s3}
\end{align}
Notice that the lowest $\Lambda_2-$order in $S_{2}(\bm{m}+k\bm{e}_{1},\La_2)S_2^{-1}(\bm{m},\La_2)(\Delta_2^*)^{-1}$ is $1$, while the highest $\Lambda_2-$order in $S_3(\bm{m}+k\bm{e}_1,\Lambda_2^{-1})\Lambda_2^{-k}
S_3^{-1}(\bm{m},\Lambda_2^{-1})\Lambda_2\Delta_2^{-1}$ is $-k$, so the coefficients of $\Lambda_2^j$ ($-k<j\leq 0$) in the left hand side of \eqref{la1ks2s3} are zero. Based upon these, one can obtain the results for $\La_1^{k}(S_2(\bm{m},\La_2))$ and $\Lambda_1^k(S_3(\bm{m},\Lambda_2^{-1}))$.

When $l=-k<0$, the highest $\Lambda_2-$order in $S_{1}(\bm{m}+k\bm{e}_{1},\La_1)
\La_1^{l}S_1^{-1}(\bm{m}+j\bm{e}_{2},\La_1)$ is $-k$, therefore by \eqref{la1ls2s3} we can obtain
\begin{align*}
-S_{2}(\bm{m}-k\bm{e}_{1},\La_2)S_2^{-1}(\bm{m},\La_2)(\Delta_2^*)^{-1}=&\sum_{j\in\mathbb{Z}}\Big(S_{3}(\bm{m}-k\bm{e}_{1},\La_1)\La_1^{-k}(\La_1\La_2)^{j}
S_3^{-1}(\bm{m},\La_1)\Big)_{1,[0]}\\
=&S_3(\bm{m}-k\bm{e}_1,\Lambda_2^{-1})\Lambda_2^{k}
S_3^{-1}(\bm{m},\Lambda_2^{-1})\Lambda_2\Delta_2^{-1}.
\end{align*}
The range of $\Lambda_2$--order in above relation is $[1,k]$, thus
we can obtain $\La_1^{-k}(S_2(\bm{m},\La_2))$ and $\La_1^{-k}(S_3(\bm{m},\La_2^{-1}))$ by considering terms with $\Lambda_2$--order greater than $1$.

Similarly, we can obtain actions of $\Lambda_2^{\pm k}$ on $S_1(\bm{m},\Lambda_1)$ and $S_3(\bm{m},\Lambda_1)$.
\end{proof}

\begin{proposition}\label{prop:Hpsi}
If introduce the operator
$$H=\La_1\De_2+\rho $$
with $\rho=\frac{\widetilde{\tau}_{\bm{m}}}{\widetilde{\tau}_{\bm{m}+\bm{e}_1}}\frac{\widetilde{\tau}_{\bm{m}+\bm{e}+\bm{e}_1}}{\widetilde{\tau}_{\bm{m}+\bm{e}}}
=\p_{t_1^{(1)}}\log\frac{\widetilde{\tau}_{\bm{m}+\bm{e}}}{\widetilde{\tau}_{\bm{m}+\bm{e}_1}}$,
then $H(\Psi_i) = 0$, $i=1,2,3$.
\end{proposition}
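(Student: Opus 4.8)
The plan is to unwind the statement $H(\Psi_i)=0$ into an identity among tau functions and then recognize it as a consequence of the bilinear equation. Since $\Delta_2=\Lambda_2-1$, $\bm{e}=\bm{e}_1+\bm{e}_2$, and $\Lambda_1,\Lambda_2$ shift $\bm{m}$, the operator acts as
$$H(\Psi_i)=\Psi_i(\bm{m}+\bm{e},t,z)-\Psi_i(\bm{m}+\bm{e}_1,t,z)+\rho\,\Psi_i(\bm{m},t,z),\qquad i=1,2,3,$$
with $\rho=\rho(\bm{m})$ a scalar multiplier. Substituting the tau-function expressions for $\Psi_i$, the factor $e^{\xi}$ and the monomial $z^{m_\cdot}$ pull out, so for each $i$ the vanishing of $H(\Psi_i)$ becomes the vanishing of an explicit Laurent series in $z$ whose coefficients are bilinear in the $\widetilde{\tau}$'s at the four lattice points $\bm{m},\,\bm{m}+\bm{e}_1,\,\bm{m}+\bm{e},\,\bm{m}+\bm{e}+\bm{e}_1$, carrying a single one-component Miwa shift (in $t^{(1)}$ for $i=1$, in $t^{(2)}$ for $i=2$, in $t^{(3)}$ for $i=3$).

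The first and guiding step is the leading-coefficient check, which also pins down $\rho$. For $i=1$ the two shifted terms both begin at order $z^{m_1+1}$ with coefficient $e^{\xi}$, so these cancel; using $\widetilde{\tau}_{\bm{a}}(t-[z^{-1}]_1)/\widetilde{\tau}_{\bm{a}}(t)=1-z^{-1}\partial_{t_1^{(1)}}\log\widetilde{\tau}_{\bm{a}}+O(z^{-2})$ the coefficient of $z^{m_1}$ is $\big(\rho-\partial_{t_1^{(1)}}\log(\widetilde{\tau}_{\bm{m}+\bm{e}}/\widetilde{\tau}_{\bm{m}+\bm{e}_1})\big)e^{\xi}$, which vanishes exactly for the stated $\rho$; this is also what forces the two displayed formulas for $\rho$ to agree (their equality being itself a Hirota identity to be recorded). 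The same bookkeeping for $i=2,3$ fixes the normalizations consistently. What then remains is the vanishing of the lower-order coefficients, i.e. the Hirota--Miwa identities, of which the $i=1$ case is
$$z\,\widetilde{\tau}_{\bm{m}+\bm{e}}(t-[z^{-1}]_1)\,\widetilde{\tau}_{\bm{m}+\bm{e}_1}(t)-z\,\widetilde{\tau}_{\bm{m}+\bm{e}_1}(t-[z^{-1}]_1)\,\widetilde{\tau}_{\bm{m}+\bm{e}}(t)+\widetilde{\tau}_{\bm{m}+\bm{e}+\bm{e}_1}(t)\,\widetilde{\tau}_{\bm{m}}(t-[z^{-1}]_1)=0,$$
together with the analogous relations for $i=2,3$.

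I would derive each such identity from \eqref{3KPtaubilinear} at $l=0$ by a suitable specialization of the primed variables in the relevant component, together with a residue extraction: one chooses the discrete shifts $\bm{m}'-\bm{m}$ so that the two $C_r$-integrands (components $2$ and $3$) are regular at $z=0$ and hence drop, while the Miwa specialization of $t'^{(i)}$ produces the factor accounting for the two powers of $z$, so that the surviving $C_R$-residue yields precisely the three-term relation. The main obstacle is exactly this last step: organizing the contour/residue computation so that only the intended component contributes, so that the Miwa parameter reproduces the correct powers of $z$, and so that it runs uniformly over all three $i$; the $z$-power matching and the value of $\rho$ serve as consistency checks throughout.

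An alternative that sidesteps the explicit residues is to apply $H$, acting only on $\bm{m}$, directly to the wave-function bilinear identity \eqref{3KPwavebilinear} at $l=0$. Since each $\widetilde{\Psi}_i(\bm{m}',t',z)$ is independent of $\bm{m}$ and $\rho=\rho(\bm{m})$ multiplies pointwise, one gets $H_{\bm{m}}\big(\Psi_i(\bm{m})\widetilde{\Psi}_i(\bm{m}')\big)=(H\Psi_i)(\bm{m})\,\widetilde{\Psi}_i(\bm{m}')$, so $(H\Psi_1,H\Psi_2,H\Psi_3)$ again solves the same bilinear identity with the same adjoint wave functions. Combining this with the order estimates above, which show that each $H\Psi_i$ has vanishing leading coefficients, one concludes $H(\Psi_i)=0$ by uniqueness of the wave functions; the cost of this route is that the difficulty is transferred into establishing that uniqueness lemma. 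In either approach the operator $H$ itself is read off from Proposition \ref{prop:lambda12relation} at $k=1$, so a third, purely algebraic option is to substitute those $k=1$ relations into $\Lambda_1\Lambda_2\Psi_i-\Lambda_1\Psi_i+\rho\Psi_i$ and verify the cancellation directly, with the operator bookkeeping then being the only obstacle.
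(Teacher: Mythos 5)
Your reduction of the statement is sound: $H(\Psi_i)=0$ does unwind to three-term bilinear identities among the $\widetilde{\tau}$'s with a single Miwa shift, your leading-order bookkeeping correctly pins down $\rho$ (and identifies the equality of the two formulas for $\rho$ as a separate Hirota identity, which the paper records in the remark after the proposition), and the $i=1$ identity you display is indeed equivalent to $H(\Psi_1)=0$. The problem is that the decisive step --- actually extracting these identities from \eqref{3KPtaubilinear} --- is left undone, and the one mechanism you propose for it is wrong. Take $i=1$ with the natural choice of discrete indices $(\bm{m}+\bm{e},\bm{m}+\bm{e}_1)$ for the unprimed/primed tau functions and $t'=t-[z^{-1}]_1$: the component-2 integral over $C_r$ is regular at the origin and drops, but the component-3 integral does \emph{not} drop --- its residue is exactly what produces the third term $\widetilde{\tau}_{\bm{m}+\bm{e}+\bm{e}_1}(t)\,\widetilde{\tau}_{\bm{m}}(t-[z^{-1}]_1)$. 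The $C_R$ integral carries a fixed pair of discrete indices, so (with a single Miwa shift) it can only yield the two $z$-proportional terms, coming from the residues at $\zeta=z$ and $\zeta=\infty$ of the Cauchy kernel; if both $C_r$ integrands dropped, you would be left with a two-term relation, never the three-term one you need. Moreover, for $i=2,3$ the Miwa shift lives in $t^{(2)}$ or $t^{(3)}$, so the Cauchy kernel then sits inside one of the $C_r$ integrands, which therefore cannot be ``regular at the origin and drop'' at all. Your alternative route via applying $H$ to \eqref{3KPwavebilinear} likewise hinges on a uniqueness lemma you do not establish. So, as written, the proposal has a genuine gap precisely at the step you yourself flag as the main obstacle; the strategy is salvageable (the corrected residue computation above does give your $i=1$ identity), but it is not carried out.

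For comparison, the paper's proof is essentially your third option and avoids any new contour work: it specializes Proposition \ref{prop:lambda12relation} to $k=1$ --- for instance the relation expressing $\La_1^{-1}(S_2(\bm{m},\La_2))$ through $\De_2^* S_2$, whose scalar $(1,[0])$-coefficient is the explicit tau ratio --- applies $\La_1$, and reads off $\De_2\cdot\La_1(S_2)+\rho S_2=0$, which is exactly $H(\Psi_2)=0$; the remaining cases are analogous. The bilinear-equation content you propose to re-extract by residues is already packaged in that proposition, which is why the paper's argument is short; if you want an independent proof along your primary route, you must redo the contour bookkeeping with the correction indicated above.
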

\begin{proof}
In fact this proposition can be proved by considering $k=1$ in Proposition \ref{prop:lambda12relation} and using definitions of $\Psi_i$. Notice that by
Proposition \ref{prop:lambda12relation}, we can find
    \begin{align*}
        S_2(\bm{m}-\bm{e}_1,t,\La_2)
        =-\frac{\tau_{\bm{m}}}{\tau_{\bm{m}-\bm{e}_1}}\frac{\tau_{\bm{m}+\bm{e}_2}}{\tau_{\bm{m}+\bm{e}}}\cdot \De_2\cdot S_2.
    \end{align*}
    If further apply $\La_1$ to both sides, we have
    \begin{align}
    \De_2\cdot \La_1(S_2)+\rho S_2=0,\label{de2la1s2}
    \end{align}
    which implies $H(\Psi_2)=0$. The remaining cases are completely analogous.
\end{proof}
\begin{remark}
Relation $\frac{\widetilde{\tau}_{\bm{m}}}{\widetilde{\tau}_{\bm{m}+\bm{e}_1}}\frac{\widetilde{\tau}_{\bm{m}+\bm{e}+\bm{e}_1}}{\widetilde{\tau}_{\bm{m}+\bm{e}}}
=\p_{t_1^{(1)}}\log\frac{\widetilde{\tau}_{\bm{m}+\bm{e}}}{\widetilde{\tau}_{\bm{m}+\bm{e}_1}}$ comes from the Hirota bilinear equation of $3$--KP hierarchy
  \[
      D_1^{(1)}\tau_{\bm{m}+\bm{e}_2}\cdot\tau_{\bm{m}}=\tau_{\bm{m}+\bm{e}}\cdot \tau_{\bm{m}-\bm{e}_1},
  \]
  where $D_1^{(1)}$ is the Hirota bilinear operator with respect to $t_1^{(1)}$.
\end{remark}
Next we define the rings
\begin{alignat*}{2}
    &\E=\mathcal{B}[\La_1,\La_1^{-1},\La_2,\La_2^{-1}],\\
    &\E_{(1)}^{\pm}=\mathcal{B}[\La_2,\La_2^{-1}]((\La_1^{\mp1})), \qquad &&\E_{(1)}^{0,\pm}=\mathcal{B}((\La_1^{\mp1})),\\
    &\E_{(2)}^{\pm}=\mathcal{B}[\La_1,\La_1^{-1}]((\La_2^{\mp1})),\qquad &&\E_{(2)}^{0,\pm}=\mathcal{B}((\La_2^{\mp1})),
\end{alignat*}
where $\mathcal{B}$ is the set of the functions depending on $\bm{m}$ and $t$.
\begin{corollary}\label{corollary:HS}
    The operator multiplications of $H$ and $S_i\ (i=1,2,3)$ are given by the following identities
    \begin{align*}
        &H\cdot S_1(\bm{m},\La_1)=(\La_1-\rho(\bm{m}))\cdot S_1(\bm{m},\La_1)\cdot \De_2,\\
        &H\cdot S_2(\bm{m},\La_2)=-\rho(\bm{m}) S_2(\bm{m},\La_2)\cdot \De_1,\\
        &H\cdot S_3(\bm{m},\La_1)= (\La_1-\rho(\bm{m}))\cdot S_3(\bm{m},\La_1)\cdot \Delta_{12},\\
        &H\cdot S_3(\bm{m},\La_2^{-1})= -\rho(\bm{m}) \cdot S_3(\bm{m},\La_2^{-1})\cdot \Delta_{12}.
    \end{align*}
and
\begin{align*}
&S_1(\bm{m},\Lambda_1)\cdot \Delta_2^{-1}\cdot S_1^{-1}(\bm{m},\Lambda_1)=\iota_{\Lambda_1^{-1}}H^{-1}\cdot(\Lambda_1-\rho(\bm{m})),\\
&S_1(\bm{m},\Lambda_1)\cdot (\Delta_2^*)^{-1}\cdot S_1^{-1}(\bm{m},\Lambda_1)=-\iota_{\Lambda_1^{-1}}H^{-1}\cdot(\Lambda_1-\rho(\bm{m}))-1,\\
&S_2(\bm{m},\Lambda_2)\cdot \Delta_1^{-1}\cdot S_2^{-1}(\bm{m},\Lambda_2)=-\iota_{\Lambda_2^{-1}}H^{-1}\cdot\rho(\bm{m}),\\
&S_2(\bm{m},\Lambda_2)\cdot (\Delta_1^*)^{-1}\cdot S_2^{-1}(\bm{m},\Lambda_2)=\iota_{\Lambda_2^{-1}}H^{-1}\cdot\rho(\bm{m})-1,\\
&S_3(\bm{m},\Lambda_1)\cdot (\Delta_{12}^*)^{-1}\cdot S_3^{-1}(\bm{m},\Lambda_1)=-\iota_{\Lambda_1^{-1}}H^{-1}\cdot(\Lambda_1-\rho(\bm{m}))-1,\\
&S_3(\bm{m},\Lambda_2^{-1})\cdot \Delta_{12}^{-1}\cdot S_3^{-1}(\bm{m},\Lambda_2^{-1})=-\iota_{\Lambda_2^{-1}}H^{-1}\cdot\rho(\bm{m}).
\end{align*}
where $\iota_{\Lambda_i^{\pm1}}A$ ($i=1,2$) means expanding $A$ in $\E_{(i)}^{\mp}$ in terms of $\Lambda_i$ with coefficients belonging to $\mathcal{B}[\Lambda_{3-i},\Lambda_{3-i}^{-1}]$.
\end{corollary}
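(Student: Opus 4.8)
The plan is to handle the two displays separately: the first (the products $H\cdot S_i$) comes almost immediately from the intercomponent relations already extracted in the proof of Proposition~\ref{prop:Hpsi}, and the second (the expressions for $S_i(\,\cdot\,)^{-1}S_i^{-1}$) is then obtained by formally inverting the first and identifying the correct expansion of $H^{-1}$.

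For the first display I would start from \eqref{de2la1s2}, written as $\De_2\La_1(S_2)=-\rho S_2$, together with its three analogues for $S_1$, $S_3(\bm m,\La_1)$ and $S_3(\bm m,\La_2^{-1})$, each produced from the $k=1$ case of Proposition~\ref{prop:lambda12relation} exactly as \eqref{de2la1s2} was. The one computational device needed is the commutation $\La_1\De_2\cdot S_2=\De_2\cdot\La_1(S_2)\cdot\La_1$, valid because $\De_2$ has constant coefficients and $\La_1$, $\La_2$ commute. Feeding this into $H=\La_1\De_2+\rho$ gives $H\cdot S_2=\De_2\La_1(S_2)\La_1+\rho S_2=-\rho S_2\La_1+\rho S_2=-\rho S_2\De_1$, and the other three identities follow the same pattern, the only bookkeeping being whether $\De_2$, $\De_1$ or $\De_{12}$ ends up on the right. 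A good running check is that this right factor is exactly the operator killing the monomial carried by the matching wave function, since $\De_2(z^{m_1})=\De_1(z^{m_2})=\De_{12}(z^{m_1-m_2})=0$; each identity then reproduces $H(\Psi_i)=0$ of Proposition~\ref{prop:Hpsi} at one stroke.

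For the second display I would invert each identity. From $H\cdot S_1=(\La_1-\rho)S_1\De_2$, right-multiplying by $\De_2^{-1}$, left-multiplying by $H^{-1}$ and right-multiplying by $S_1^{-1}$ yields $S_1\De_2^{-1}S_1^{-1}=H^{-1}(\La_1-\rho)$, and likewise for the $S_2$, $S_3(\La_1)$ and $S_3(\La_2^{-1})$ cases. The real content is to pin down which expansion of the formal symbol $H^{-1}=(\La_1\De_2+\rho)^{-1}$ turns this into an honest operator equality. I would argue from the ring in which the left-hand side genuinely sits: $S_i$, $S_i^{-1}$ and the chosen inverse difference operator are honest series in definite powers of $\La_1$ or $\La_2$, so their product forces $H^{-1}$ to be expanded in the matching direction, which is precisely the prescription encoded by the expansion operators $\iota_{\La_i^{\pm1}}$. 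Finally, the starred relations need no fresh inversion: from $\De_i^{*}=-\La_i^{-1}\De_i$ and $\De_{12}^{*}=-\La_1^{-1}\La_2^{-1}\De_{12}$ one gets $\De^{-1}+(\De^{*})^{-1}=-1$ for each difference operator, and conjugating this identity by the relevant $S_i$ turns an unstarred relation into its starred partner up to the summand $-1$.

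I expect the one genuine obstacle to be this expansion step. The symbol $H^{-1}$ admits several formal expansions, and one must verify that the one dictated by the left-hand side actually reproduces the product coefficient by coefficient, with no stray terms surviving in the opposite direction; equivalently, that $\iota_{\La_i^{\pm1}}H^{-1}$ lands in the correct one of the rings $\E_{(i)}^{\pm}$. Keeping the signs and the two representations of $S_3$ (in $\La_1$ versus $\La_2^{-1}$) consistent through this step is where the care lies, whereas the first display is routine once \eqref{de2la1s2} and its analogues are in hand.
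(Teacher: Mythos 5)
Your treatment of the first display and of the unstarred inverse relations coincides with the paper's own proof: the identities for $H\cdot S_i$ are obtained, exactly as in the paper, by feeding $H=\La_1\De_2+\rho$ into \eqref{de2la1s2} and its analogues (the $k=1$ cases of Proposition \ref{prop:lambda12relation}), and the relations for $S_i\cdot\De^{-1}\cdot S_i^{-1}$ then follow by inverting those identities and choosing the expansion forced by the left-hand side, which is precisely what the paper does. Your sanity check that the right factor kills the monomial carried by the matching wave function is a correct and useful consistency test.

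Where you diverge is the starred relations, and there your shortcut carries a real caveat. Under the paper's conventions, $\De_2^{-1}=(\La_2-1)^{-1}=\sum_{j\geq 1}\La_2^{-j}$ while $(\De_2^*)^{-1}=(\La_2^{-1}-1)^{-1}=\sum_{j\geq 1}\La_2^{j}$: these are expansions of the same rational expression in opposite $\La_2$-directions, so your identity $\De^{-1}+(\De^*)^{-1}=-1$ holds only at the level of rational expressions, not as an equality of the formal series that actually define the two left-hand sides $S_1\De_2^{-1}S_1^{-1}$ and $S_1(\De_2^*)^{-1}S_1^{-1}$, which live in completions infinite in opposite $\La_2$-directions. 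Conjugating that identity therefore requires exactly the re-expansion bookkeeping you deferred: in passing from the unstarred to the starred formula, the symbol $\iota_{\La_1^{-1}}H^{-1}$ must be re-read with its $\La_2$-dependence expanded in the direction dictated by the new left-hand side, and this is where stray terms could survive if unchecked. The paper takes a different route that sidesteps this: from $H\cdot S_1$ it first records $S_1\La_2 S_1^{-1}=\iota_{\La_1^{-1}}(\La_1-\rho)^{-1}H+1$ and $S_1\La_2^{-1}S_1^{-1}=\iota_{\La_1^{-1}}(\La_1-\rho+H)^{-1}(\La_1-\rho)$ (note $\La_1-\rho+H=\La_1\La_2$, so the latter is an honest closed-form identity), and only then assembles $(\De_2^*)^{-1}$ from these, so the expansion direction stays attached to the object being expanded. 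Your argument does reach the stated formulas, but to make it watertight you should either verify the conjugated identity coefficient by coefficient in the appropriate completion, or replace the step ``conjugate $\De^{-1}+(\De^*)^{-1}=-1$'' by the paper's computation of $S_i\La_j^{\pm 1}S_i^{-1}$; as written, that single step is the only genuinely unproved point in the proposal.
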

\begin{proof}
Notice that $H(\Psi_2)=0$ is equivalent to \eqref{de2la1s2}, that is, $\De_2\cdot \La_1\cdot S_2\cdot\La_1^{-1}+\rho S_2=0$.
Next after inserting $\De_2\cdot \La_1=H-\rho$ into above relation, we can obtain $H\cdot S_2$. Similarly, one can easily obtain other cases. Next $S_1\Delta_2^{-1}S_1^{-1}$ can be directly derived from $HS_1$. As for $S_1(\Delta_2^*)^{-1}S_1^{-1}$, by the result $HS_1$ we have
$S_1\Lambda_2 S_1^{-1}=\iota_{\Lambda_1^{-1}}(\Lambda_1-\rho)^{-1}H+1$, which implies $S_1\Lambda_2^{-1} S_1^{-1}=\iota_{\Lambda_1^{-1}}(\Lambda_1-\rho+H)^{-1}(\Lambda_1-\rho)$. Based upon this, we can easily obtain $S_1(\Delta_2^*)^{-1}S_1^{-1}$. Others can be derived by similar methods.
\end{proof}
\begin{lemma}\label{Lemma:operatoractonpsi}
    For any $A\in \mathcal{Q}$ with $\mathcal{Q}\in\{\E_{(1)}^{0,\pm},\E_{(2)}^{0,\pm}\}$, if $A(\Psi_{j(\mathcal{Q})})=0$, then $A=0$, where $j(\E_{(1)}^{0,+})=1$, $j(\E_{(2)}^{0,-})=2$ and $j(\E_{(1)}^{0,-})=j(\E_{(2)}^{0,+})=3$.
\end{lemma}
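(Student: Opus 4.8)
The plan is to exploit the leading-order (triangular) structure of each wave function in the spectral parameter $z$, which forces the linear map $A\mapsto A(\Psi_{j(\mathcal{Q})})$ to be injective. The starting point is the explicit shape of the three wave functions: after factoring out the exponential--monomial prefactor one has
\begin{align*}
\Psi_1 &= z^{m_1}e^{\xi(t^{(1)},z)}\big(1+O(z^{-1})\big),\\
\Psi_2 &= z^{m_2}e^{\xi(t^{(2)},z^{-1})}\Big(\tfrac{\widetilde{\tau}_{\bm{m}+\bm{e}}}{\widetilde{\tau}_{\bm{m}}}+O(z)\Big),\\
\Psi_3 &= (-1)^{m_2}z^{m_1-m_2}e^{\xi(t^{(3)},z^{-1})}\Big(\tfrac{\widetilde{\tau}_{\bm{m}+\bm{e}_1}}{\widetilde{\tau}_{\bm{m}}}+O(z)\Big),
\end{align*}
where in each case the displayed leading coefficient ($1$, $\widetilde{\tau}_{\bm{m}+\bm{e}}/\widetilde{\tau}_{\bm{m}}$, $\widetilde{\tau}_{\bm{m}+\bm{e}_1}/\widetilde{\tau}_{\bm{m}}$) is a unit in $\mathcal{B}$. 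The essential bookkeeping is how a shift moves the monomial prefactor: $\Lambda_1^p(\Psi_1)$ carries $z^{m_1+p}$, $\Lambda_2^p(\Psi_2)$ carries $z^{m_2+p}$, while $\Lambda_1^p(\Psi_3)$ carries $z^{m_1-m_2+p}$ and $\Lambda_2^p(\Psi_3)$ carries $z^{m_1-m_2-p}$, and in every case the correction tail remains one--sided in $z$ of the same type.

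I would then treat the four rings by a single mechanism. For $A=\sum_{p\le N}b_p(\bm{m})\Lambda_1^p\in\E_{(1)}^{0,+}$ acting on $\Psi_1$, I divide $A(\Psi_1)$ by $e^{\xi(t^{(1)},z)}z^{m_1}$ and collect powers of $z$; since every tail is $O(z^{-1})$, the top power $z^{N}$ is fed only by $p=N$, with coefficient exactly $b_N(\bm{m})$. Hence $A(\Psi_1)=0$ forces $b_N=0$, and peeling off $z^{N-1},z^{N-2},\dots$ in turn yields $b_p=0$ for all $p$ by triangular descent, so $A=0$. The other three cases are identical once the correct extremal power is identified: for $\E_{(2)}^{0,-}$ on $\Psi_2$ and $\E_{(1)}^{0,-}$ on $\Psi_3$ (both of the form $\sum_{p\ge -M}$) one tracks the lowest power of $z$, whose coefficient is $b_{-M}$ times the relevant nonzero tau--ratio; for $\E_{(2)}^{0,+}$ on $\Psi_3$ (of the form $\sum_{p\le N}$, but now $\Lambda_2^p$ lowers the $z$--exponent) one again tracks the lowest power of $z$, which singles out $b_N$. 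In every instance the extremal coefficient of $A(\Psi_{j(\mathcal{Q})})$ is a unit multiple of exactly one coefficient of $A$, so vanishing propagates to all coefficients.

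The only genuine work lies in two checks. First, I must confirm that $A(\Psi_{j(\mathcal{Q})})$ is a well-defined formal series in $z$, i.e. that each power of $z$ gathers only finitely many of the $b_p$; this is exactly where the one--sidedness of the ring $\mathcal{Q}$ must match the one--sidedness of the $O(z^{\mp 1})$ tail of $\Psi_{j(\mathcal{Q})}$, making the pairing genuinely triangular rather than merely block--structured — and this is precisely why the stated pairing $j(\mathcal{Q})$ is the correct one. Second, I must verify that the multiplier in front of the extremal coefficient — a shifted tau--ratio such as $\widetilde{\tau}_{\bm{m}-M\bm{e}_2+\bm{e}}/\widetilde{\tau}_{\bm{m}-M\bm{e}_2}$ — is invertible in $\mathcal{B}$; this holds because the tau functions are nowhere vanishing, and without it the descent could stall at some step. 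Granting these two points, the triangular elimination is routine and delivers $A=0$ in all four cases, which is the assertion of the lemma.
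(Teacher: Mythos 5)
Your proof is correct and is essentially the paper's argument in expanded form: the paper deduces $A\cdot S_{j}=0$ from $A(\Psi_{j})=0$ and concludes $A=0$ by invertibility of the wave operator, and your coefficient-by-coefficient triangular descent in powers of $z$ (with the unit leading tau-ratio) is exactly the unpacked version of those two steps. No gap; the two checks you flag (one-sidedness matching and invertibility of the tau-ratios) are precisely what make the paper's one-line proof work.
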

\begin{proof}
Firstly when $\mathcal{Q}=\E_{(1)}^{0,+}$, then by $A(\Psi_1)=0$ we can know $A\cdot S_1=0$, which implies $A=0$. Others can be similarly proved.
\end{proof}
\begin{proposition}\label{prop: sumdecom}
    The following direct sum decomposition holds,
    \begin{align*}
        \E_{(1)}^{\pm}=\E_{(1)}^{0,\pm}\oplus\E_{(1)}^{\pm}H,\qquad \E_{(2)}^{\pm}=\E_{(2)}^{0,\pm}\oplus\E_{(2)}^{\pm}H,
    \end{align*}
    where $\E_{(\al)}^{\pm}H$ is the left ideal of $\E_{(\al)}^{\pm}$ generated by $H$ for $\al =1,2$.
\end{proposition}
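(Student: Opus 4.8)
The plan is to verify the two defining features of a direct sum separately: that $\E_{(\alpha)}^{0,\pm}+\E_{(\alpha)}^{\pm}H$ exhausts $\E_{(\alpha)}^{\pm}$ (spanning) and that $\E_{(\alpha)}^{0,\pm}\cap\E_{(\alpha)}^{\pm}H=\{0\}$ (directness). The four cases are parallel, so I would carry out $\E_{(1)}^{+}$ in detail and indicate the modifications for the others.

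For spanning the idea is to eliminate the ``extra'' shift variable modulo the left ideal; in $\E_{(1)}^{\pm}$ this variable is $\La_2$, and the engine is Corollary \ref{corollary:HS}. In the case $\E_{(1)}^{+}$ it gives $H\cdot S_1(\bm{m},\La_1)=(\La_1-\rho(\bm{m}))\cdot S_1(\bm{m},\La_1)\cdot\Delta_2$, that is $H=(\La_1-\rho)S_1\Delta_2 S_1^{-1}$. Since $(\La_1-\rho)$ and $S_1$ are units of $\E_{(1)}^{0,+}$, right multiplication by $S_1$ is a bijection of $\E_{(1)}^{+}$ carrying $\E_{(1)}^{0,+}$ onto itself and $\E_{(1)}^{+}H$ onto $\E_{(1)}^{+}\Delta_2$, so the claim is equivalent to the elementary decomposition $\E_{(1)}^{+}=\E_{(1)}^{0,+}\oplus\E_{(1)}^{+}\Delta_2$. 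The latter I would obtain by the substitution $\La_2\equiv 1$, reducing each monomial via $\La_2^{k}=1+(\La_2^{k-1}+\cdots+1)\Delta_2$ and $\La_2^{-k}=1-(\cdots)\Delta_2$: every power of $\La_2$ collapses into $\E_{(1)}^{0,+}$ modulo $\E_{(1)}^{+}\Delta_2$, and collecting the constant parts yields $A_0\in\E_{(1)}^{0,+}$ together with a cofactor $B\in\E_{(1)}^{+}$. For the remaining three spaces one uses instead $H S_2=-\rho S_2\Delta_1$ (for $\E_{(2)}^{-}$), $H S_3(\bm{m},\La_1)=(\La_1-\rho)S_3(\bm{m},\La_1)\Delta_{12}$ (for $\E_{(1)}^{-}$) and $H S_3(\bm{m},\La_2^{-1})=-\rho S_3(\bm{m},\La_2^{-1})\Delta_{12}$ (for $\E_{(2)}^{+}$), reducing respectively modulo $\Delta_1$ or $\Delta_{12}$.

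Directness is the cleaner half and I would treat all four cases at once. Suppose $A\in\E_{(\alpha)}^{0,\pm}\cap\E_{(\alpha)}^{\pm}H$ and write $A=BH$ with $B\in\E_{(\alpha)}^{\pm}$. Letting $\Psi_j$ be the wave function attached to $\E_{(\alpha)}^{0,\pm}$ by Lemma \ref{Lemma:operatoractonpsi} and using $H(\Psi_j)=0$ from Proposition \ref{prop:Hpsi}, the module structure gives $A(\Psi_j)=B\big(H(\Psi_j)\big)=0$. Since $A$ already lies in $\E_{(\alpha)}^{0,\pm}$, the faithfulness in Lemma \ref{Lemma:operatoractonpsi} forces $A=0$, so the intersection is trivial.

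The hard part will be the bookkeeping inside the completed rings: ensuring the reduction of the spanning step converges to genuine elements $A_0\in\E_{(\alpha)}^{0,\pm}$ and $B\in\E_{(\alpha)}^{\pm}$ rather than to series expanded in the wrong direction. This is precisely where the completions $((\La_1^{\mp1}))$, $((\La_2^{\mp1}))$ and the expansion conventions fixed in Section 2 enter: one must check that each fixed $\La_1$-degree (resp. $\La_2$-degree) receives only finitely many contributions and, crucially, that the target $\Delta_2$, $\Delta_1$ or $\Delta_{12}$ generates a \emph{proper} left ideal in the relevant completion, equivalently that $H$ is not invertible there. The two cases reducing to $\Delta_{12}$ are the most delicate, because $\Delta_{12}^{-1}$ admits expansions in opposite directions and only the one dictated by the completion keeps the cofactor $B$ inside $\E_{(\alpha)}^{\pm}$; this same compatibility is what legitimises the associativity $A(\Psi_j)=B(H(\Psi_j))$ used in the directness argument. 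Verifying it degree by degree is the main technical obstacle.
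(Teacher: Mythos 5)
Your directness half is exactly the paper's argument: write an element of the intersection as $BH$, use $H(\Psi_j)=0$ from Proposition \ref{prop:Hpsi} and the faithfulness of Lemma \ref{Lemma:operatoractonpsi} to conclude it vanishes. For the spanning half you take a genuinely different route. The paper never conjugates: it shows $\E_{(1)}^{+}\subseteq\E_{(1)}^{0,+}\oplus\E_{(1)}^{+}H$ by induction on the $\La_2$--power of a monomial, absorbing one factor of $\La_2$ at a time through the identity $\La_2=\La_1^{-1}(H-\rho)+1$ (and its analogue for negative powers), so only the definition $H=\La_1\De_2+\rho$ enters. You instead invoke Corollary \ref{corollary:HS} to write $H=(\La_1-\rho)S_1\De_2S_1^{-1}$ (resp.\ $-\rho S_2\De_1S_2^{-1}$, $(\La_1-\rho)S_3(\bm{m},\La_1)\De_{12}S_3(\bm{m},\La_1)^{-1}$, $-\rho S_3(\bm{m},\La_2^{-1})\De_{12}S_3(\bm{m},\La_2^{-1})^{-1}$) and transport the problem, by right multiplication with the invertible factor, to the elementary decomposition modulo $\De_2$, $\De_1$ or $\De_{12}$; this is legitimate since the conjugating factors are units of the corresponding $\E_{(\al)}^{0,\pm}$, and it buys uniformity (all four cases reduce to one trivial reduction), at the price of using the analytic input of Corollary \ref{corollary:HS} where the paper's induction is self-contained. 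Two remarks on your last paragraph: the convergence issue you flag is settled once $\E_{(\al)}^{\pm}$ is read, as the paper's own proof implicitly does, with Laurent-\emph{polynomial} dependence on the complementary shift operator (bounded $\La_2$-- resp.\ $\La_1$--degrees); then reduction modulo $\De_2$, $\De_1$, $\De_{12}$ shifts the distinguished degree by a bounded amount, each degree receives finitely many contributions, and $\De_{12}^{-1}$ is never needed, so the two $\De_{12}$ cases are no harder than the others. (This reading is in fact forced: if unbounded degrees in the complementary variable were allowed, $H=\rho\,(1+\rho^{-1}\La_1\De_2)$ would be invertible in $\E_{(1)}^{-}$ and the statement itself would fail.) Finally, properness of the left ideal is not needed for the spanning step and follows anyway from your own directness argument, so it is not an additional obligation.
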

\begin{proof}
Here we only prove $\E_{(1)}^{+}=\E_{(1)}^{0,+}\oplus\E_{(1)}^{+}H$, since others are almost the same. Firstly given $A\in\E_{(1)}^{0,+}\oplus\E_{(1)}^{+}H$, we can know by Proposition \ref{prop:Hpsi} that $A(\Psi_1)=0$, which implies $A=0$ by Lemma \ref{Lemma:operatoractonpsi}. So $\E_{(1)}^{0,+}\oplus\E_{(1)}^{+}H$ is a direct sum. Next
 we just need to prove that
$\E_{(1)}^{+}\subseteq\E_{(1)}^{0,+}\oplus\E_{(1)}^{+}H$.
That is,  show that
\begin{align}
 \{\La_1^i\La_2^j\mid i\leq M,\ -N_1\leq j\leq N_2\}\subseteq \E_{(1)}^{0,+}\oplus\E_{(1)}^+H\label{esubseteqeoeh}
\end{align}
for some positive integers $N_1, N_2$ and $M$. Since $\La_1^i\in \E_{(1)}^0$ for $i\leq M$,
we next make induction on $j$ to complete the proof. Assuming \eqref{esubseteqeoeh} holds for $j>0$, we will prove it for $j+1$, i.e. $\La_2\cdot\La_1^i\La_2^{j}\in\E_{(1)}^0\oplus\E_{(1)}H$.
By hypothesis $\La_1^i\La_2^{j}=\sum_{l\leq N}a_l\La_1^l+FH$ for $a_l\in \mathcal{B}$ and $F\in \E_{(1)}^+$,
we have
$$\La_2\cdot\La_1^i\La_2^{j}=\sum_{l\leq N}a_l(\bm{m}+e_2)\cdot\La_1^{l-1}\cdot\big(H-\rho+\La_1\big)+\La_2\cdot FH \in\E_{(1)}^0\oplus\E_{(1)}H,
$$
where we have used $\La_2=\La_1^{-1}\cdot(H-\rho(\bm{m}))+1$.
While the case for $j<0$ is similar. So we finish the proof.
\end{proof}

Due to Proposition \ref{prop: sumdecom}, we can naturally define the following projections
$$\pi_\al^{\pm}:\E_{(\al)}^{\pm}\to \E_{(\al)}^{0,\pm}, \quad\al =1,2.$$
In order to give formulas to compute $\pi_\al^{\pm}$, we need the lemma below.
\begin{lemma}\label{Lemma:La_i}
  If denote $\pi$ as one of $\pi_{\alpha}^{\pm}$ for $\alpha=1,2$, then
  \begin{align*}
     \pi(\La_i^{\pm (k+1)})=\La_i^{\pm 1}(\pi(\La_i^{\pm k}))\cdot\pi(\La_i^{\pm 1}).
  \end{align*}
 \end{lemma}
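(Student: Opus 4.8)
The plan is to read the identity directly off the direct--sum splitting $\E_{(\al)}^{\pm}=\E_{(\al)}^{0,\pm}\oplus\E_{(\al)}^{\pm}H$ of Proposition~\ref{prop: sumdecom}, with no induction needed. Throughout write $\pi=\pi_\al^{\pm}$ and let $\delta=\pm1$ be the sign appearing in the statement; the argument is uniform in $\al\in\{1,2\}$, in the choice of sign, and in $i\in\{1,2\}$, so I would carry it out once for generic data. Two structural features of the splitting are the engine: $\E_{(\al)}^{0,\pm}=\mathcal{B}((\La_\al^{\mp1}))$ is a multiplicatively closed subring of $\E_{(\al)}^{\pm}$, while $\E_{(\al)}^{\pm}H$ is a left ideal.

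First I would record that $\pi$ is left $\E_{(\al)}^{0,\pm}$--linear, i.e. $\pi(cA)=c\,\pi(A)$ for $c\in\E_{(\al)}^{0,\pm}$ and $A\in\E_{(\al)}^{\pm}$. Writing $A=\pi(A)+A'H$ with $A'\in\E_{(\al)}^{\pm}$ gives $cA=c\,\pi(A)+(cA')H$, where $c\,\pi(A)\in\E_{(\al)}^{0,\pm}$ (subring) and $(cA')H\in\E_{(\al)}^{\pm}H$ (left ideal); uniqueness of the decomposition then forces $\pi(cA)=c\,\pi(A)$.

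Next comes the computation. Decompose $\La_i^{\delta k}=\pi(\La_i^{\delta k})+RH$ with $R\in\E_{(\al)}^{\pm}$, multiply on the left by $\La_i^{\delta}$, and move $\La_i^{\delta}$ past the coefficient part with the elementary shift rule $\La_i^{\delta}c=\La_i^{\delta}(c)\,\La_i^{\delta}$:
\[
\La_i^{\delta(k+1)}=\La_i^{\delta}\,\pi(\La_i^{\delta k})+\La_i^{\delta}RH=\La_i^{\delta}\big(\pi(\La_i^{\delta k})\big)\cdot\La_i^{\delta}+\La_i^{\delta}RH.
\]
Here $\La_i^{\delta}(\pi(\La_i^{\delta k}))$ again lies in $\E_{(\al)}^{0,\pm}$, since shifting the coefficients of a series in $\La_\al^{\mp1}$ keeps it in $\mathcal{B}((\La_\al^{\mp1}))$. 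Applying $\pi$, the term $\La_i^{\delta}RH\in\E_{(\al)}^{\pm}H$ is annihilated, and the left $\E_{(\al)}^{0,\pm}$--linearity established above pulls out the coefficient factor, yielding $\pi(\La_i^{\delta(k+1)})=\La_i^{\delta}(\pi(\La_i^{\delta k}))\cdot\pi(\La_i^{\delta})$, which is the assertion.

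I do not expect a genuine obstacle here. The only points needing care are confirming that $\E_{(\al)}^{0,\pm}$ is truly a multiplicatively closed subring of $\E_{(\al)}^{\pm}$ (so that the left--linearity step is legitimate) and respecting the convention that $\La_i^{\delta}(\cdot)$ denotes the shift acting on coefficients rather than operator multiplication, so that the commutation identity $\La_i^{\delta}c=\La_i^{\delta}(c)\,\La_i^{\delta}$ is applied with the correct bookkeeping.
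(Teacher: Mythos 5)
Your proposal is correct and follows essentially the same route as the paper: decompose $\La_i^{\pm k}=\pi(\La_i^{\pm k})+AH$, multiply on the left by $\La_i^{\pm1}$, commute the shift past the coefficients, and then absorb everything except $\La_i^{\pm1}(\pi(\La_i^{\pm k}))\cdot\pi(\La_i^{\pm1})$ into the ideal $\E_{(\al)}^{\pm}H$. The only cosmetic difference is that you package the final step as left $\E_{(\al)}^{0,\pm}$--linearity of $\pi$, while the paper simply substitutes $\La_i=\pi(\La_i)+BH$ and invokes uniqueness of the decomposition; these are the same argument.
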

 \begin{proof}
 By the definition of projection $\pi$, we have $\La_i^k=\pi(\La_i^k)+AH$. Multiply both sides of the equation by $\La_i$, we can get $\La_i^{k+1}=\La_i(\pi(\La_i^k))\cdot\La_i+\La_{i}AH$. After replacing $\La_i$ with $\pi(\La_i)+BH$, the formula can be written as $\La_i^{k+1}=\La_i(\pi(\La_i^k))\cdot(\pi(\La_i)+BH)+\La_{i}AH$, i.e $\La_i^{k+1}=\La_i(\pi(\La_i^k))\cdot\pi(\La_i)+CH$. The proof for $\La_i^{-k}$ is the same. So we finish the proof.
 \end{proof}
 \begin{lemma}\label{Lemma:La_12}
 Projections $\pi_i^{\pm}$ on $\Lambda_j^{\pm1}$ are given as follows
  \begin{align*}
 &\pi_1^{\pm}(\La_2)=1-\La_1^{-1}\rho(\bm{m}),\quad \pi_1^{\pm}(\La_2^{-1})=1-\iota_{\Lambda_1^{\mp1}}
 \big(\La_1-\rho(\bm{m}-\bm{e}_2)\big)^{-1}\cdot\rho(\bm{m}-\bm{e}_2),\\
 &\pi_2^\pm(\La_1)=-\iota_{\Lambda_2^{\mp1}}(\La_2-1)^{-1}\cdot\rho(\bm{m}),
 \quad\pi_2^{\pm}(\La_1^{-1})=-\rho(\bm{m}-\bm{e}_1)^{-1}\Delta_2.
  \end{align*}
 \end{lemma}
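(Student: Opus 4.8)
The plan is to run everything off the single defining relation
$H=\Lambda_1\Delta_2+\rho(\bm{m})=\Lambda_1\Lambda_2-\Lambda_1+\rho(\bm{m})$
together with the directness of the decompositions in Proposition \ref{prop: sumdecom}. By that proposition, for any $X\in\E_{(\alpha)}^{\pm}$ the element $\pi_\alpha^{\pm}(X)$ is the unique member of $\E_{(\alpha)}^{0,\pm}$ for which $X-\pi_\alpha^{\pm}(X)$ lies in the left ideal $\E_{(\alpha)}^{\pm}H$. So for each of the four generators it suffices to exhibit one splitting $X=Q+FH$ with $Q\in\E_{(\alpha)}^{0,\pm}$ and $F\in\E_{(\alpha)}^{\pm}$, where $H$ must be kept on the right since the ideal is a left ideal; uniqueness then forces $\pi_\alpha^{\pm}(X)=Q$. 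The whole computation reduces to commuting a single shift past the coefficient $\rho(\bm{m})$ in $H$ and, in two of the cases, expanding a formal inverse in the correct direction.

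First I would dispose of the two ``finite'' cases, where no infinite series arises and the answer is therefore the same for both signs. For $\pi_1^{\pm}(\Lambda_2)$ I would simply rewrite the defining relation as $\Lambda_2=\big(1-\Lambda_1^{-1}\rho(\bm{m})\big)+\Lambda_1^{-1}H$, which is exactly the identity already used in the proof of Proposition \ref{prop: sumdecom}; the bracket is a Laurent polynomial in $\Lambda_1$, hence lies in $\E_{(1)}^{0,\pm}$, while $\Lambda_1^{-1}H\in\E_{(1)}^{\pm}H$, giving $\pi_1^{\pm}(\Lambda_2)=1-\Lambda_1^{-1}\rho(\bm{m})$. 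For $\pi_2^{\pm}(\Lambda_1^{-1})$ I would compute $\Lambda_1^{-1}H=\Delta_2+\rho(\bm{m}-\bm{e}_1)\Lambda_1^{-1}$, using $\Lambda_1^{-1}\rho(\bm{m})=\rho(\bm{m}-\bm{e}_1)\Lambda_1^{-1}$, and solve for $\Lambda_1^{-1}$ to obtain $\Lambda_1^{-1}=-\rho(\bm{m}-\bm{e}_1)^{-1}\Delta_2+\rho(\bm{m}-\bm{e}_1)^{-1}\Lambda_1^{-1}H$; the first summand is in $\E_{(2)}^{0,\pm}$ and the second in the ideal, so $\pi_2^{\pm}(\Lambda_1^{-1})=-\rho(\bm{m}-\bm{e}_1)^{-1}\Delta_2$.

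Next I would treat the two cases in which a formal inverse must be expanded, which is precisely where the superscript $\pm$ enters through $\iota_{\Lambda_i^{\mp1}}$. For $\pi_2^{\pm}(\Lambda_1)$ I would use $\Lambda_1\Delta_2=\Delta_2\Lambda_1$ to get $\Delta_2^{-1}H=\Lambda_1+\Delta_2^{-1}\rho(\bm{m})$, hence $\Lambda_1=\Delta_2^{-1}H-\Delta_2^{-1}\rho(\bm{m})$; expanding $\Delta_2^{-1}=(\Lambda_2-1)^{-1}$ by $\iota_{\Lambda_2^{\mp1}}$ places $\iota_{\Lambda_2^{\mp1}}\Delta_2^{-1}\cdot H$ in $\E_{(2)}^{\pm}H$ and leaves the projection $-\iota_{\Lambda_2^{\mp1}}(\Lambda_2-1)^{-1}\rho(\bm{m})$. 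For $\pi_1^{\pm}(\Lambda_2^{-1})$ I would compute $\Lambda_2^{-1}H=\Lambda_1-\big(\Lambda_1-\rho(\bm{m}-\bm{e}_2)\big)\Lambda_2^{-1}$, solve for $\Lambda_2^{-1}$ as $\big(\Lambda_1-\rho(\bm{m}-\bm{e}_2)\big)^{-1}\Lambda_1-\big(\Lambda_1-\rho(\bm{m}-\bm{e}_2)\big)^{-1}\Lambda_2^{-1}H$, and expand $\big(\Lambda_1-\rho(\bm{m}-\bm{e}_2)\big)^{-1}$ by $\iota_{\Lambda_1^{\mp1}}$; the $\E_{(1)}^{0,\pm}$-component is $\iota_{\Lambda_1^{\mp1}}\big(\Lambda_1-\rho(\bm{m}-\bm{e}_2)\big)^{-1}\Lambda_1$, and splitting $\Lambda_1=\big(\Lambda_1-\rho(\bm{m}-\bm{e}_2)\big)+\rho(\bm{m}-\bm{e}_2)$ rewrites it in the closed form recorded in the statement.

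The main obstacle is bookkeeping rather than conceptual. I must keep $H$ on the right throughout, so that each remainder genuinely sits in the left ideal $\E_{(\alpha)}^{\pm}H$; I must track the shifts $\bm{m}\mapsto\bm{m}-\bm{e}_i$ produced when a $\Lambda_i^{-1}$ is commuted past $\rho(\bm{m})$; and, most importantly, I must choose the expansion direction of the formal inverses $\Delta_2^{-1}$ and $\big(\Lambda_1-\rho(\bm{m}-\bm{e}_2)\big)^{-1}$ to be consistent with the completion $\E_{(\alpha)}^{\pm}$ in which we work. It is exactly this choice that is encoded by $\iota_{\Lambda_i^{\mp1}}$ and that distinguishes $\pi_\alpha^{+}$ from $\pi_\alpha^{-}$; once it is fixed, uniqueness from Proposition \ref{prop: sumdecom} closes each case at once.
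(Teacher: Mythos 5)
Your strategy is exactly the paper's: all four formulas are read off from the single relation $H=\La_1\De_2+\rho(\bm{m})$, rewritten so that the generator equals an element of $\E_{(\alpha)}^{0,\pm}$ plus a left multiple of $H$, with uniqueness supplied by the direct sum decomposition of Proposition \ref{prop: sumdecom}; the three identities you use ($\La_2=\La_1^{-1}(H-\rho(\bm{m}))+1$, $\La_2^{-1}H=\La_1-(\La_1-\rho(\bm{m}-\bm{e}_2))\La_2^{-1}$, $\La_1^{-1}H=\De_2+\rho(\bm{m}-\bm{e}_1)\La_1^{-1}$, together with $\De_2^{-1}(H-\rho)=\La_1$) are precisely the ones the paper's proof invokes, so for $\pi_1^{\pm}(\La_2)$, $\pi_2^{\pm}(\La_1)$ and $\pi_2^{\pm}(\La_1^{-1})$ your argument is correct and essentially identical.

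There is, however, one concrete slip in your last case, namely the claim that the splitting $\La_1=(\La_1-\rho(\bm{m}-\bm{e}_2))+\rho(\bm{m}-\bm{e}_2)$ ``rewrites it in the closed form recorded in the statement.'' Your (correct) rearrangement gives
\begin{align*}
\pi_1^{\pm}(\La_2^{-1})=\iota_{\La_1^{\mp1}}\big(\La_1-\rho(\bm{m}-\bm{e}_2)\big)^{-1}\La_1
=1+\iota_{\La_1^{\mp1}}\big(\La_1-\rho(\bm{m}-\bm{e}_2)\big)^{-1}\cdot\rho(\bm{m}-\bm{e}_2),
\end{align*}
with a plus sign, whereas the statement records a minus sign; the two differ (check $\rho\equiv c$ constant: the projection must be $\sum_{j\geq0}c^{j}\La_1^{-j}$ for the $+$ case, which is the plus version, and by directness of the decomposition only one candidate can be correct). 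Equivalently, the stated formula is recovered only if one writes $1-\iota_{\La_1^{\mp1}}\big(\rho(\bm{m}-\bm{e}_2)-\La_1\big)^{-1}\cdot\rho(\bm{m}-\bm{e}_2)$, i.e.\ the sign belongs inside the inverse. So either the lemma as printed carries a sign typo (which then propagates to the formula for $\pi_1^{\pm}(\La_2^{-k})$ in Proposition \ref{prop:La_12 }), or your identification is off by a sign; in any event you assert the match without carrying out the one-line verification, and that verification in fact fails for the formula as written. You should either flag the discrepancy explicitly or present the result in the form $\iota_{\La_1^{\mp1}}(\La_1-\rho(\bm{m}-\bm{e}_2))^{-1}\La_1$ and note how it relates to the printed expression.
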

 \begin{proof}
 Notice that $\pi_1^{\pm}(\La_2)$ and $\pi_2^\pm(\La_1)$ can be directly derived from $H=\Lambda_1\Delta_2+\rho$. For $\pi_1^{\pm}(\La_2^{-1})$, it comes from $\Lambda_2^{-1}H=\Lambda_1-(\Lambda_1-\rho(\bm{m}-\bm{e}_2))\Lambda_2^{-1}$. As for $\pi_2^{\pm}(\La_1^{-1})$, we can get it by $\Lambda_1^{-1}H=\Delta_2+\rho(\bm{m}-\bm{e}_1)\Lambda_1^{-1}$.
 \end{proof}
 \begin{proposition}\label{prop:La_12 } For $k>0$,
     \begin{align*}
       &\pi_1^{\pm}(\La_2^k)=\prod_{j=1}^k\Big(1-\La_1^{-1}\rho\big(\bm{m}+(j-1)\bm{e}_2\big)\Big),\\
       &\pi_1^{\pm}(\La_2^{-k})=\prod_{j=1}^k\Big(1-\iota_{\Lambda_1^{\mp1}}
 \big(\La_1-\rho(\bm{m}-j\bm{e}_2)\big)^{-1}\cdot\rho(\bm{m}-j\bm{e}_2)\Big),\\
       &\pi_2^{\pm}(\La_1^k)=(-1)^k\prod_{j=1}^k\Big(\iota_{\Lambda_2^{\mp1}}(\La_2-1)^{-1}\cdot\rho(\bm{m}+(j-1)\bm{e}_1)\Big),\\
       &\pi_2^{\pm}(\La_1^{-k})=(-1)^k\prod_{j=1}^k\Big(\rho(\bm{m}-j\bm{e}_1)^{-1}\cdot\Delta_2\Big),
     \end{align*}
     where $\prod_{j=1}^kA_j=A_k\cdots A_2A_1$.
 \end{proposition}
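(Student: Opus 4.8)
The plan is to establish all four formulas by induction on $k$, using the recursion of Lemma \ref{Lemma:La_i} for the inductive step and the base values of Lemma \ref{Lemma:La_12} for $k=1$. The first thing to notice is that the product convention $\prod_{j=1}^k A_j=A_k\cdots A_1$ is precisely adapted to that recursion: in $\pi(\La_i^{\pm(k+1)})=\La_i^{\pm1}(\pi(\La_i^{\pm k}))\cdot\pi(\La_i^{\pm1})$ the newly introduced factor $\pi(\La_i^{\pm1})$ sits on the right, while the coefficient-shifted old expression $\La_i^{\pm1}(\pi(\La_i^{\pm k}))$ sits on the left. Thus, if I can show that the shift $\La_i^{\pm1}(\cdot)$ carries the $j$-th factor of the claimed product to the $(j+1)$-th factor, the recursion telescopes exactly into the asserted product.

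The key structural remark is that the operation $A\mapsto\La_i^{\pm1}(A)$, meaning ``apply $\La_i^{\pm1}$ to the coefficients of $A$'', is a ring automorphism of the relevant operator ring $\E_{(\al)}^{0,\pm}$: it acts on coefficient functions by $f(\bm{m})\mapsto f(\bm{m}\pm\bm{e}_i)$ and fixes the retained shift operator $\La_{3-i}$, with which it commutes. In particular it distributes over operator products, so it suffices to track its effect on a single factor. For $\pi_1^{\pm}(\La_2^k)$, whose factors are $A_j=1-\La_1^{-1}\rho(\bm{m}+(j-1)\bm{e}_2)$, normal ordering gives $A_j=1-\rho(\bm{m}-\bm{e}_1+(j-1)\bm{e}_2)\La_1^{-1}$, so applying $\La_2$ to coefficients shifts $\bm{m}\mapsto\bm{m}+\bm{e}_2$ and produces $\La_2(A_j)=1-\rho(\bm{m}-\bm{e}_1+j\bm{e}_2)\La_1^{-1}=A_{j+1}$. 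Hence, assuming $\pi_1^{\pm}(\La_2^k)=A_k\cdots A_1$, Lemma \ref{Lemma:La_i} yields $\pi_1^{\pm}(\La_2^{k+1})=\La_2(A_k\cdots A_1)\cdot A_1=(A_{k+1}\cdots A_2)\cdot A_1=A_{k+1}\cdots A_1$, with $\pi_1^{\pm}(\La_2)=A_1$ the base case from Lemma \ref{Lemma:La_12}. This is exactly the claimed formula at level $k+1$.

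The remaining three cases run identically. For $\pi_1^{\pm}(\La_2^{-k})$ the relevant shift is $\La_2^{-1}(\cdot)$ and one checks $\La_2^{-1}(D_j)=D_{j+1}$ for $D_j=1-\iota_{\La_1^{\mp1}}(\La_1-\rho(\bm{m}-j\bm{e}_2))^{-1}\rho(\bm{m}-j\bm{e}_2)$; for $\pi_2^{\pm}(\La_1^{\pm k})$ the shift is $\La_1^{\pm1}(\cdot)$ acting on factors built from $\iota_{\La_2^{\mp1}}(\La_2-1)^{-1}\rho(\cdots)$ or $\rho(\cdots)^{-1}\De_2$. The global signs $(-1)^k$ arise because the base factor $\pi_2^{\pm}(\La_1^{\pm1})$ already carries a minus sign in Lemma \ref{Lemma:La_12}, so the induction accrues one factor of $-1$ per step. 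In every case the telescoping is the same one-line manipulation as above.

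The one point requiring genuine care --- and the main obstacle --- is verifying the shift identity sending the $j$-th factor to the $(j+1)$-th factor for the two cases whose factors contain a resolvent-type expansion, namely $(\La_1-\rho(\bm{m}-j\bm{e}_2))^{-1}$ in $\pi_1^{\pm}(\La_2^{-k})$ and $(\La_2-1)^{-1}$ in $\pi_2^{\pm}(\La_1^{k})$. Here I must confirm that the coefficient-shift automorphism commutes with the formal expansion $\iota_{\La_i^{\mp1}}$: since $\La_1^{\pm1}(\cdot)$ shifts only $m_1$ while $\iota_{\La_2^{\mp1}}$ expands in $\La_2$ (and symmetrically for the other case), the two operations act on independent variables and commute, so the shift can be passed through the $\iota$-expansion and applied directly to the argument of $\rho$, producing $\rho(\bm{m}-(j+1)\bm{e}_2)$ or $\rho(\bm{m}+j\bm{e}_1)$ as needed. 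Once this commutation is secured, each inductive step is routine and the four formulas follow.
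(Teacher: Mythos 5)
Your proof is correct and follows exactly the route the paper intends: the proposition is stated as an immediate consequence of Lemma \ref{Lemma:La_12} (the $k=1$ base cases) and the recursion of Lemma \ref{Lemma:La_i}, with the induction telescoping because the coefficient shift $\La_i^{\pm1}(\cdot)$ is a ring automorphism sending the $j$-th factor to the $(j+1)$-th. Your verification that the shift commutes with the $\iota$-expansions (since the shift and the expansion involve independent discrete variables) is the right point to check, and it closes the argument.
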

\section{3--KP hierarchy by Lax operators}
In this section, we will introduce Lax operators of 3--KP hierarchy and obtain corresponding Lax equations from evolution equations of wave operators. Also we discuss evolution equations of $H$. Now let us introduce Lax operators
\begin{align*}
&L_1(\bm{m},\Lambda_1)=S_1(\bm{m},\Lambda_1)\cdot \Lambda_1\cdot S_1^{-1}(\bm{m},\Lambda_1)=\La_1+u_0^{(1)}(\bm{m})+u_1^{(1)}(\bm{m})\La_1^{-1}+\cdots,\\
&L_2(\bm{m},\Lambda_2)=S_2(\bm{m},\Lambda_2)\cdot \Lambda_2^{-1}\cdot S_2^{-1}(\bm{m},\Lambda_2)=u_{-1}^{(2)}(\bm{m})\La_2^{-1}+u_0^{(2)}(\bm{m})+u_1^{(2)}(\bm{m})\La_2+\cdots,\\
&L_3(\bm{m},\Lambda_1)=S_3(\bm{m},\Lambda_1)\cdot \Lambda_1^{-1}\cdot S_3^{-1}(\bm{m},\Lambda_1)=u_{-1}^{(3)}(\bm{m})\La_1^{-1}+u_0^{(3)}(\bm{m})+\cdots,\\
&L_3(\bm{m},\Lambda_2^{-1})=S_2(\bm{m},\Lambda_2^{-1})\cdot \Lambda_2\cdot S_2^{-1}(\bm{m},\Lambda_2^{-1})=\widetilde{u}_{1}^{(3)}(\bm{m})\La_2+\widetilde{u}_0^{(3)}(\bm{m})+\cdots.
\end{align*}
Then we have the following corollary.
\begin{corollary}
The wave functions $\Psi_i$ satisfy the following relations
\begin{align*}
&L_1(\bm{m},\Lambda_1)\Big(\Psi_1(\bm{m},z)\Big)=z\Psi_1(\bm{m},z),\quad L_2(\bm{m},\Lambda_2)\Big(\Psi_2(\bm{m},z)\Big)=z^{-1}\Psi_2(\bm{m},z),\\
&L_3(\bm{m},\Lambda_1)\Big(\Psi_3(\bm{m},z)\Big)=L_3(\bm{m},\Lambda_2^{-1})
\Big(\Psi_3(\bm{m},z)\Big)=z^{-1}\Psi_3(\bm{m},z),
\end{align*}
and for $i=1,2,3$,
\begin{align*}
&\partial_{t_k^{(1)}}\Psi_i(\bm{m},z)=B_k^{(1)}(\bm{m},\Lambda_1)\Big(\Psi_i(\bm{m},z)\Big),
\quad \partial_{t_k^{(2)}}\Psi_i(\bm{m},z)=B_k^{(2)}(\bm{m},\Lambda_2)\Big(\Psi_i(\bm{m},z)\Big),\\
&\partial_{t_k^{(3)}}\Psi_i(\bm{m},z)=B_k^{(3)}(\bm{m},\Lambda_1)\Big(\Psi_i(\bm{m},z)\Big)
=B_k^{(3)}(\bm{m},\Lambda_2^{-1})\Big(\Psi_i(\bm{m},z)\Big),
\end{align*}
where $B_k^{(j)}$ ($j=1,2,3$) are defined by
\begin{align*}
&B_k^{(1)}(\bm{m},\Lambda_1)=\Big(L_1^k(\bm{m},\Lambda_1)\Big)_{1,\geq 0},\quad
B_k^{(2)}(\bm{m},\Lambda_2)=\Big(L_2^k(\bm{m},\Lambda_2)\Big)_{\Delta_2^*,\geq 1},\\
&B_k^{(3)}(\bm{m},\Lambda_1)=\Big(L_3^k(\bm{m},\Lambda_1)\Big)_{1,<0},\quad B_k^{(3)}(\bm{m},\Lambda_2^{-1})=\Big(L_2^k(\bm{m},\Lambda_2^{-1})\Big)_{\Delta_2,\geq 1}.
\end{align*}

\end{corollary}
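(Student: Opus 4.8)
The plan is to read both families of relations off from the wave-operator factorizations
$\Psi_1=S_1\,e^{\xi(t^{(1)},\Lambda_1)}(z^{m_1})$, $\Psi_2=S_2\,e^{\xi(t^{(2)},\Lambda_2^{-1})}(z^{m_2})$ and
$\Psi_3=S_3(\bm{m},\Lambda_1)\,e^{\xi(t^{(3)},\Lambda_1^{-1})}(z^{m_1-m_2})=S_3(\bm{m},\Lambda_2^{-1})\,e^{\xi(t^{(3)},\Lambda_2)}(z^{m_1-m_2})$,
together with the evolution equations of Proposition~\ref{prop:evolutionS} and the $H$-machinery of Proposition~\ref{prop:Hpsi}, Proposition~\ref{prop: sumdecom} and Corollary~\ref{corollary:HS}.

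First I would settle the spectral relations. Since $L_1=S_1\Lambda_1 S_1^{-1}$, the factor $S_1^{-1}$ cancels the leading $S_1$, so $L_1(\Psi_1)=S_1\,e^{\xi(t^{(1)},\Lambda_1)}\Lambda_1(z^{m_1})$, where I used that $\Lambda_1$ commutes with $e^{\xi(t^{(1)},\Lambda_1)}$; as $\Lambda_1(z^{m_1})=z\,z^{m_1}$ and $z$ is a scalar in $\bm{m}$, this equals $z\Psi_1$. The relations $L_2(\Psi_2)=z^{-1}\Psi_2$ and $L_3(\bm{m},\Lambda_1)(\Psi_3)=L_3(\bm{m},\Lambda_2^{-1})(\Psi_3)=z^{-1}\Psi_3$ follow identically, from $\Lambda_2^{-1}(z^{m_2})=z^{-1}z^{m_2}$ and $\Lambda_1^{-1}(z^{m_1-m_2})=\Lambda_2(z^{m_1-m_2})=z^{-1}z^{m_1-m_2}$; the last equality is exactly why the two writings of $L_3$ share the eigenvalue on $\Psi_3$.

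For the flows, I would differentiate $\Psi_i=S_i\,e^{\xi(\cdot,\Lambda)}(z^{\bullet})$, the dot denoting the relevant time group and $\Lambda$ the relevant shift. When the exponential carries $t^{(j)}$ there is an extra term $S_i\,\Lambda^k e^{\xi(\cdot,\Lambda)}(z^{\bullet})=L_i^k\Psi_i$, arising from $\partial_{t_k^{(j)}}e^{\xi(\cdot,\Lambda)}=\Lambda^k e^{\xi(\cdot,\Lambda)}$ and the cancellation $S_i\Lambda^k S_i^{-1}=L_i^k$; otherwise only $(\partial_{t_k^{(j)}}S_i)\,e^{\xi(\cdot,\Lambda)}(z^{\bullet})$ remains. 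Feeding in the matching equations of Proposition~\ref{prop:evolutionS} settles all diagonal pairs at once: $\partial_{t_k^{(1)}}\Psi_1$ and $\partial_{t_k^{(1)}}\Psi_3$ collapse to $(L_1^k)_{1,\geq0}=B_k^{(1)}$; $\partial_{t_k^{(2)}}\Psi_2$ and $\partial_{t_k^{(2)}}\Psi_3$ (through the $\Lambda_2^{-1}$ branch) collapse to $(L_2^k)_{\Delta_2^*,\geq1}=B_k^{(2)}$; and the two $t_k^{(3)}$ branches give $\partial_{t_k^{(3)}}\Psi_1=B_k^{(3)}(\bm{m},\Lambda_1)\Psi_1$, $\partial_{t_k^{(3)}}\Psi_2=B_k^{(3)}(\bm{m},\Lambda_2^{-1})\Psi_2$, and both forms on $\Psi_3$.

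The main obstacle is the off-diagonal pairs, where Proposition~\ref{prop:evolutionS} returns an operator in the wrong shift variable: $\partial_{t_k^{(1)}}\Psi_2$, $\partial_{t_k^{(2)}}\Psi_1$, and the cross-form identities $B_k^{(3)}(\bm{m},\Lambda_1)\Psi_1=B_k^{(3)}(\bm{m},\Lambda_2^{-1})\Psi_1$ and $B_k^{(3)}(\bm{m},\Lambda_1)\Psi_2=B_k^{(3)}(\bm{m},\Lambda_2^{-1})\Psi_2$. In each case I would show that the two operators, though written in $\Lambda_1$ and $\Lambda_2$ respectively, coincide modulo the left ideal $\E_{(\alpha)}^{\pm}H$, so that they act equally on the relevant $\Psi_i$ because $H(\Psi_i)=0$. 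Concretely, for $\partial_{t_k^{(1)}}\Psi_2$ I would factor the operator of Proposition~\ref{prop:evolutionS} as $S_1\Lambda_1^k(\Delta_2^*)^{-1}S_1^{-1}=L_1^k\cdot S_1(\Delta_2^*)^{-1}S_1^{-1}$, rewrite $S_1(\Delta_2^*)^{-1}S_1^{-1}$ through $H^{-1}$ by Corollary~\ref{corollary:HS}, and then use the splitting $\E_{(\alpha)}^{\pm}=\E_{(\alpha)}^{0,\pm}\oplus\E_{(\alpha)}^{\pm}H$ to reduce the claim to the single-variable operator identity $\pi_2^{\pm}(B_k^{(1)})=\left(S_1\Lambda_1^k(\Delta_2^*)^{-1}S_1^{-1}\right)_{1,[0]}\Delta_2^*$, which is verified from the explicit projection formulas for $\pi_\alpha^{\pm}(\Lambda_j^{\pm k})$ built on Lemmas~\ref{Lemma:La_i} and~\ref{Lemma:La_12}. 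The same conversion, with $\Lambda_1\leftrightarrow\Lambda_2$ and $\pi_1^{\pm}\leftrightarrow\pi_2^{\pm}$, handles $\partial_{t_k^{(2)}}\Psi_1$ and the two $t_k^{(3)}$ cross-form identities, while Lemma~\ref{Lemma:operatoractonpsi} upgrades equal actions on $\Psi_i$ to genuine operator identities. I expect the only delicate bookkeeping to be choosing the correct completion ($\iota_{\Lambda_1^{-1}}$ versus $\iota_{\Lambda_2^{-1}}$) and projection sign in each off-diagonal case; once these are fixed the remaining computation is mechanical.
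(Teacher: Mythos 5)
Your handling of the eigenvalue relations and of the diagonal flow cases ($\partial_{t_k^{(i)}}\Psi_i$, both branches on $\Psi_3$, $\partial_{t_k^{(3)}}\Psi_1$ via the $\Lambda_1$--branch, $\partial_{t_k^{(3)}}\Psi_2$ via the $\Lambda_2$--branch) is exactly the paper's: read them off the wave--operator factorizations and Proposition \ref{prop:evolutionS}, and you correctly isolate the four genuinely off--diagonal claims. Where you diverge is precisely there. The paper settles, e.g., $\partial_{t_k^{(1)}}\Psi_2$ by combining Proposition \ref{prop:evolutionS} with the $\Lambda_1^{l}(S_2)$ ($l>0$) formulas of Proposition \ref{prop:lambda12relation}, which gives \eqref{patk1s2}, $\partial_{t_k^{(1)}}S_2=B_k^{(1)}(\Lambda_1)\big(S_2\big)$, almost immediately (the only check is that $(L_1^k)_{1,<0}\,S_1(\Delta_2^*)^{-1}S_1^{-1}$ has no $\Lambda_1^0$--part), whereas you never invoke Proposition \ref{prop:lambda12relation} and instead run the $H$--calculus --- $H(\Psi_i)=0$, the splitting $\E_{(2)}^{-}=\E_{(2)}^{0,-}\oplus\E_{(2)}^{-}H$, Corollary \ref{corollary:HS} --- which is the machinery the paper saves for the later Lax--equation theorem. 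Your route is workable, but two caveats. First, the reduced identity $\pi_2^{-}(B_k^{(1)})=\big(S_1\Lambda_1^{k}(\Delta_2^*)^{-1}S_1^{-1}\big)_{1,[0]}\Delta_2^*$ is the real content, not a mechanical corollary of the formulas for $\pi_\alpha^{\pm}(\Lambda_j^{\pm k})$ (those contain no information about $S_1$): you must actually expand through $\iota_{\Lambda_1^{-1}}H^{-1}$ and match $\Lambda_1^0$--parts term by term (it does work; for $k=1$ both sides reduce to $-\,\iota_{\Lambda_2}(\Lambda_2-1)^{-1}\rho(\bm{m})$), which essentially re--derives Proposition \ref{prop:lambda12relation}; and you cannot shortcut this by applying both sides to $\Psi_2$ and citing Lemma \ref{Lemma:operatoractonpsi}, since equality of those two actions on $\Psi_2$ is the very statement being proved. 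Second, the sign choices are forced, not free: for $\Psi_2$ only $\pi_2^{-}$ (the component with $j(\E_{(2)}^{0,-})=2$) is admissible, and correspondingly $\pi_1^{+}$ for $\Psi_1$. Finally, note that $H$ and Corollary \ref{corollary:HS} themselves rest on the $k=1$ case of Proposition \ref{prop:lambda12relation}, so your argument is not more economical than the paper's, just more uniform; what it buys is that the identical projection argument is then reused verbatim in the proof of the Lax equations.
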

\begin{proof}
Firstly the actions of $L_i$ on $\Psi_i$ are obvious by corresponding definitions. Then for $\partial_{t_k^{(i)}}\Psi_i(\bm{m},z)\ (i=1,2,3)$ and $\partial_{t_k^{(j)}}\Psi_3(\bm{m},z)\ (j=1,2)$, they can be obtained directly by $\partial_{t_k^{(i)}}S_i$ and $\partial_{t_k^{(j)}}S_3$ in Proposition \ref{prop:evolutionS} and definitions of $\Psi_i$. As for $\partial_{t_k^{(1)}}\Psi_2(\bm{m},z)$, it comes from the facts below
\begin{align}
\partial_{t_k^{(1)}}S_2(\bm{m},\Lambda_2)
=B_k^{(1)}(\bm{m},\Lambda_1)\Big(S_2(\bm{m},\Lambda_2)\Big),\label{patk1s2}
\end{align}
which is derived by $\partial_{t_k^{(1)}}S_2(\bm{m},\Lambda_2)$ in Proposition \ref{prop:evolutionS} and $\Lambda_1^{l}(S_2)$ ($l> 0$) in Proposition \ref{prop:lambda12relation}. Similarly, we can obtain the results for $\partial_{t_k^{(i)}}\Psi_j(\bm{m},z)$ with $i\neq j$ and $j\neq 3$.
\end{proof}
\begin{theorem}
Lax operators $L_i$ of 3$-$ KP hierarchy satisfy the following Lax equations
\begin{itemize}
  \item $t_k^{(1)}$--flow
  \begin{align*}
&\partial_{t_k^{(1)}}L_1(\Lambda_1)=[B_k^{(1)}(\Lambda_1),L_1(\Lambda_1)],\quad
\partial_{t_k^{(1)}}L_2(\Lambda_2)=[\pi_2^-(B_k^{(1)}(\Lambda_1)),L_2(\Lambda_2)],\\
&\partial_{t_k^{(1)}}L_3(\Lambda_1)=[B_k^{(1)}(\Lambda_1),L_3(\Lambda_1)],\quad
\partial_{t_k^{(1)}}L_3(\Lambda_2^{-1})=[\pi_2^+(B_k^{(1)}(\Lambda_1)),L_3(\Lambda_2^{-1})].
\end{align*}
  \item $t_k^{(2)}$--flow
  \begin{align*}
&\partial_{t_k^{(2)}}L_1(\Lambda_1)=[\pi_1^+(B_k^{(2)}(\Lambda_1)),L_1(\Lambda_1)],\quad
\partial_{t_k^{(2)}}L_2(\Lambda_2)=[B_k^{(2)}(\Lambda_2),L_2(\Lambda_2)],\\
&\partial_{t_k^{(2)}}L_3(\Lambda_1)=[\pi_1^-(B_k^{(2)}(\Lambda_2)),L_3(\Lambda_1)],\quad
\partial_{t_k^{(2)}}L_3(\Lambda_2^{-1})=[B_k^{(2)}(\Lambda_2),L_3(\Lambda_2^{-1})].
\end{align*}
  \item $t_k^{(3)}$--flow
  \begin{itemize}
    \item $\Lambda_1$--operator
    \begin{align*}
&\partial_{t_k^{(3)}}L_1(\Lambda_1)=[B_k^{(3)}(\Lambda_1),L_1(\Lambda_1)],\quad
\partial_{t_k^{(3)}}L_2(\Lambda_2)=[\pi_2^-(B_k^{(3)}(\Lambda_1)),L_2(\Lambda_2)],\\
&\partial_{t_k^{(3)}}L_3(\Lambda_1)=[B_k^{(3)}(\Lambda_1),L_3(\Lambda_1)],\quad
\partial_{t_k^{(3)}}L_3(\Lambda_2^{-1})=[\pi_2^+(B_k^{(3)}(\Lambda_1)),L_3(\Lambda_2^{-1})]
\end{align*}
    \item $\Lambda_2$--operator
\begin{align*}
&\partial_{t_k^{(3)}}L_1(\Lambda_1)=[\pi_1^+(B_k^{(3)}(\Lambda_2^{-1})),L_1(\Lambda_1)],\quad
\partial_{t_k^{(3)}}L_2(\Lambda_2)=[B_k^{(3)}(\Lambda_2^{-1}),L_2(\Lambda_2)],\\
&\partial_{t_k^{(3)}}L_3(\Lambda_1)=[\pi_1^-(B_k^{(3)}(\Lambda_2^{-1})),L_3(\Lambda_1)],\quad
\partial_{t_k^{(3)}}L_3(\Lambda_2^{-1})=[B_k^{(3)}(\Lambda_2^{-1}),L_3(\Lambda_2^{-1})]
\end{align*}
  \end{itemize}
\end{itemize}
\end{theorem}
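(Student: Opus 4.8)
The plan is to run a single uniform argument for every one of the listed flows, resting on three facts already at our disposal: the spectral relations $L_j(\Psi_j)=\zeta_j\Psi_j$ (with $\zeta_1=z$ and $\zeta_2=\zeta_3=z^{-1}$) together with the linearized flows $\partial_{t_k^{(i)}}\Psi_j=B_k^{(i)}(\Psi_j)$ supplied by the corollary immediately preceding the theorem, the annihilation property $H(\Psi_j)=0$ of Proposition \ref{prop:Hpsi}, and the uniqueness statement of Lemma \ref{Lemma:operatoractonpsi}. The essential point is that the corollary gives the \emph{same} operator $B_k^{(i)}$ acting on all three wave functions at once, so each Lax equation will fall out simply by differentiating the corresponding eigenvalue problem and reading off a commutator.

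First I would handle the \emph{diagonal} cases, where the flow and the Lax operator are built from the same shift variable, the model being $\partial_{t_k^{(1)}}L_1(\La_1)$. Differentiating $L_1(\Psi_1)=z\Psi_1$ in $t_k^{(1)}$ and inserting $\partial_{t_k^{(1)}}\Psi_1=B_k^{(1)}(\La_1)(\Psi_1)$ gives
\begin{align*}
(\partial_{t_k^{(1)}}L_1)(\Psi_1)
&=z\,B_k^{(1)}(\Psi_1)-L_1 B_k^{(1)}(\Psi_1)\\
&=B_k^{(1)}(z\Psi_1)-L_1 B_k^{(1)}(\Psi_1)=[B_k^{(1)},L_1](\Psi_1),
\end{align*}
where the scalar $z$ commutes with the shift operator $B_k^{(1)}$ and $z\Psi_1=L_1(\Psi_1)$. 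Since both $\partial_{t_k^{(1)}}L_1$ and $[B_k^{(1)},L_1]$ lie in $\E_{(1)}^{0,+}$, Lemma \ref{Lemma:operatoractonpsi} (applied against $\Psi_1$) forces them to agree. The identical three–line computation, read against the appropriate spectral pair, settles every other flow in which $B_k^{(i)}$ and $L_j$ share a shift variable, e.g.\ $\partial_{t_k^{(2)}}L_2$, $\partial_{t_k^{(3)}}L_3(\La_1)$ and $\partial_{t_k^{(3)}}L_3(\La_2^{-1})$.

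The only extra move for the \emph{cross} cases is to transfer $B_k^{(i)}$ into the shift variable of the target Lax operator. Taking $\partial_{t_k^{(1)}}L_2$ as the template, $B_k^{(1)}(\La_1)$ is a $\La_1$--operator whereas $L_2\in\E_{(2)}^{0,-}$. Because $H(\Psi_2)=0$, the decomposition $B_k^{(1)}=\pi_2^-(B_k^{(1)})+RH$ in $\E_{(2)}^{-}$ (Proposition \ref{prop: sumdecom}) yields $B_k^{(1)}(\Psi_2)=\pi_2^-(B_k^{(1)})(\Psi_2)$, since $RH(\Psi_2)=R(H(\Psi_2))=0$. Feeding this reduced operator into the differentiated relation $L_2(\Psi_2)=z^{-1}\Psi_2$ produces $(\partial_{t_k^{(1)}}L_2)(\Psi_2)=[\pi_2^-(B_k^{(1)}),L_2](\Psi_2)$, and now both sides are genuine $\La_2$--operators in $\E_{(2)}^{0,-}$, so Lemma \ref{Lemma:operatoractonpsi} with $\Psi_2$ closes the case. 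Each remaining cross flow is identical once one picks the projection whose image ring contains the target: $\pi_1^+$ for $L_1$, $\pi_2^-$ for $L_2$, $\pi_1^-$ for $L_3(\La_1)$ and $\pi_2^+$ for $L_3(\La_2^{-1})$, invoking the uniqueness lemma against $\Psi_1,\Psi_2,\Psi_3,\Psi_3$ respectively.

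I expect the genuine difficulty here to be organizational rather than analytic. One must verify, entry by entry, that the chosen projection $\pi_\al^{\pm}$ (both the index $\al\in\{1,2\}$ and the sign) sends $B_k^{(i)}$ into the \emph{same} ring $\E_{(\al)}^{0,\pm}$ in which $L_j$ sits, so that the commutator stays inside that ring and the uniqueness lemma is triggered against the matching wave function $\Psi_{j(\mathcal{Q})}$. A secondary point to record is that $B_k^{(i)}$ indeed belongs to the ambient ring $\E_{(\al)}^{\pm}$ on which $\pi_\al^{\pm}$ is defined; this is automatic, since each $B_k^{(i)}$ is a finite combination of powers of a single shift operator and hence lies in $\mathcal{B}[\La_{3-\al},\La_{3-\al}^{-1}]\subset\E_{(\al)}^{\pm}$. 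Once this bookkeeping is fixed, no further calculation is needed, as the spectral derivation is uniform across all the listed flows.
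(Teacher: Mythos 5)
Your argument is correct, and it reaches the theorem by a somewhat different route than the paper. You differentiate the spectral relations $L_j(\Psi_j)=\zeta_j\Psi_j$, insert the linearized flows $\partial_{t_k^{(i)}}\Psi_j=B_k^{(i)}(\Psi_j)$ from the corollary preceding the theorem, use $H(\Psi_j)=0$ (Proposition \ref{prop:Hpsi}) together with the direct--sum decomposition to replace $B_k^{(i)}$ by the appropriate projection $\pi_\al^{\pm}(B_k^{(i)})$, and close each case with the uniqueness Lemma \ref{Lemma:operatoractonpsi}; your bookkeeping of which ring $\E_{(\al)}^{0,\pm}$ and which wave function $\Psi_{j(\mathcal{Q})}$ to use is consistent with the pairing $j(\E_{(1)}^{0,+})=1$, $j(\E_{(2)}^{0,-})=2$, $j(\E_{(1)}^{0,-})=j(\E_{(2)}^{0,+})=3$, and matches the projections appearing in the statement. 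The paper instead works at the level of the dressing operators: it writes $\partial_{t_k^{(1)}}L_2=[\partial_{t_k^{(1)}}S_2\cdot S_2^{-1},L_2]$ from $L_2=S_2\Lambda_2^{-1}S_2^{-1}$, and then identifies $\partial_{t_k^{(1)}}S_2\cdot S_2^{-1}=\pi_2^-(B_k^{(1)})$ by combining \eqref{patk1s2} (i.e.\ $\partial_{t_k^{(1)}}S_2=B_k^{(1)}(S_2)$) with the same two ingredients you use, $B_k^{(1)}(\Psi_2)=\pi_2^-(B_k^{(1)})(\Psi_2)$ and Lemma \ref{Lemma:operatoractonpsi}; the diagonal cases come straight from Proposition \ref{prop:evolutionS}. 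The trade--off: your spectral derivation is a single uniform template for all sixteen equations, but it takes the corollary (whose proof already encodes Propositions \ref{prop:evolutionS} and \ref{prop:lambda12relation}) as a black box, whereas the paper's dressing--level computation produces the intermediate identity $\pi_2^-(B_k^{(1)})=B_k^{(1)}(S_2)\cdot S_2^{-1}$ explicitly, which is of independent use. One small point worth making explicit in your write--up is that the actions involved are genuine ring actions, i.e.\ $(AB)(\Psi)=A(B(\Psi))$ and $RH(\Psi_j)=R(H(\Psi_j))=0$ for the (possibly infinite) series $R$ produced by the decomposition, which is the formal convention the paper also uses.
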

\begin{proof}
Firstly $\partial_{t_k^{(1)}}L_1(\Lambda_1)$ can be directly obtained by $\partial_{t_k^{(1)}}S_1(\Lambda_1)$ in Proposition \ref{prop:evolutionS} and $L_1(\Lambda_1)=S_1(\Lambda_1)\cdot\Lambda_1\cdot S_1^{-1}(\Lambda_1)$. As for
$\partial_{t_k^{(1)}}L_2(\Lambda_2)$, one can firstly find by $L_2(\Lambda_2)=S_2(\Lambda_2)\Lambda_2^{-1}S_2^{-1}(\Lambda_2)$ that
\begin{align*}
\partial_{t_k^{(1)}}L_2(\Lambda_2)=\left[\partial_{t_k^{(2)}}S_2(\Lambda_2)\cdot S_2^{-1}(\Lambda_2),L_2(\Lambda_2)\right].
\end{align*}
On the other hand, we can find by Proposition \ref{prop:Hpsi} that $B_k^{(1)}(\Psi_2)=\pi_{2}^-(B_k^{(1)})(\Psi_2)$. Then by Lemma \ref{Lemma:operatoractonpsi} and definition of $\Psi_2$, we can obtain
\begin{align*}
\pi_2^-(B_k^{(1)})=B_k^{(1)}(S_2)\cdot S_2^{-1},
\end{align*}
which leads to the result of $\partial_{t_k^{(1)}}L_2(\Lambda_2)$ by (\ref{patk1s2}). Other cases can be similarly derived.
\end{proof}

We next compute the derivatives of $H$. Before doing that, let us do the following preparation.

\begin{lemma}\label{lemma:ABH}
    If $A\in\E$ satisfies $A(\Psi_\al)=0$ for $1\leq \al\leq 3$, then there is a unique operator $B\in\E$ such that $A=BH$.
\end{lemma}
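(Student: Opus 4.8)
The plan is to show that $H$ generates the entire annihilator of the family $\{\Psi_1,\Psi_2,\Psi_3\}$ inside $\E$, by feeding the two direct sum decompositions of Proposition \ref{prop: sumdecom} through the rigidity of Lemma \ref{Lemma:operatoractonpsi}, and then to pin down the quotient by an order count in the shift operators. Since $A\in\E$ is a Laurent polynomial it lies in both completions $\E_{(1)}^{+}$ and $\E_{(1)}^{-}$. Applying $\E_{(1)}^{+}=\E_{(1)}^{0,+}\oplus\E_{(1)}^{+}H$ I would write $A=\pi_1^{+}(A)+B^{+}H$ with $\pi_1^{+}(A)\in\E_{(1)}^{0,+}$ and $B^{+}\in\E_{(1)}^{+}$. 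Acting on $\Psi_1$ and using $A(\Psi_1)=0$ together with $H(\Psi_1)=0$ from Proposition \ref{prop:Hpsi} gives $\pi_1^{+}(A)(\Psi_1)=0$; since $j(\E_{(1)}^{0,+})=1$, Lemma \ref{Lemma:operatoractonpsi} forces $\pi_1^{+}(A)=0$, so $A=B^{+}H$ with $B^{+}\in\E_{(1)}^{+}$. Running the identical argument in $\E_{(1)}^{-}=\E_{(1)}^{0,-}\oplus\E_{(1)}^{-}H$ and acting on $\Psi_3$ (note $j(\E_{(1)}^{0,-})=3$) produces a second factorization $A=B^{-}H$ with $B^{-}\in\E_{(1)}^{-}$.

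The heart of the proof is to show $B^{+}=B^{-}$, for then this common operator lies in $\E_{(1)}^{+}\cap\E_{(1)}^{-}=\E$ and supplies the required $B$. Set $D=B^{+}-B^{-}=\sum_j d_j\La_1^j$ with $d_j\in\mathcal{B}[\La_2,\La_2^{-1}]$; because $H=\La_1\De_2+\rho$ has only $\La_1$-orders $0$ and $1$, the product $DH$ is well defined even for a doubly infinite $D$, and $DH=0$. Comparing coefficients of $\La_1^{j}$, and using that $\La_1$ commutes with $\De_2=\La_2-1$ while shifting $\rho$ by $\bm{e}_1$, yields the recursion $d_{j-1}\De_2=-d_j\,\rho(\bm{m}+j\bm{e}_1)$ for every $j$. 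Since $\rho$ is invertible and right multiplication by $\De_2$ is injective on $\mathcal{B}[\La_2,\La_2^{-1}]$, this shows $d_j\neq0\Rightarrow d_{j-1}\neq0$, so the support of $D$ is either empty or contains an infinite downward ray. On such a ray a count of $\La_2$-orders gives a contradiction: the recursion forces the lowest $\La_2$-order of $d_j$ to be independent of $j$ while its highest $\La_2$-order drops by one each time $j$ decreases, which is impossible for a nonzero Laurent polynomial once $j$ is sufficiently negative. Hence $D=0$ and $B:=B^{+}=B^{-}\in\E$.

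Uniqueness follows from the same cancellation: if $B_1H=B_2H=A$ with $B_1,B_2\in\E$, then $(B_1-B_2)H=0$, and the order argument above, now applied to the finitely supported difference $B_1-B_2$, forces $B_1-B_2=0$. The step I expect to be the main obstacle is precisely the identity $B^{+}=B^{-}$, i.e. verifying that the two a priori different factorizations arising from the completions $\E_{(1)}^{\pm}$ agree and descend to a genuine Laurent polynomial in $\E$; everything there rests on the fact that $H$ has no nonzero right annihilator, which the $\La_2$-order count establishes, together with the invertibility of $\rho$ guaranteed by its expression in Proposition \ref{prop:Hpsi}.
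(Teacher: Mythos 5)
Your proposal is correct, and its first half coincides with the paper's own proof: decompose $A\in\E\subseteq\E_{(1)}^{+}$ via Proposition \ref{prop: sumdecom}, then kill the $\E_{(1)}^{0,+}$--component by acting on $\Psi_1$ and invoking Proposition \ref{prop:Hpsi} together with Lemma \ref{Lemma:operatoractonpsi}. Where you diverge is the passage from $B^{+}\in\E_{(1)}^{+}$ to $B\in\E$. The paper keeps the single factorization $A=B^{+}H$ and simply asserts that $B^{+}H=A\in\E$ forces the lowest $\La_1$--order of $B^{+}$ to be finite, hence $B^{+}\in\E$, without detailing why and without an explicit uniqueness argument. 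You instead manufacture a second factorization $A=B^{-}H$ from $\E_{(1)}^{-}=\E_{(1)}^{0,-}\oplus\E_{(1)}^{-}H$ (acting on $\Psi_3$, consistently with $j(\E_{(1)}^{0,-})=3$), and prove $B^{+}=B^{-}$ through the coefficient recursion $d_{j-1}\De_2=-d_j\,\rho(\bm{m}+j\bm{e}_1)$ obtained from $DH=0$, closed by the $\La_2$--order count (lowest order constant, highest order dropping by one as $j$ decreases) and the invertibility of $\rho$; since $\E_{(1)}^{+}\cap\E_{(1)}^{-}=\E$, this yields $B\in\E$. This is a legitimate alternative route, and in fact your recursion is exactly the missing justification of the paper's terse finiteness claim: applied to the coefficients of $B^{+}$ at $\La_1$--orders below those occurring in $A$, it shows the tail must terminate. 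Your argument also delivers uniqueness cleanly, by showing $H$ has no nonzero left annihilator with finite (indeed even one-sided infinite) $\La_1$--support, a point the paper's proof leaves implicit. The only caveats are the genericity assumptions already used elsewhere in the paper: $\rho$ is an invertible element of $\mathcal{B}$ (a ratio of tau functions, as used in Lemma \ref{Lemma:La_12}), and the formal actions of $\E_{(1)}^{\pm}$ on the wave functions are understood in the same sense as in Proposition \ref{prop: sumdecom} and Lemma \ref{Lemma:operatoractonpsi}.
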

\begin{proof}
Firstly notice that  $A\in\E\subseteq\E_{(1)}^{+}$, then by Proposition \ref{prop: sumdecom}, there exist unique $\widetilde{A}\in\E_{(1)}^{0,+}$ and $B\in \E_{(1)}^{+}$ such that $A=\widetilde{A}+BH$. Next according to Proposition \ref{prop:Hpsi}, we can know $\widetilde{A}(\Psi_1)=0$, which implies $\widetilde{A}=0$ by Lemma \ref{Lemma:operatoractonpsi}. Therefore $A=BH$ with $B\in\E_{(1)}^{+}$. So if denote $$B=\sum_{j\leq M}\sum_{i=N_j}^{K_j}b_{ij}\Lambda_1^i\Lambda_2^j,$$
then by $BH=A\in\mathcal{E}$, we can know the lowest $\Lambda_1$--order in $B$ should be finite. Therefore $B\in\mathcal{E}$.
\end{proof}
By Corollary \ref{corollary:HS}, we can obtain the lemma below.
\begin{lemma}\label{corollary:HL}
    \begin{align*}
        &H\cdot L_1(\La_1)=(\La_1-\rho)\cdot L_1(\La_1) \cdot\iota_{\La_1^{-1}}(\La_1-\rho)^{-1}H,\\
        &H\cdot L_2(\La_2)=\rho\cdot L_1(\La_2)\cdot \rho^{-1}H,\\
        &H\cdot L_3(\La_1)=(\La_1-\rho)\cdot L_3(\La_1)\cdot \iota_{\La_1}(\La_1-\rho)^{-1}H,\\
        &H\cdot L_3(\La_2^{-1})=\rho\cdot L_3(\La_2^{-1})\cdot \rho^{-1}H.
    \end{align*}
\end{lemma}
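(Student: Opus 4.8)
The plan is to treat all four identities by a single short computation that substitutes the factorized form of each $L_i$ and then applies Corollary \ref{corollary:HS} twice. Take $H\cdot L_1(\La_1)$ as the model case. Writing $L_1(\La_1)=S_1\La_1 S_1^{-1}$ and using $H\cdot S_1=(\La_1-\rho)S_1\De_2$ from Corollary \ref{corollary:HS}, we get $H\cdot L_1(\La_1)=(\La_1-\rho)S_1\De_2\La_1 S_1^{-1}$. The key algebraic observation is that the difference operators $\De_1,\De_2,\De_{12}$ are polynomials in the commuting shifts $\La_1,\La_2$, so each commutes with the central shift appearing in the corresponding $L_i$; here $\De_2\La_1=\La_1\De_2$. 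Hence $H\cdot L_1(\La_1)=(\La_1-\rho)S_1\La_1\De_2 S_1^{-1}$, and inserting $S_1^{-1}S_1=1$ between $\La_1$ and $\De_2$ splits this as $(\La_1-\rho)(S_1\La_1 S_1^{-1})(S_1\De_2 S_1^{-1})=(\La_1-\rho)L_1(\La_1)(S_1\De_2 S_1^{-1})$.

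It then remains to identify the residual conjugate $S_1\De_2 S_1^{-1}$. Left-multiplying $H S_1=(\La_1-\rho)S_1\De_2$ by $(\La_1-\rho)^{-1}$ and right-multiplying by $S_1^{-1}$ gives $S_1\De_2 S_1^{-1}=(\La_1-\rho)^{-1}H$, where $(\La_1-\rho)^{-1}$ must be expanded so that the product lands in the same ring as $L_1(\La_1)$. Since $S_1$ is a series in $\La_1^{-1}$, the operator $S_1\De_2 S_1^{-1}$ lives in $\E_{(1)}^{+}$, which forces the $\La_1^{-1}$-expansion, i.e. $S_1\De_2 S_1^{-1}=\iota_{\La_1^{-1}}(\La_1-\rho)^{-1}H$; this yields exactly $H\cdot L_1(\La_1)=(\La_1-\rho)L_1(\La_1)\iota_{\La_1^{-1}}(\La_1-\rho)^{-1}H$.

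The remaining three cases run verbatim. For $L_2(\La_2)=S_2\La_2^{-1}S_2^{-1}$ one uses $HS_2=-\rho S_2\De_1$ with $\De_1\La_2^{-1}=\La_2^{-1}\De_1$, and $S_2\De_1 S_2^{-1}=-\rho^{-1}H$ (no $\iota$ is needed since $\rho$ is a scalar function), giving $H\cdot L_2(\La_2)=\rho L_2(\La_2)\rho^{-1}H$. For $L_3(\La_1)=S_3\La_1^{-1}S_3^{-1}$ one uses $HS_3(\La_1)=(\La_1-\rho)S_3\De_{12}$ together with $\De_{12}\La_1^{-1}=\La_1^{-1}\De_{12}$ (valid because $\La_1\La_2=\La_2\La_1$), and since $S_3(\La_1)$ is a series in $\La_1$ the conjugate lies in $\E_{(1)}^{-}$, forcing the $\iota_{\La_1}$-expansion and producing $H\cdot L_3(\La_1)=(\La_1-\rho)L_3(\La_1)\iota_{\La_1}(\La_1-\rho)^{-1}H$. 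Finally $L_3(\La_2^{-1})=S_3(\La_2^{-1})\La_2 S_3^{-1}(\La_2^{-1})$ is handled with $HS_3(\La_2^{-1})=-\rho S_3(\La_2^{-1})\De_{12}$ and $\De_{12}\La_2=\La_2\De_{12}$, giving $H\cdot L_3(\La_2^{-1})=\rho L_3(\La_2^{-1})\rho^{-1}H$. The only genuinely delicate point throughout is the bookkeeping of the expansion direction $\iota_{\La_1^{\pm1}}$ for $(\La_1-\rho)^{-1}$: one must check in each case that the chosen completion is the one in which $S_i\De S_i^{-1}$ actually converges, namely the ring ($\E_{(1)}^{+}$ or $\E_{(1)}^{-}$) in which the relevant Lax operator is defined; once this is settled the algebra reduces to the commutation of a single $\De$ past a single shift.
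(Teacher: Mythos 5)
Your proposal is correct and is essentially the paper's own argument: the paper states this lemma as an immediate consequence of Corollary \ref{corollary:HS}, and your computation (write $L_i=S_i\,Q\,S_i^{-1}$, apply $H\cdot S_i$ from that corollary, commute the constant-coefficient $\Delta$ past the shift, and identify $S_i\Delta S_i^{-1}$ with the appropriately expanded $(\La_1-\rho)^{-1}H$ or $-\rho^{-1}H$) is exactly the intended derivation, including the correct bookkeeping of the $\iota_{\La_1^{\pm1}}$ expansions and the implicit reading of $L_1(\La_2)$ as $L_2(\La_2)$.
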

\begin{proposition}
Evolution equation of $H$ is given by
\begin{align*}
\p_{t_k^{(i)}}H=C^{(i)}H-H\cdot B_k^{(i)},\quad i=1,2,3,
\end{align*}
where $C^{(i)}$ is given by
    \begin{align*}
        &C^{(1)}=\Big((\La_1-\rho)\cdot L_1^k(\La_1) \cdot \iota_{\Lambda_1^{-1}}(\La_1-\rho)^{-1} \Big)_{1,\geq 0},\\
        &C^{(2)}=\Big(\rho \cdot L_2^k(\La_2) \cdot\rho^{-1}\Big)_{2,\leq 0}-\Big((\De_2^*)^{-1}\La_2^{-1}\rho \cdot L_2^k(\La_2) \cdot\rho^{-1}\Big)_{2,[0]},\\
        &C^{(3)}(\La_1)=\Big((\La_1-\rho)\cdot L_3^k(\La_1) \cdot \iota_{\Lambda_1}(\La_1-\rho)^{-1} \Big)_{1,< 0},\\
        &C^{(3)}(\La_2)=\Big(\rho \cdot L_3^k(\La_2^{-1}) \cdot\rho^{-1}\Big)_{2,\geq 0}-\Big(\De_2^{-1}\La_2\rho\cdot L_3^k(\La_2^{-1}) \cdot\rho^{-1}\Big)_{2,[0]}.
    \end{align*}
\end{proposition}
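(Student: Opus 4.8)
The plan is to first extract the structural identity and then pin down $C^{(i)}$ by a uniqueness-plus-projection argument. I would start from $H(\Psi_j)=0$ for $j=1,2,3$ (Proposition \ref{prop:Hpsi}) and differentiate along $t_k^{(i)}$, using the flow $\pa_{t_k^{(i)}}\Psi_j=B_k^{(i)}(\Psi_j)$ from the preceding corollary. The product rule gives
\[
\big(\pa_{t_k^{(i)}}H+H\cdot B_k^{(i)}\big)(\Psi_j)=\pa_{t_k^{(i)}}\big(H(\Psi_j)\big)=0,\qquad j=1,2,3.
\]
Since $A:=\pa_{t_k^{(i)}}H+H B_k^{(i)}$ lies in $\E$ and annihilates all three wave functions, Lemma \ref{lemma:ABH} produces a unique $C^{(i)}\in\E$ with $A=C^{(i)}H$, which is exactly the asserted form $\pa_{t_k^{(i)}}H=C^{(i)}H-H B_k^{(i)}$. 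The uniqueness here is the linchpin: if $CH\in\mathcal B$ then $CH$ annihilates every $\Psi_j$, forcing $CH=0$ and hence $C=0$, since right multiplication by $H$ is injective (its top coefficient is $1$). Thus $C^{(i)}$ is characterized as the unique operator for which $C^{(i)}H-H B_k^{(i)}$ is a function.

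That characterization is what makes the identification tractable. Because $H=\La_1\De_2+\rho$ and the shifts $\La_1,\La_2$ do not depend on $t$, we have $\pa_{t_k^{(i)}}H=\pa_{t_k^{(i)}}\rho\in\mathcal B$, a pure multiplication operator. So it suffices, for each flow, to produce an explicit operator whose difference with $H B_k^{(i)}$ through $H$ is a function; by uniqueness it must equal $C^{(i)}$. To compute $H B_k^{(i)}$ I would conjugate $L_j^k$ through $H$ by iterating Lemma \ref{corollary:HL}, giving $H L_1^k=\big((\La_1-\rho)L_1^k\,\iota_{\La_1^{-1}}(\La_1-\rho)^{-1}\big)H$, $H L_2^k=(\rho L_2^k\rho^{-1})H$, $H L_3^k(\La_1)=\big((\La_1-\rho)L_3^k\iota_{\La_1}(\La_1-\rho)^{-1}\big)H$ and $H L_3^k(\La_2^{-1})=(\rho L_3^k(\La_2^{-1})\rho^{-1})H$, where the cross-cancellation $\iota_{\La_1^{-1}}(\La_1-\rho)^{-1}(\La_1-\rho)=1$ is exactly what permits the $k$-fold iteration.

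For $i=1$ (and likewise the $\La_1$-branch of $i=3$) the computation closes cleanly: here $B_k^{(1)}=(L_1^k)_{1,\ge0}$ is a genuine $\La_1$-order truncation and $\mathcal M_1:=(\La_1-\rho)L_1^k\iota_{\La_1^{-1}}(\La_1-\rho)^{-1}$ involves only $\La_1$. Splitting $L_1^k=(L_1^k)_{1,\ge0}+(L_1^k)_{1,<0}$ in $H L_1^k=\mathcal M_1 H$ gives
\[
(\mathcal M_1)_{1,\ge0}H-H B_k^{(1)}=H(L_1^k)_{1,<0}-(\mathcal M_1)_{1,<0}H,
\]
whose right-hand side has $\La_1$-order $\le 0$; a short check that its $\La_2$- and negative-$\La_1$-order contributions cancel shows it is a pure function, so by uniqueness $C^{(1)}=(\mathcal M_1)_{1,\ge0}$, and similarly $C^{(3)}(\La_1)=\big((\La_1-\rho)L_3^k\iota_{\La_1}(\La_1-\rho)^{-1}\big)_{1,<0}$, matching the stated formulas.

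The main obstacle is the $\La_2$-type flows, namely $i=2$ and the $\La_2$-branch of $i=3$, where $B_k^{(2)}=(L_2^k)_{\De_2^*,\ge1}$ and $B_k^{(3)}(\La_2^{-1})=(L_3^k(\La_2^{-1}))_{\De_2,\ge1}$ are defined through the $\De_2^*$- and $\De_2$-gradings rather than a plain $\La_2$-order cut. Here I would rewrite these truncations via Lemma \ref{lemma:Alade}, e.g. $(L_2^k)_{\De_2^*,\ge1}=(L_2^k)_{\La_2,<0}-(L_2^k)_{\La_2,<0}(1)$; the first piece behaves like the clean case above and contributes $(\rho L_2^k\rho^{-1})_{2,\le0}$, while the subtracted constant $(L_2^k)_{\La_2,<0}(1)$, after being carried through $H$ and re-expressed, is precisely the origin of the extra correction $-\big((\De_2^*)^{-1}\La_2^{-1}\rho L_2^k\rho^{-1}\big)_{2,[0]}$ (and analogously $-\big(\De_2^{-1}\La_2\rho L_3^k(\La_2^{-1})\rho^{-1}\big)_{2,[0]}$ for $i=3$). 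Tracking this $(\cdot)_{2,[0]}$ term exactly — keeping the grading conversion and the conjugation bookkeeping consistent so that what remains is genuinely a function — is the delicate part; once $C^{(i)}H-H B_k^{(i)}\in\mathcal B$ has been verified in each case, the uniqueness from the first paragraph completes the identification.
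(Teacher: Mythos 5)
Your overall strategy coincides with the paper's: differentiate $H(\Psi_\alpha)=0$, invoke Lemma \ref{lemma:ABH} to get $\p_{t_k^{(i)}}H=C^{(i)}H-HB_k^{(i)}$, observe that $\p_{t_k^{(i)}}H=\p_{t_k^{(i)}}\rho\in\mathcal{B}$ so that $C^{(i)}$ is the unique operator making $C^{(i)}H-HB_k^{(i)}$ a function, conjugate $L_j^k$ through $H$ via Lemma \ref{corollary:HL}, and identify $C^{(i)}$ by order-splitting; your treatment of $C^{(1)}$ and $C^{(3)}(\La_1)$ is exactly the paper's argument around \eqref{HL1pnrelation} and is fine.

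The gap is in the $\La_2$-type flows, which is where the paper's proof does essentially all of its work, and your sketch there is both deferred and partly mis-attributed. Writing $\mathcal{M}=\rho L_2^k\rho^{-1}$ and splitting by $\La_2$-order, the truncation $\mathcal{M}_{2,\leq 0}H$ does not reproduce $H(L_2^k)_{2,<0}$ on the nose: because the $\La_1\La_2$ part of $H$ raises the $\La_2$-degree, the $\La_2^0$-slice of $\mathcal{M}$ leaks upward and produces a term $\mathcal{M}_{2,[0]}\La_1\La_2$. It is this leakage, not the subtracted constant $(L_2^k)_{2,<0}(1)$, that accounts for the $\La_2^0$-piece $-\big(\rho L_2^k\rho^{-1}\big)_{2,[0]}$ of the correction; only the remaining pieces of $-\big((\De_2^*)^{-1}\La_2^{-1}\rho L_2^k\rho^{-1}\big)_{2,[0]}$ are tied to that constant, and even matching those requires more than bookkeeping. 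Indeed, your candidate must annihilate the $\La_1\La_2$-coefficient and the $\La_1$-coefficient of $C^{(2)}H-HB_k^{(2)}$ simultaneously, and the compatibility of these two requirements is precisely the $\La_1$-degree-one content of $H\cdot L_2^k=\rho L_2^k\rho^{-1}\cdot H$ (equivalently, the relations of Proposition \ref{prop:lambda12relation}); it is invisible if one only uses the scalar identity $\rho L_2^k=\mathcal{M}\rho$. This is why the paper works in the $\De_2^*$-grading and compares the coefficients of $\La_1$ and $\La_1^0$ separately before matching $\La_2$-powers, exhibiting the leftover explicitly as $\big(\rho L_2^k(\De_2^*)^{-1}\La_2^{-1}-(\De_2^*)^{-1}\La_2^{-1}\rho L_2^k\big)_{2,[0]}\in\mathcal{B}$. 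The same remarks apply verbatim to $C^{(3)}(\La_2)$ with $\De_2$ in place of $\De_2^*$. Your plan can certainly be completed along these lines, but as written the verification that $C^{(2)}H-HB_k^{(2)}$ and $C^{(3)}(\La_2)H-HB_k^{(3)}(\La_2^{-1})$ are functions — the entire substance of these cases — is left as "tracking the $(\cdot)_{2,[0]}$ term exactly", with an inaccurate account of where that term comes from.
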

\begin{proof}
Firstly apply $\partial_{t_k^{(i)}}$ to $H(\Psi_\alpha)=0$ for $\alpha=1,2,3$ and use Lemma \ref{lemma:ABH}, then we can find there exists $C^{(i)}\in\mathcal{E}$ such that
\begin{align*}
\partial_{t_k^{(i)}}H=C^{(i)}H-H\cdot B_k^{(i)}.
\end{align*}
Notice that $\partial_{t_k^{(i)}}H$ is just a function, thus $C^{(i)}H-H\cdot B_k^{(i)}$ must be also a function and $C^{(i)}$ is uniquely determined by this property.

For $i=1$, from Lemma \ref{corollary:HL},
$$H\cdot L_1^k=(\La_1-\rho)\cdot L_1^k \cdot \iota_{\Lambda_1^{-1}}(\La_1-\rho)^{-1}\cdot H,$$
which implies that
\begin{align}
\big((\La_1-\rho)\cdot L_1^k \cdot \iota_{\Lambda_1^{-1}}(\La_1-\rho)^{-1}\big)_{1,\geq0}H-H(L_1^k)_{1,\geq 0}
        =H(L_1^k)_{1,< 0}-\big((\La_1-\rho)\cdot L_1^k \cdot \iota_{\Lambda_1^{-1}}(\La_1-\rho)^{-1}\big)_{1,<0}H,\label{HL1pnrelation}
\end{align}
Notice that for coefficients of $\Lambda_2$, LHS of \eqref{HL1pnrelation} has non--negative $\Lambda_1$--order, while RHS of \eqref{HL1pnrelation} owns negative $\Lambda_1$--order. So coefficients of $\Lambda_2$ in LHS of \eqref{HL1pnrelation} must be zero. Similarly, we can prove that coefficients of $\Lambda_2^0$ in LHS of \eqref{HL1pnrelation} is just a function. Thus we can find $\big((\La_1-\rho)\cdot L_1^k \cdot \iota_{\Lambda_1^{-1}}(\La_1-\rho)^{-1}\big)_{1,\geq0}$ satisfying the required condition, which means $C^{(1)}=\big((\La_1-\rho)\cdot L_1^k \cdot \iota_{\Lambda_1^{-1}}(\La_1-\rho)^{-1}\big)_{1,\geq0}$.

    For $i=2$, similarly we have
    \[
        H(L_2^k)_{\De_2^*,\geq 1}-(\rho L_2^k \rho^{-1})_{\De_2^*,\geq 0}H=(\rho L_2^k \rho^{-1})_{\De_2^*,\leq -1}H-H(L_2^k)_{\De_2^*,\leq 0}.
    \]
    Considering the coefficients of $\La_1$ and $\La_1^0$ respectively, it gives
    \[
        \La_1\De_2(L_2^k(\De_2^*)^{-1})_{2,\leq 1}+(\rho L_2^k \rho^{-1})_{2,\leq 1}\La_1\La_2
        =-(\rho L_2^k \rho^{-1})_{2,\geq 1}\La_1\La_2-\La_1\De_2(L_2^k(\De_2^*)^{-1})_{2,\geq 1},
    \]
    \[
        \rho(L_2^k(\De_2^*)^{-1})_{2,\leq 0}\De_2^*-(\rho L_2^k \rho^{-1})_{2,\leq 0}\rho
        =(\rho L_2^k \rho^{-1})_{2,\geq 0}\rho-\rho(L_2^k(\De_2^*)^{-1})_{2,\geq 0}\De_2^*.
    \]
    Further by comparing the $\La_2-$powers for both equations, we have
    \[
        \La_1\De_2(L_2^k(\De_2^*)^{-1})_{2,\leq 0}+(\rho L_2^k \rho^{-1})_{2,\leq 0}\La_1\La_2
        -\La_1(L_2^k(\De_2^*)^{-1})_{2,[1]}=0,
    \]
    \[
        \rho(L_2^k(\De_2^*)^{-1})_{2,\leq 0}\De_2^*-(\rho L_2^k \rho^{-1})_{2,\leq 0}\rho
        +\rho(L_2^k(\De_2^*)^{-1})_{2,[1]}\La_2^{-1}=0,
    \]
    which implies
    \begin{align*}
        &\left((\rho L_2^k \rho^{-1})_{2,\leq 0}-\big((\De_2^*)^{-1}\La_2^{-1}\rho L_2^k \rho^{-1}\big)_{2,[0]}\right)H
        -H\cdot(L_2^k(\De_2^*)^{-1})_{2,\leq 0}\De_2^*\\
        &=\left(\rho L_2^k(\De_2^*)^{-1}\La_2^{-1}-(\De_2^*)^{-1}\La_2^{-1}\rho L_2^k\right)_{2,[0]}.
    \end{align*}
    The RHS is a function satisfying required condition, so we have
    $$C^{(2)}=(\rho L_2^k \rho^{-1})_{2,\leq 0}-\big((\De_2^*)^{-1}\La_2^{-1}\rho L_2^k \rho^{-1}\big)_{2,[0]},$$

    The remaining $C^{(3)}(\La_i)$ has the same proof with $C^{(i)}$ for $i=1,2$.
\end{proof}
\noindent\textbf{Example. }
Firstly note that
\begin{align*}
  &B_1^{(1)}(\bm{m},\La_1)= \Lambda_1+u_0^{(1)}(\bm{m}),\\
  &\pi_2^+\Big(B_1^{(1)}(\bm{m},\La_1)\Big)= -\De_2^{-1}\rho(\bm{m})+u_0^{(1)}(\bm{m}),  \\
  &\pi_2^-\Big(B_1^{(1)}(\bm{m},\La_1)\Big)= \Lambda_2^{-1}(\De_2^{*})^{-1}\rho(\bm{m})+u_0^{(1)}(\bm{m}).
\end{align*}
Then one can find that
\begin{align*}
  &\partial_{t_{k}^{(1)}}u_0^{(1)}(\bm{m})=u_1^{(1)}(\bm{m}+e_1)-u_1^{(1)}(\bm{m}),\\
  &\partial_{t_{k}^{(1)}}u_{-1}^{(2)}(\bm{m})=u_0^{(1)}(\bm{m})u_{-1}^{(2)}(\bm{m})-u_{-1}^{(2)}u_0^{(1)}(\bm{m}-\bm{e}_2),\\
  &\partial_{t_{k}^{(1)}}u_{-1}^{(3)}(\bm{m})=u_0^{(1)}(\bm{m})u_{-1}^{(3)}(\bm{m})-u_{-1}^{(3)}u_0^{(1)}(\bm{m}-\bm{e}_1) .
\end{align*}

\section{Lax operator of $[n_1,n_2,n_3]$--KdV hierarchy}
In this section, we will construct Lax operator of $[n_1,n_2,n_3]$--KdV hierarchy from the corresponding bilinear equation.
\begin{theorem}
For $[n_1,n_2,n_3]$--KdV hierarchy, if denote operators $\mathcal{L},\widetilde{\mathcal{L}}\in\E$ as follows,
\begin{align*}
&\mathcal{L}(\Lambda_1,\Lambda_2)=B_{n_1}^{(1)}(\Lambda_1)+B_{n_2}^{(2)}(\Lambda_2)
+B_{n_3}^{(3)}(\Lambda_1),\\
&\widetilde{\mathcal{L}}(\Lambda_1,\Lambda_2)=B_{n_1}^{(1)}(\Lambda_1)+B_{n_2}^{(2)}(\Lambda_2)+B_{n_3}^{(3)}(\Lambda_2^{-1}).
\end{align*}
then we can find
\begin{align*}
&\mathcal{L}(\Psi_1)=z^{n_1}\Psi_1,\quad \mathcal{L}(\Psi_2)=z^{-n_2}\Psi_2,\quad \mathcal{L}(\Psi_3)=z^{-n_3}\Psi_3,\\
&\widetilde{\mathcal{L}}(\Psi_1)=z^{n_1}\Psi_1,\quad \widetilde{\mathcal{L}}(\Psi_2)=z^{-n_2}\Psi_2,\quad \widetilde{\mathcal{L}}(\Psi_3)=z^{-n_3}\Psi_3.
\end{align*}
\end{theorem}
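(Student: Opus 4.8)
The plan is to collapse both $\mathcal{L}$ and $\widetilde{\mathcal{L}}$ to one combined flow and then read the three eigenvalues off from the $n$--reduction constraint. The first step uses the Corollary of Section~4 establishing the linear equations $\partial_{t_k^{(i)}}\Psi_j=B_k^{(i)}(\Psi_j)$ for every $i,j$, together with $B_{n_3}^{(3)}(\Lambda_1)(\Psi_j)=B_{n_3}^{(3)}(\Lambda_2^{-1})(\Psi_j)=\partial_{t_{n_3}^{(3)}}\Psi_j$. Setting $D:=\partial_{t_{n_1}^{(1)}}+\partial_{t_{n_2}^{(2)}}+\partial_{t_{n_3}^{(3)}}$, this yields at once
\[
\mathcal{L}(\Psi_j)=\widetilde{\mathcal{L}}(\Psi_j)=D\Psi_j,\qquad j=1,2,3 ,
\]
so the two operators act identically and the whole theorem reduces to the three spectral identities $D\Psi_1=z^{n_1}\Psi_1$, $D\Psi_2=z^{-n_2}\Psi_2$, $D\Psi_3=z^{-n_3}\Psi_3$.

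Second, I would compute $D\Psi_j$ from the explicit wave functions. In $\Psi_1$ only $e^{\xi(t^{(1)},z)}$ carries $t^{(1)}$, so $\partial_{t_{n_1}^{(1)}}$ contributes the factor $z^{n_1}$ while $\partial_{t_{n_2}^{(2)}}$ and $\partial_{t_{n_3}^{(3)}}$ act only on the tau quotient; the analogous splitting for $\Psi_2,\Psi_3$ produces $z^{-n_2},z^{-n_3}$ from $\partial_{t_{n_2}^{(2)}},\partial_{t_{n_3}^{(3)}}$ respectively. Hence the residuals
\[
\chi_1=D\Psi_1-z^{n_1}\Psi_1,\quad \chi_2=D\Psi_2-z^{-n_2}\Psi_2,\quad \chi_3=D\Psi_3-z^{-n_3}\Psi_3
\]
each equal the exponential prefactor of the corresponding $\Psi_j$ times $D$ applied to a ratio of tau functions. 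Since $D$ commutes with the shifts $[z^{-1}]_1,[z]_2,[z]_3$ (these are mere translations of the $t^{(i)}$), all three residuals vanish exactly when $D\log\widetilde{\tau}_{\bm m}$ is independent of the times and of $\bm m$; that is, the theorem is equivalent to the reduction condition $D\log\widetilde{\tau}_{\bm m}=\mathrm{const}$.

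Third comes the main obstacle: establishing this reduction condition, which must use the $l=1$ instance of \eqref{3KPwavebilinear}, since for an unreduced $3$--KP tau function the cross--derivatives do not cancel. I would apply $D$ in the unprimed times to the $l=0$ equation and subtract the $l=1$ equation, using $z^{n_1}\Psi_1=L_1^{n_1}(\Psi_1)$, $z^{-n_2}\Psi_2=L_2^{n_2}(\Psi_2)$, $z^{-n_3}\Psi_3=L_3^{n_3}(\Psi_3)$. The result is the homogeneous bilinear identity
\[
\oint_{C_R}\frac{dz}{2\pi iz}\,\chi_1\,\widetilde{\Psi}_1(\bm m',t',z)=\oint_{C_r}\frac{dz}{2\pi iz}\left(\chi_2\,\widetilde{\Psi}_2(\bm m',t',z)+\chi_3\,\widetilde{\Psi}_3(\bm m',t',z)\right),
\]
valid for all $t',\bm m'$, which is precisely the $l=0$ identity with each $\Psi_j$ replaced by the correction $\chi_j$.

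Finally I would close by the uniqueness inherent in the $l=0$ theory: put $t'=t$, convert this identity into an operator relation through Lemmas \ref{ABlemma} and \ref{ABlemma2}, and match the coefficients of $\Lambda_2^{0}$ and of $\Lambda_1$ exactly as in the proof of Proposition \ref{relations between S_i}, invoking Lemma \ref{Lemma:operatoractonpsi} to conclude vanishing. The leading term of $\chi_1$ has one $z$--power fewer than that of $\Psi_1$, so the matching admits no nonzero solution and forces $\chi_1=\chi_2=\chi_3=0$, equivalently pins $D\log\widetilde{\tau}_{\bm m}$ to a constant. The delicate point is bookkeeping the differing orders across the three sectors (the correction in sector~$1$ is controlled at $z=\infty$, those in sectors~$2,3$ at $z=0$), so that the residue pairing is organized to let Lemma \ref{Lemma:operatoractonpsi} apply. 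With $\chi_j=0$ in hand, $D\Psi_j$ equals the claimed eigenvalues $z^{n_1},z^{-n_2},z^{-n_3}$, and therefore $\mathcal{L}(\Psi_j)=\widetilde{\mathcal{L}}(\Psi_j)$ acts as stated.
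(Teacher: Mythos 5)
Your argument is correct, but it is organized quite differently from the paper's proof. The paper works at the operator level throughout: it sets $l=1$, $t'=t$ in Proposition \ref{bilinear-wave-operator} to get the operator identity \eqref{n1n2n3kpbilinear}, compares the coefficients of $\Lambda_2^0$ and of $\Lambda_1$, and then splits by $\Lambda_1$-- resp.\ $\Lambda_2$--orders, using Proposition \ref{prop:lambda12relation} and Lemma \ref{lemma:Alade} to recognize each piece as $B_{n_2}^{(2)}$ or $B_{n_3}^{(3)}$ acting on the appropriate wave function; $\widetilde{\mathcal{L}}$ is then said to be analogous. You instead use the Section~4 corollary to rewrite $\mathcal{L}(\Psi_j)=\widetilde{\mathcal{L}}(\Psi_j)=D\Psi_j$ with $D=\partial_{t_{n_1}^{(1)}}+\partial_{t_{n_2}^{(2)}}+\partial_{t_{n_3}^{(3)}}$, form the residuals $\chi_j=D\Psi_j-z^{\pm n_j}\Psi_j$, obtain the homogeneous bilinear identity by differentiating the $l=0$ equation and subtracting the $l=1$ equation, and then kill the $\chi_j$ by the same order-counting used in Proposition \ref{relations between S_i}. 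Your final step does close: writing $\chi_1=Q_1e^{\xi(t^{(1)},\Lambda_1)}(z^{m_1})$ with $Q_1$ of strictly negative $\Lambda_1$--order (this is exactly your ``one $z$--power fewer'') and $\chi_2,\chi_3$ with wave operators of non-negative order, the $\Lambda_2^0$--coefficient equates an operator of $\Lambda_1$--order $\leq 0$ with one of order $\geq 1$, forcing both to vanish and hence $Q_1=Q_3=0$, after which the $\Lambda_1^j$--coefficients and invertibility of $\widetilde{S}_2^*$ give $Q_2=0$; note that what is actually needed there is invertibility of the $\widetilde{S}_i^*$ rather than Lemma \ref{Lemma:operatoractonpsi}, and the identity $z^{n_1}\Psi_1=L_1^{n_1}(\Psi_1)$ plays no role in the subtraction. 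What the two routes buy: yours treats $\mathcal{L}$ and $\widetilde{\mathcal{L}}$ simultaneously, exhibits the reduction condition $D\log\widetilde{\tau}_{\bm m}=\mathrm{const}$ as a byproduct, and avoids the explicit $(\Delta_2^*)^{-1}$, $\Delta_1^{-1}$ manipulations; the paper's computation produces intermediate operator identities such as \eqref{L1n1L3n3} that are reused afterwards (e.g.\ for $\pi_\alpha^{\pm}(\mathcal{L})=L_i^{n_i}$), and it does not route through the Section~4 corollary, which itself already rests on Proposition \ref{prop:evolutionS} and Proposition \ref{prop:lambda12relation}, so your proof is not logically lighter, only differently packaged.
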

\begin{proof}
Here we only prove the case of $\mathcal{L}$, since $\widetilde{\mathcal{L}}$ can be similarly done. Firstly set $l=1$ and $t'=t$ in Proposition \ref{bilinear-wave-operator}, then we can obtain
\begin{align}
&\sum_{j\in\mathbb{Z}}S_1(\bm{m},\Lambda_1)\Lambda_1^{n_1}
S_1^{-1}(\bm{m}+j\bm{e}_2,\Lambda_1)\Lambda_1\Lambda_2^{j}
-\sum_{j\in\mathbb{Z}}S_2(\bm{m},\Lambda_2)\Lambda_2^{-n_2}
{S}^{-1}_2(\bm{m}+(j-1)\bm{e}_1,\Lambda_2)(\Delta_2^*)^{-1}\Lambda_1^{j}\nonumber\\
=&\sum_{j\in\mathbb{Z}}S_3(\bm{m},\Lambda_1)\Lambda_1^{-n_3}
{S}^{-1}_3(\bm{m}+j\bm{e},\Lambda_1)\Lambda_1^{j+1}\Lambda_2^{j}.
\label{n1n2n3kpbilinear}
\end{align}
Notice that coefficients of $\Lambda_2^0$ in \eqref{n1n2n3kpbilinear} give rise to
\begin{align}
L_1^{n_1}(\Lambda_1)
-\left(L_2^{n_2}(\Lambda_2)\cdot S_2(\Lambda_2)\cdot\sum_{j\in\mathbb{Z}}\Lambda_1^{j}
\cdot{S}^{-1}_2(\Lambda_2)\cdot(\Delta_2^*)^{-1}\right)_{2,[0]}=L_3^{n_3}(\Lambda_1).\label{L1n1L3n3}
\end{align}

The negative $\Lambda_1$--orders of \eqref{L1n1L3n3} imply that
\begin{align}
\left(L_1^{n_1}(\Lambda_1)\right)_{1,<0}
-\left(L_2^{n_2}(\Lambda_2)\cdot S_2(\Lambda_2)\cdot\Delta_1^{-1}
\cdot {S}^{-1}_2(\Lambda_2)\cdot(\Delta_2^*)^{-1}\right)_{2,[0]}
=\left(L_3^{n_3}(\Lambda_1)\right)_{1,<0}.\label{L1n1negative0}
\end{align}
If assume $L_2^{n_2}=\sum_{i\leq n_2} a_i \Lambda_2^{-i}$, then by Proposition \ref{prop:lambda12relation} and Lemma \ref{lemma:Alade}
\begin{align*}
&\left(L_2^{n_2}\cdot S_2\cdot\Delta_1^{-1}
\cdot {S}^{-1}_2\cdot(\Delta_2^*)^{-1}\right)_{2,[0]}=\sum_{i=1}^{n_2}a_i \left(\Lambda_2^{-i}(S_1)\cdot S_1^{-1}-1\right)\\
&=\left((L_2^{n_2})_{2,<0}
-(L_2^{n_2})_{2,<0}(1)\right)(S_1)\cdot S_1^{-1}=B_{n_2}^{(2)}(S_1)\cdot S_1^{-1}.
\end{align*}
Therefore we can find \eqref{L1n1negative0} becomes into
\begin{align*}
\left(L_1^{n_1}(\Lambda_1)\right)_{1,<0}
=B_{n_2}^{(2)}(\Lambda_1)(S_1(\Lambda_1))\cdot S_1^{-1}(\Lambda_1)
+B_{n_3}^{(3)}(\Lambda_1),
\end{align*}
which implies $\left(L_1^{n_1}(\Lambda_1)\right)_{1,<0}(\Psi_1)=B_{n_2}^{(2)}(\Lambda_1)(\Psi_1)
+B_{n_3}^{(3)}(\Lambda_1)(\Psi_1)$. Therefore $\mathcal{L}(\Psi_1)=z^{n_1}\Psi_1$.

By considering non--negative $\Lambda_1$--orders of \eqref{L1n1L3n3}, we have
\begin{align}
B_{n_1}^{(1)}(\Lambda_1)
-\left(L_2^{n_2}(\Lambda_2)\cdot S_2(\Lambda_2)\cdot(1+(\Delta_1^*)^{-1})
\cdot{S}^{-1}_2(\Lambda_2)\cdot(\Delta_2^*)^{-1}\right)_{2,[0]}=L_3^{n_3}(\Lambda_1)_{1,\geq 0}.\label{L3n3positive}
\end{align}
Recall that $L_2^{n_2}=\sum_{i\leq n_2} a_i \Lambda_2^{-i}$, thus we can obtain
\begin{align*}
&\left(L_2^{n_2}(\Lambda_2)\cdot S_2(\Lambda_2)\cdot(1+(\Delta_1^*)^{-1})
\cdot{S}^{-1}_2(\Lambda_2)\cdot(\Delta_2^*)^{-1}\right)_{2,[0]}(\Psi_3)\\
=&\sum_{i=1}^{n_2} a_i\Psi_3-\sum_{i=1}^{n_2}\left(a_i\Lambda_2^{-i}\cdot S_3(\Lambda_1)\cdot\Lambda_1^{i}\Lambda_2^{i}\right)\left(z^{m_1-m_2}\right)\cdot e^{\xi(t_3,z^{-1})}\\
=&\sum_{i=1}^{n_2} a_i\Psi_3-\sum_{i=1}^{n_2}a_i\Lambda_2^{-i}(\Psi_3)
=B_{n_2}^{(2)}(\Lambda_2)(\Psi_3).
\end{align*}
Therefore by \eqref{L3n3positive}, we can know $\mathcal{L}(\Psi_3)=z^{-n_3}\Psi_3$.

As for $\mathcal{L}(\Psi_2)$, we can consider coefficients of $\Lambda_1$ in
\eqref{n1n2n3kpbilinear}, that is,
\begin{align*}
&L_2^{n_2}(\Lambda_2)\cdot (\Delta_2^*)^{-1}=\left(S_1(\Lambda_1)\Lambda_1^{n_1}
\sum_{j\in\mathbb{Z}} \Lambda_2^{j}S_1^{-1}(\Lambda_1)\right)_{1,[0]}
-\left(S_3(\Lambda_1)\Lambda_1^{-n_3}\sum_{j\in\mathbb{Z}}\Lambda_1^{j}\Lambda_2^{j}
{S}^{-1}_3(\Lambda_1)\right)_{1,[0]},
\end{align*}
which implies by Lemma \ref{lemma:Alade} that
\begin{align*}
\left(L_2^{n_2}(\Lambda_2)\right)_{\Delta_2^*,\leq0}=&\left(L_2^{n_2}(\Lambda_2)\cdot (\Delta_2^*)^{-1}\right)_{2,>0}\Delta_2^*\\
=&\left(S_1(\Lambda_1)\Lambda_1^{n_1}
(\Delta_2^*)^{-1} S_1^{-1}(\Lambda_1)\right)_{1,[0]}\Delta_2^*
-\left(S_3(\Lambda_1)\Lambda_1^{-n_3}(\Delta_{12}^*)^{-1}
{S}^{-1}_3(\Lambda_1)\right)_{1,[0]}\Delta_2^*.
\end{align*}
Then based upon this, we can finally get $\mathcal{L}(\Psi_2)=z^{-n_2}\Psi_2$ by Proposition \ref{prop:lambda12relation}.
\end{proof}

Note that if we apply $\mathcal{L}$ or $\widetilde{\mathcal{L}}$ to both sides of 3--KP bilinear equation \eqref{3KPwavebilinear} for $l=0$, then we can recover \eqref{3KPwavebilinear} for $l=1$. And successive application will give \eqref{3KPwavebilinear} for general $l\geq 1$. Therefore we can further obtain the corollary below.
\begin{corollary}\label{corollary:n1n2n3KP}
Bilinear equation \eqref{3KPwavebilinear} of  $[n_1,n_2,n_3]$--KdV hierarchy is equivalent to the following two points:
\begin{itemize}
  \item 3--KP bilinear equation
  \begin{align*}
&\oint_{C_R}\frac{dz}{2\pi iz}\Psi_{1}(\bm{m},t,z)\widetilde{\Psi}_{1}(\bm{m}',t',z)\nonumber\\
=&\oint_{C_r}\frac{dz}{2\pi iz}\left(\Psi_{2}(\bm{m},t,z)\widetilde{\Psi}_{2}(\bm{m}',t',z)
+\Psi_{3}(\bm{m},t,z)\widetilde{\Psi}_{3}(\bm{m}',t',z)\right).
\end{align*}
  \item there exists $\mathfrak{L}\in\E$ such that $\mathfrak{L}(\Psi_1)=z^{n_1}\Psi_1$, $\mathfrak{L}(\Psi_2)=z^{-n_2}\Psi_2$,
      $\mathfrak{L}(\Psi_3)=z^{-n_3}\Psi_3$.
\end{itemize}
\end{corollary}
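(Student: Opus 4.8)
The plan is to prove the asserted equivalence in both directions; the forward direction is essentially a restatement of the theorem established just above, while the converse rests on the observation recorded immediately before the corollary. For the forward implication, assume \eqref{3KPwavebilinear} holds for all $l\geq 0$. The instance $l=0$ is precisely the first point. For the second point, the instance $l=1$ is exactly the hypothesis feeding the preceding theorem, which exhibits the concrete operator $\mathcal{L}=B_{n_1}^{(1)}(\Lambda_1)+B_{n_2}^{(2)}(\Lambda_2)+B_{n_3}^{(3)}(\Lambda_1)\in\E$ (or equally $\widetilde{\mathcal{L}}$) obeying $\mathcal{L}(\Psi_1)=z^{n_1}\Psi_1$, $\mathcal{L}(\Psi_2)=z^{-n_2}\Psi_2$ and $\mathcal{L}(\Psi_3)=z^{-n_3}\Psi_3$. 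Taking $\mathfrak{L}=\mathcal{L}$ supplies the required operator, so the forward direction needs no additional work.

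For the converse, suppose the $l=0$ equation holds and that some $\mathfrak{L}\in\E$ satisfies $\mathfrak{L}(\Psi_k)=\lambda_k\Psi_k$ with $\lambda_1=z^{n_1}$, $\lambda_2=z^{-n_2}$, $\lambda_3=z^{-n_3}$. I would apply $\mathfrak{L}$, viewed as an operator in the unprimed variables $(\bm{m},t)$, to both sides of the $l=0$ identity. Writing $\mathfrak{L}=\sum_{i,j}c_{ij}(\bm{m},t)\Lambda_1^i\Lambda_2^j$ as a finite Laurent sum (which is legitimate since $\E$ consists of such expressions), each term shifts $\bm{m}$ only inside $\Psi_k(\bm{m},t,z)$ and multiplies by the $z$-free coefficient $c_{ij}$, leaving the primed factor $\widetilde{\Psi}_k(\bm{m}',t',z)$ inert; because $\mathfrak{L}$ does not involve $z$, it passes inside the contour integrals. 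Thus the left side becomes $\oint_{C_R}\frac{dz}{2\pi iz}\,\mathfrak{L}(\Psi_1)\widetilde{\Psi}_1=\oint_{C_R}\frac{dz}{2\pi iz}\,z^{n_1}\Psi_1\widetilde{\Psi}_1$ and the right side becomes $\oint_{C_r}\frac{dz}{2\pi iz}\big(z^{-n_2}\Psi_2\widetilde{\Psi}_2+z^{-n_3}\Psi_3\widetilde{\Psi}_3\big)$, which is precisely \eqref{3KPwavebilinear} at $l=1$.

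Iterating this step, or equivalently applying $\mathfrak{L}^l$ once and invoking $\mathfrak{L}^l(\Psi_k)=\lambda_k^l\Psi_k$, reproduces \eqref{3KPwavebilinear} for every $l\geq 1$; combined with the assumed $l=0$ identity this yields the entire $[n_1,n_2,n_3]$--KdV hierarchy, completing the converse. The only step demanding care is this operator manipulation: one must check that applying $\mathfrak{L}$ to an identity valid for all $(\bm{m},t)$ again produces a valid identity, that the $z$-independent shifts and coefficients commute with the residue extraction along $C_R$ and $C_r$, and that $\mathfrak{L}$ touches only the unprimed wave functions so that the eigenvalue relations may be used under the integral. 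I expect these to be bookkeeping points rather than genuine analytic obstacles, so that once they are settled the equivalence follows at once.
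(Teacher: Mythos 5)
Your proposal is correct and follows essentially the same route as the paper: the forward direction invokes the preceding theorem to supply $\mathfrak{L}=\mathcal{L}$ (or $\widetilde{\mathcal{L}}$), and the converse applies $\mathfrak{L}$ (acting only on the unprimed variables, commuting with the $z$--integration) to the $l=0$ bilinear identity and iterates to recover \eqref{3KPwavebilinear} for all $l\geq 1$, which is exactly the observation the paper records immediately before the corollary.
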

Notice that $\mathcal{L}$ or $\widetilde{\mathcal{L}}$ satisfies the second point in Corollary \ref{corollary:n1n2n3KP}, which is called the Lax operator of $[n_1,n_2,n_3]$--KdV hierarchy.
If we use $(L_1(\Lambda_1),L_2(\Lambda_2),L_3(\Lambda_1))$ to describe 3--KP hierarchy, we use $\mathcal{L}$ as $[n_1,n_2,n_3]$--KdV Lax operator,
while when 3--KP is expressed by $(L_1(\Lambda_1),L_2(\Lambda_2),L_3(\Lambda_2^{-1}))$,  the corresponding Lax operator is $\widetilde{\mathcal{L}}$.
\begin{corollary}
$[n_1,n_2,n_3]$--KdV Lax operator $\mathcal{L}$ and $\widetilde{\mathcal{L}}$ satisfy
\begin{align*}
&\pi_1^+(\mathcal{L})=\pi_1^+(\widetilde{\mathcal{L}})=L_1^{n_1}(\Lambda_1),\\
&\pi_1^-(\mathcal{L})=\pi_1^-(\widetilde{\mathcal{L}})=L_3^{n_3}(\Lambda_1),\\
&\pi_2^+(\mathcal{L})=\pi_2^+(\widetilde{\mathcal{L}})=L_3^{n_3}(\Lambda_2^{-1}),\\
&\pi_2^-(\mathcal{L})=\pi_2^-(\widetilde{\mathcal{L}})=L_2^{n_2}(\Lambda_2).
\end{align*}
\end{corollary}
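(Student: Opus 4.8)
The plan is to reduce everything to two ingredients already in hand: the action of $\mathcal{L}$ and $\widetilde{\mathcal{L}}$ on the wave functions $\Psi_i$ from the preceding theorem, and the fact that $\ker\pi_\al^{\pm}$ consists precisely of the operators annihilating one distinguished wave function. The four ``second equalities'' are then read off without any computation, and the four equalities $\pi_\al^{\pm}(\mathcal{L})=\pi_\al^{\pm}(\widetilde{\mathcal{L}})$ come for free.

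First I would dispose of the equalities $\pi_\al^{\pm}(\mathcal{L})=\pi_\al^{\pm}(\widetilde{\mathcal{L}})$. By the preceding theorem both operators act identically on every $\Psi_i$, so the difference $\mathcal{L}-\widetilde{\mathcal{L}}=B_{n_3}^{(3)}(\Lambda_1)-B_{n_3}^{(3)}(\Lambda_2^{-1})$ annihilates $\Psi_1,\Psi_2,\Psi_3$. By Lemma \ref{lemma:ABH} this difference lies in $\E H$, and since $\E\subseteq\E_{(\al)}^{\pm}$ we get $\E H\subseteq\E_{(\al)}^{\pm}H=\ker\pi_\al^{\pm}$ for each of the four projections. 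Hence $\pi_\al^{\pm}(\mathcal{L})=\pi_\al^{\pm}(\widetilde{\mathcal{L}})$, and it remains only to compute the four projections of $\mathcal{L}$.

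The engine for the computation is the following characterization, which I would extract from Proposition \ref{prop: sumdecom}, Proposition \ref{prop:Hpsi} and Lemma \ref{Lemma:operatoractonpsi}: for $A\in\E_{(\al)}^{\pm}$ one has $\pi_\al^{\pm}(A)=0$ if and only if $A(\Psi_{j})=0$, where $j=j(\E_{(\al)}^{0,\pm})$ is the index assigned in Lemma \ref{Lemma:operatoractonpsi}. Indeed, writing $A=\widetilde{A}+BH$ with $\widetilde{A}=\pi_\al^{\pm}(A)\in\E_{(\al)}^{0,\pm}$ and $B\in\E_{(\al)}^{\pm}$, we have $A(\Psi_j)=\widetilde{A}(\Psi_j)$ because $H(\Psi_j)=0$, and by Lemma \ref{Lemma:operatoractonpsi} this vanishes exactly when $\widetilde{A}=0$. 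With this, each second equality is immediate: for $\pi_1^+$ the distinguished index is $j=1$, and since $L_1(\Lambda_1)(\Psi_1)=z\Psi_1$ while $\mathcal{L}(\Psi_1)=z^{n_1}\Psi_1$, the operator $\mathcal{L}-L_1^{n_1}(\Lambda_1)$ annihilates $\Psi_1$; as $L_1^{n_1}(\Lambda_1)$ lies in the ring $\E_{(1)}^{0,+}$ it is fixed by $\pi_1^+$, so $\pi_1^+(\mathcal{L})=\pi_1^+\big(\mathcal{L}-L_1^{n_1}(\Lambda_1)\big)+L_1^{n_1}(\Lambda_1)=L_1^{n_1}(\Lambda_1)$. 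The remaining three cases are identical: for $\pi_1^-$ take $j=3$ with $P=L_3^{n_3}(\Lambda_1)$, for $\pi_2^+$ take $j=3$ with $P=L_3^{n_3}(\Lambda_2^{-1})$, and for $\pi_2^-$ take $j=2$ with $P=L_2^{n_2}(\Lambda_2)$, using the eigenrelations $L_3(\Lambda_1)(\Psi_3)=z^{-1}\Psi_3$, $L_3(\Lambda_2^{-1})(\Psi_3)=z^{-1}\Psi_3$ and $L_2(\Lambda_2)(\Psi_2)=z^{-1}\Psi_2$ together with $\mathcal{L}(\Psi_3)=z^{-n_3}\Psi_3$, $\mathcal{L}(\Psi_2)=z^{-n_2}\Psi_2$.

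The only point requiring genuine care, and hence the main obstacle, is verifying that each candidate $P$ actually lies in the correct one-variable completion $\E_{(\al)}^{0,\pm}$, so that $\pi_\al^{\pm}$ fixes it. Since each $\E_{(\al)}^{0,\pm}$ is a ring, it suffices to check this for the Lax operators themselves, which amounts to inspecting their top and bottom orders: $L_1=\Lambda_1+u_0^{(1)}+\cdots$ is bounded above in $\Lambda_1$ (so $L_1^{n_1}\in\E_{(1)}^{0,+}$), $L_3(\Lambda_1)=u_{-1}^{(3)}\Lambda_1^{-1}+\cdots$ is bounded below in $\Lambda_1$ (so $L_3^{n_3}(\Lambda_1)\in\E_{(1)}^{0,-}$), $L_3(\Lambda_2^{-1})=\widetilde{u}_1^{(3)}\Lambda_2+\cdots$ is bounded above in $\Lambda_2$ (so $L_3^{n_3}(\Lambda_2^{-1})\in\E_{(2)}^{0,+}$), and $L_2=u_{-1}^{(2)}\Lambda_2^{-1}+\cdots$ is bounded below in $\Lambda_2$ (so $L_2^{n_2}(\Lambda_2)\in\E_{(2)}^{0,-}$). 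Once this bookkeeping is recorded, the corollary follows with no further calculation.
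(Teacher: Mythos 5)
Your proposal is correct, and it supplies exactly the argument the paper leaves implicit (the corollary is stated there without proof): the eigenrelations $\mathcal{L}(\Psi_i)=\widetilde{\mathcal{L}}(\Psi_i)=z^{\pm n_i}\Psi_i$ from the preceding theorem, combined with the decomposition of Proposition \ref{prop: sumdecom}, $H(\Psi_i)=0$ from Proposition \ref{prop:Hpsi}, and the faithfulness statement of Lemma \ref{Lemma:operatoractonpsi}, which together give precisely your characterization $\pi_\al^{\pm}(A)=0\Leftrightarrow A(\Psi_{j})=0$. Your bookkeeping that $L_1^{n_1}(\La_1)\in\E_{(1)}^{0,+}$, $L_3^{n_3}(\La_1)\in\E_{(1)}^{0,-}$, $L_3^{n_3}(\La_2^{-1})\in\E_{(2)}^{0,+}$, $L_2^{n_2}(\La_2)\in\E_{(2)}^{0,-}$ is also the right (and only nontrivial) verification, so the proof is complete and consistent with the paper's machinery.
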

Finally we give the explicit forms of $\mathcal{L}$ and $\widetilde{\mathcal{L}}$, that is,
\begin{align*}
\mathcal{L}=\Lambda_1^{n_1}+\sum_{k=-n_3}^{n_1-1}u_k\Lambda_1^k+\sum_{k=1}^{n_2}v_k(\La_2^{-k}-1),
\quad\widetilde{\mathcal{L}}=\Lambda_1^{n_1}+\sum_{k=0}^{n_1-1}u_k\Lambda_1^k
+\sum_{k=1}^{n_2}v_k(\La_2^{-k}-1)+\sum_{k=1}^{n_3}\widetilde{v}_k(\La_2^{k}-1).
\end{align*}
\begin{itemize}
  \item in $\mathcal{E}_{(1)}^{\pm}$
  \begin{align*}
\pi_1^{\pm}(\mathcal{L})=&\Lambda_1^{n_1}+\sum_{k=-n_3}^{n_1-1}u_k\Lambda_1^k
  +\sum_{k=1}^{n_2}v_k\left(\prod_{j=1}^k\Big(1-\iota_{\Lambda_1^{\mp1}}
 \big(\La_1-\rho(\bm{m}-j\bm{e}_2)\big)^{-1}\cdot\rho(\bm{m}-j\bm{e}_2)\Big)-1\right),\\
\pi_1^{\pm}(\widetilde{\mathcal{L}})=&\Lambda_1^{n_1}+\sum_{k=0}^{n_1-1}u_k\Lambda_1^k
+\sum_{k=1}^{n_2}v_k\left(\prod_{j=1}^k\Big(1-\iota_{\Lambda_1^{\mp1}}
 \big(\La_1-\rho(\bm{m}-j\bm{e}_2)\big)^{-1}\cdot\rho(\bm{m}-j\bm{e}_2)\Big)-1\right)\\
&+\sum_{k=1}^{n_3}\widetilde{v}_k
\left(\prod_{j=1}^k\Big(1-\La_1^{-1}\rho\big(\bm{m}+(j-1)\bm{e}_2\big)\Big)-1\right).
  \end{align*}
\item in $\mathcal{E}_{(2)}^{\pm}$
  \begin{align*}
\pi_2^{\pm}(\mathcal{L})=&(-1)^{n_1}\prod_{j=1}^{n_1}\Big(\iota_{\Lambda_2^{\mp1}}(\La_2-1)^{-1}\cdot\rho(\bm{m}+(j-1)\bm{e}_1)\Big)\\
&+\sum_{k=1}^{n_1-1}(-1)^k u_k\prod_{j=1}^k\Big(\iota_{\Lambda_2^{\mp1}}(\La_2-1)^{-1}\cdot\rho(\bm{m}+(j-1)\bm{e}_1)\Big)\\
&+\sum_{k=1}^{n_3}(-1)^k u_{-k}\prod_{j=1}^k\Big(\rho(\bm{m}-j\bm{e}_1)^{-1}\cdot\Delta_2\Big)+u_0+\sum_{k=1}^{n_2}v_k(\La_2^{-k}-1),\\
\pi_2^{\pm}(\widetilde{\mathcal{L}})=&(-1)^{n_1}\prod_{j=1}^{n_1}\Big(\iota_{\Lambda_2^{\mp1}}(\La_2-1)^{-1}\cdot\rho(\bm{m}+(j-1)\bm{e}_1)\Big)\\
&+\sum_{k=1}^{n_1-1}(-1)^k u_k\prod_{j=1}^k\Big(\iota_{\Lambda_2^{\mp1}}(\La_2-1)^{-1}\cdot\rho(\bm{m}+(j-1)\bm{e}_1)\Big)\\
&+u_0+\sum_{k=1}^{n_2}v_k(\La_2^{-k}-1)+\sum_{k=1}^{n_3}\widetilde{v}_k(\La_2^{k}-1).
  \end{align*}
\end{itemize}
\section{Conclusions and Discussions}
Here we have succeeded in finding Lax formulations of 3--KP hierarchy and its reduction $[n_1,n_2,n_3]$--KdV hierarchy by Shiota methods. For 3--KP,
\begin{itemize}
  \item expressed by Lax triple $(L_1(\Lambda_1),L_2(\Lambda_2),L_3(\Lambda_1))$ or $(L_1(\Lambda_1),L_2(\Lambda_2),L_3(\Lambda_2^{-1}))$.
  \item $(L_1(\Lambda_1),L_3(\Lambda_1))$ satisfies 2--Toda hierarchy.
  \item $(L_3(\Lambda_2^{-1}),L_2(\Lambda_2))$ satisfies 2--modified Toda hierarchy\cite{Guan2024}.
  \item use operator $H=\Lambda_1\Delta_2+\rho$ to relate $\Lambda_1$ and $\Lambda_2$.
\end{itemize}
For $[n_1,n_2,n_3]$--KdV, the corresponding Lax operator is given by
\begin{align*}
&\mathcal{L}(\Lambda_1,\Lambda_2)=B_{n_1}^{(1)}(\Lambda_1)+B_{n_2}^{(2)}(\Lambda_2)
+B_{n_3}^{(3)}(\Lambda_1),\\
&\widetilde{\mathcal{L}}(\Lambda_1,\Lambda_2)=B_{n_1}^{(1)}(\Lambda_1)+B_{n_2}^{(2)}(\Lambda_2)+B_{n_3}^{(3)}(\Lambda_2^{-1}).
\end{align*}
Since 3--KP and its reduction comes from the infinite dimensional Lie algebras, which means there are infinite symmetries for 3--KP, the flows $t_k^{(i)}$ with $i=1,2,3$ can commute with each other. One can use similar methods in \cite{ChengJP2021} to prove commutativity of times flows.
We believe results here for 3--component KP theory can be generalized to the general multi--component case.\\

\noindent{\bf Acknowledgements}:

This work is supported by National Natural Science Foundation of China (Grant Nos. 12171472 and 12261072)
and ``Qinglan Project'' of Jiangsu Universities.\\

\noindent{\bf Conflict of Interest}:

 The authors have no conflicts to disclose.\\

\noindent{\bf Data availability}:

Date sharing is not applicable to this article as no new data were created or analyzed in this study.

\end{document}